\newtheorem{theorem}{Theorem}[section]
\newtheorem{proposition}[theorem]{Proposition}
\theoremstyle{definition}
\newtheorem{definition}[theorem]{Definition}
\newtheorem*{proof sketch}{Proof sketch}
\newtheorem*{remark}{Remark}
\title{Conformal prediction after data-dependent model selection}
\author{Ruiting Liang\thanks{Committee on Computational and Applied Mathematics, University of Chicago} \and Wanrong Zhu\thanks{Department of Statistics, University of California, Irvine} \and Rina Foygel Barber\thanks{Department of Statistics, University of Chicago}}
\date{\today}
\begin{document}

\maketitle

\begin{abstract}
Given a family of pretrained models and a hold-out set, how can we construct a valid conformal prediction set while selecting a model that minimizes the width of the set? If we use the same hold-out data set both to select a model (the model that yields the smallest conformal prediction sets) and then to construct a conformal prediction set based on that selected model, we suffer a loss of coverage due to selection bias. Alternatively, we could further split the data to perform selection and calibration separately, but this comes at a steep cost if the size of the dataset is limited. In this paper, we address the challenge of constructing a valid prediction set after data-dependent model selection---commonly, selecting the model that minimizes the width of the resulting prediction sets. Our novel methods can be implemented efficiently and admit finite-sample validity guarantees without invoking additional sample-splitting. We show that our methods yield prediction sets with asymptotically optimal width under certain notions of regularity for the model class. The improvement in the width of the prediction sets constructed by our methods is further demonstrated through applications to synthetic datasets in various settings and a real data example.

\end{abstract}

\section{Introduction}\label{sec:intro}
Amidst the rapid expansion of statistical applications across diverse fields, rigorous uncertainty quantification is essential to ensure the reliability and accuracy of statistical outcomes. Among many approaches that have been proposed, conformal prediction, pioneered by \citet{vovk2005algorithmic}, has recently gained massive popularity due to its assumption-lean nature in providing finite-sample valid prediction sets on top of any predictive models. Specifically, the procedure can be paired with any regression algorithm, and provides finite-sample validity guarantee relying only on the assumption of data exchangeability.

However, despite the success of the conformal prediction framework in offering statistical validity, the optimality in the size of conformal prediction sets generally lacks a universal guarantee and often relies on a model-specific analysis with additional assumptions on the data distribution.
In particular, a substantial line of work studies width optimality in settings where, assuming prior information about the data distribution, it is possible to construct a prediction band that offers asymptotically optimal width subject to a conditional, rather than marginal, coverage requirement.
For example, \citet{sesia2020comparison} show that the conformalized quantile regression \citep{romano2019conformalized} achieves the optimal width asymptotically if the quantile regression function estimate is consistent and the data distribution has certain regularity properties. Similar results also hold for the split conformal method and the full conformal method when paired with the residual score \citep{lei2018distribution}. See also the work of \citet{gyorfi2019nearest,burnaev2014efficiency,sadinle2019least,izbicki2020flexible}, among others, for additional examples of settings where width optimality of conformal prediction has been established under specific conditions; in general, obtaining a small conformal prediction set is closely tied to choosing a conformity score that is well suited to the data distribution, as has been studied empirically in \citet{aleksandrova2021nonconformity,johansson2017model,sesia2020comparison}.

Given the necessity of choosing the right model to construct an effective prediction set, and the practical reality that the most suitable model cannot be predetermined without analyzing the data, an integrated framework that combines model selection with the construction of prediction sets would be highly beneficial. This serves as the motivation of this paper. In this work, we address the challenge of constructing a finite-sample valid conformal prediction set while selecting a model that minimizes the width of the prediction set, from a given family of pretrained models. Specifically, fix some target coverage level $1-\alpha \in (0,1)$, e.g., 90\%. Given a collection of pretrained models,  a set of independent and identically distributed hold-out data $ \mathcal{D}_{n}$, and a new test point $X_{n+1}$ from the same distribution, we would like to construct a conformal prediction set $\widehat{C}_{n}$ using a selected model (i.e., selected as the model that offers the smallest prediction set), such that 
\begin{itemize}
	\item the prediction set is valid, i.e.,  $\mathbb{P} \left\{Y_{n+1} \in \widehat{C}_{n}(X_{n+1}) \right\} \ge 1-\alpha$,
	\item  the width of the set $\widehat{C}_{n}$ is as small as possible due to selecting the best model.
\end{itemize}
Several existing works have considered problems similar to the question described above. In Section~\ref{sec:exist_mtd} below, we will closely examine the work of \citet{yang2024selection}, which is most relevant to our work. In their work, \citet{yang2024selection} propose either choosing the model with the shortest interval without correcting for selection to achieve small width (which suffers from selection bias unless the set of candidate models is extremely small), or using a sample splitting based approach (which removes the issue of selection bias, but suffers in statistical efficiency due to reduced sample size). In contrast, our work aims to correct for selection bias without any data splitting. For a more comprehensive discussion of related literature, we defer it to Section~\ref{sec:related_literature}.


\paragraph{Main contributions.}
We propose a novel method with two variants to construct finite-sample valid conformal prediction sets without additional sample-splitting after data-dependent model selection. Both versions are computationally efficient to implement, and offer validity with no assumptions aside from exchangeability of the data. In addition, we show that under certain regularity conditions for the model class, our methods output prediction sets that are asymptotically optimal in width. We will see on simulated and real data that these methods yield tighter prediction sets in finite-sample relative to existing approaches, by removing selection bias without paying the price of data splitting.

\paragraph{Organization.}
We first formulate our problem as well as surveying existing methods in Section~\ref{sec:backg&problem}. We then present our proposed methods as well as coverage guarantees in Section~\ref{sec:method}.  The asymptotic optimality in width of our methods is investigated in Section \ref{sec: efficiency_res}. In Section~\ref{sec:experiment}, we conduct empirical comparisons across different methods to demonstrate the effectiveness of our methods in practice. Finally, we review the related literature and conclude with a brief discussion in Section~\ref{sec:discuss}.

\section{Problem formulation}\label{sec:backg&problem}
In this section, we first review some background on conformal prediction, which serves as the backbone of our prediction sets construction. We will then specify the class of model selection procedures considered in this paper and formulate the problem of interest. Finally, we will briefly review the existing methods related to our problem.

\subsection{Background: conformal prediction}\label{sec:CP}
Consider the learning problem of predicting the response $y\in\mathcal{Y}$ based on an observed feature $x\in\mathcal{X}$. Conformal prediction offers a framework of constructing valid prediction sets without placing any assumption on the data distribution $P$ other than data exchangeability.

\paragraph{Split conformal prediction.} Given a pretrained score function\footnote{Here and throughout the paper, we will employ the terms ``model'' and ``score'' interchangeably, with a slight abuse of the terminology. Although they have distinct technical definitions, this usage is intended to streamline our discussion.}  $S: \mathcal{X} \times \mathcal{Y} \to \mathbb{R}$ and  a hold-out set $\mathcal{D}_{n} = \{(X_i, Y_i)\}_{i\in[n]}$, 
the split conformal method \citep{vovk2005algorithmic, papadopoulos2002inductive} provides a computationally simple way to construct finite-sample valid prediction sets, given by \footnote{To avoid the trivial case, we will assume implicitly that $(1-\alpha)(1+1/n)\leq 1$ throughout.}
\begin{equation}\label{eqn:define_split_CP}\widehat{C}_{\texttt{split-CP}}(X_{n+1}) =  \left\{y\in\mathcal{Y}: S(X_{n+1}, y)\le \textnormal{Quantile}_{(1-\alpha)(1+1/n)} \left( \{S(X_i, Y_i)\}_{i \in [n]}\right)\right\}. \end{equation}
Since the pretrained score $S$ is independent of the hold-out set $\mathcal{D}_{n}$, the split conformal prediction method offers marginal coverage \citep{vovk2005algorithmic, papadopoulos2002inductive},
\[\mathbb{P} \left\{Y_{n+1} \in \widehat{C}_{\texttt{split-CP}}(X_{n+1}) \right\} \ge 1-\alpha,\]
under the assumption that $(X_1, Y_1), \dots, (X_n, Y_n), (X_{n+1}, Y_{n+1})$ are exchangeable.

We note that in practice, implementing split conformal prediction would entail splitting the available training data, with one portion of the data used for producing the pretrained score function $S$ (such as the commonly used residual score, $S(x,y) = |y-f(x)|$, where $f$ is a regression model fitted to the data), and the remaining data used for the calibration step---i.e., for computing the quantile appearing in~\eqref{eqn:define_split_CP}. To streamline notation throughout the paper, we will write $(X_1,Y_1),\dots,(X_n,Y_n)$ to denote the calibration set, i.e., the second portion of the available training data, implicitly assuming that this calibration set is independent of the dataset used for pretraining $S$.

\paragraph{Full conformal prediction.}
If the amount of available data is limited, we may wish to avoid the cost of data splitting inherent in split conformal prediction. Full conformal prediction \citep{vovk2005algorithmic} provides a solution that allows us to train a model and
calibrate it to construct a prediction set without data splitting. 
Specifically, given a dataset $\mathcal{D}_{n}\in(\mathcal{X}\times\mathcal{Y})^n$ and a symmetric algorithm $\mathcal{A}: \cup_{n \ge 0} (\mathcal{X} \times \mathcal{Y})^{n} \to \mathcal{F}$, where $\mathcal{F}$ is the set of measurable functions from $\mathcal{X}\times\mathcal{Y}$ to $\mathbb{R}$, the full conformal prediction set is defined as
\begin{equation}\label{eqn:define_fullCP}
     \widehat{C}_{\texttt{full-CP}}(X_{n+1}) = \left\{y \in \mathcal{Y}: S^{y}(X_{n+1},y) \le \mathrm{Quantile}_{1-\alpha} \left( \left\{S^{y}(X_{i},Y_{i}) \right\}_{i\in [n]} \cup \{S^{y}(X_{n+1},y)\} \right) \right\},
\end{equation}
where $S^{y}=\mathcal{A}(\mathcal{D}_{n} \cup \{(X_{n+1},y)\})$ for each $y$. (Note that the split conformal can be derived as a special case of this procedure by choosing $\mathcal{A}$ to be a trivial algorithm that always returns a fixed model $S$.) As for the split conformal method, full conformal prediction also satisfies a marginal coverage guarantee \citep{vovk2005algorithmic},
\[\mathbb{P} \left\{Y_{n+1} \in \widehat{C}_{\texttt{full-CP}}(X_{n+1}) \right\} \ge 1-\alpha,\]as long as $(X_1, Y_1), \dots, (X_n, Y_n), (X_{n+1}, Y_{n+1})$ are exchangeable.

In this paper, we will work with pretrained models---that is, on the surface, the problem we consider can be viewed as a problem of split conformal prediction, only with multiple choices of the pretrained model $S$. However, the full conformal prediction framework will be essential for our methodology, because the process of choosing one of the available models $S$ will be data-dependent, and we will see that a full conformal type strategy will be needed to avoid selection bias.

\subsection{The model selection criterion}\label{sec:def_L}
We are now ready to consider the setting of a finite collection of pretrained models, $\{S^{\lambda} : \lambda\in\Lambda\}$---for instance, these models might be produced by: running different regression algorithms on the same data; running the same regression algorithm on different subsets of features in the data; or running the same regression algorithm on the same data, but with different values of a tuning parameter (e.g., the penalty parameter for the Lasso).

For any pretrained model $S^{\lambda}$,  we consider the prediction set (at a feature value $x\in\mathcal{X}$) of the form: 
\[C_{q}^{\lambda}(x) = \{y\in\mathcal{Y}: S^{\lambda}(x, y)\le q\},\]
where the threshold $q$ controls the size of the set as well as the coverage. For example, in split conformal prediction as defined in~\eqref{eqn:define_split_CP},  $q$ is given by $ \textnormal{Quantile}_{(1-\alpha)(1+1/n)} \left( \{S(X_i, Y_i)\}_{i \in [n]}\right)$. Throughout, for any model $S^{\lambda}$ and any cutoff $q$, we will write $C_{q}^{\lambda}(x) $ to denote a prediction set at a particular value $x$, while $C_{q}^{\lambda}$ denotes the prediction band in general.

We next specify the criterion of model selection---what does it mean to choose the best model from the set? Formally, we will quantify model effectiveness by a loss function $\mathcal{L}$.
\begin{definition}[Loss function for prediction band width]\label{def:sym_loss}
Define $\mathcal{L}(\lambda, q; x_{1},\dots,x_m)$ as a measurement of the size of the prediction band $C_{q}^{\lambda}$ evaluated at  points $x_{1},\dots, x_m \in \mathcal{X}$ (for $m\geq 1$), where $\mathcal{L}$ is assumed to satisfy the following properties:
\begin{enumerate}
    \item Monotonicity: For any  $S^{\lambda}$ and evaluation points $x_1,\dots,x_m\in\mathcal{X}$,
    $\mathcal{L}$ is non-decreasing with respect to $q$, i.e., $\mathcal{L}(\lambda, q_{1}; x_{1},\dots,x_{m}) \le \mathcal{L}(\lambda, q_{2}; x_{1},\dots,x_{m}), \ \forall q_{1} \le q_{2}\in\mathbb{R}.$
    \item Symmetry: 
    For any $S^{\lambda}$  and cutoff $q\in\mathbb{R}$, and any $x_1,\dots,x_m\in\mathcal{X}$, it holds for any permutation $\pi$ on $\{1,\dots,m\}$ that $\mathcal{L}(\lambda, q; x_{\pi(1)},\dots,x_{\pi(m)}) = \mathcal{L}(\lambda, q; x_{1},\dots,x_m).$
\end{enumerate}
\end{definition}
\noindent A typical choice for this loss might be the average width of the prediction set,
\begin{equation}\label{eqn:loss_mean_size}\mathcal{L}(\lambda,q;x_1,\dots,x_m) = \frac{1}{m}\sum_{i=1}^m |C_q^\lambda(x_i)|\end{equation}
where $|\cdot|$ denotes the size of the set, e.g., the Lebesgue measure if $\mathcal{Y}=\mathbb{R}$, or the cardinality if $\mathcal{Y}$ is a finite set of categorical labels. However, this is not the only choice we might use in practice---for instance, we might choose median, rather than mean, size.\footnote{As an example of another kind of loss function that we might use in a regression setting, the work of \citet{liang2023conformal} considers a selection procedure that aims at  minimizing the squared estimation error, $\frac{1}{n}\sum_{i=1}^n(Y_i - f_\lambda(X_i))^2$, rather than the interval width. Like our work which will be described in detail below, \citet{liang2023conformal}'s approach uses the full conformal framework to then incorporate a hypothesized test point $(X_{n+1},y)$ into the selection process. Note however that this requires a loss $\mathcal{L}$ that depends on $(X,Y)$ values at evaluation (rather than only $X$ values), and is therefore not directly comparable to our framework. See Section~\ref{sec:related_literature} for further discussion.}

To see how this loss function behaves in practice, we now give several examples for specific commonly used conformity scores in the literature:
\begin{itemize}
    \item For the residual score $S^\lambda(x,y) = |y - f_\lambda(x)|$ (where $f_\lambda$ is a pretrained regression function), the loss given in~\eqref{eqn:loss_mean_size} is equal to
    $$\mathcal{L}(\lambda,q;x_1,\dots,x_m) = 2q,$$
    the width of the interval (which is shared across all values $x$, and therefore does not depend on the particular evaluation points $x_1,\dots,x_m$).
    \item For the rescaled residual score $S^\lambda(x,y) = \frac{|y - f_\lambda(x)|}{\sigma_\lambda(x)}$ (where $f_\lambda$ is a pretrained regression function, and $\sigma_\lambda$ is a pretrained local estimate of the scale of the noise), the loss given in~\eqref{eqn:loss_mean_size} is equal to
    $$\mathcal{L}(\lambda,q;x_1,\dots,x_m) = 2q \cdot \frac{1}{m}\sum_{i=1}^m \sigma_\lambda(x_i).$$
    \item For the conformalized quantile regression (CQR) score $S^\lambda(x,y) = \max\{\hat{Q}_{\alpha/2}(x) - y, y - \hat{Q}_{1-\alpha/2}(x)\}$ \citep{romano2019conformalized}, where $\hat{Q}_{\alpha/2}(x)$ and $\hat{Q}_{1-\alpha/2}(x)$ are pretrained estimates of the $\alpha/2$- and $(1-\alpha/2)$-quantiles of $Y|X=x$, the loss~\eqref{eqn:loss_mean_size} is equal to $$\mathcal{L}(\lambda,q;x_1,\dots,x_m) = 2q + \frac{1}{m}\sum_{i=1}^m \left(\hat{Q}_{1-\alpha/2}(x_i) - \hat{Q}_{\alpha/2}(x_i)\right)$$
    (under mild assumptions).
    \item In a setting where the response $Y$ is categorical (and so $\mathcal{Y}$ is a finite set), the conditional density score $S^\lambda(x,y) = - p_\lambda(y\mid x)$ (where $p_\lambda(y\mid x)$ is a pretrained estimate of the conditional probability function for $Y\mid X$), the loss given in~\eqref{eqn:loss_mean_size} is equal to
    \begin{equation}\label{eqn:loss_classification}\mathcal{L}(\lambda,q;x_1,\dots,x_m) = \frac{1}{m}\sum_{i=1}^m |\{y\in\mathcal{Y} : p_\lambda(y\mid x_i) \geq -q\}|.\end{equation}
\end{itemize}
With the loss function $\mathcal{L}$ in place, we are now ready to define the problem of interest:
\begin{quote}
    \textbf{Main aim:} Given a collection of pretrained models $\{S^\lambda : \lambda\in\Lambda\}$, and a calibration data set $(X_1,Y_1),\dots,(X_n,Y_n)$, we aim to output a prediction set at a new test point $X_{n+1}$ that is both valid (under exchangeability of the data) and small (in the sense of aiming to minimize $\mathcal{L}$ over the collection of models).
\end{quote}

\subsection{Existing methods}\label{sec:exist_mtd}
As mentioned earlier in Section~\ref{sec:intro}, the work of \citet{yang2024selection} proposes two methods to solve the problem stated above. We briefly review their methods in this section.

\paragraph{YK-baseline.}
In their first approach, split conformal prediction sets are constructed for each pretrained model $S^{\lambda}$, and the one with the smallest width is selected directly without any correction for selection bias.  We refer to this approach as \texttt{YK-baseline} in this paper, since it provides a baseline against which other methods (that do account for selection bias) can be compared. We can define it precisely as follows. First, for each $\lambda\in\Lambda$, define
\begin{equation}\label{eqn:define_hat_q_lambda}\hat{q}(\lambda) = \textnormal{Quantile}_{(1-\alpha)(1+1/n)}\left(S^{\lambda}(X_1,Y_1),\dots,S^{\lambda}(X_n,Y_n)\right),\end{equation}
so that $C_{\hat{q}(\lambda)}^\lambda(X_{n+1})$ yields the split conformal prediction set using the pretrained score $S^\lambda$ (i.e., without considering any other pretrained models).
Select the ``best'' model as\footnote{\citet{yang2024selection}'s work does not formally specify a mechanism for defining a ``best'' prediction set, so here we choose a specific definition for concreteness, defining ``best'' to mean achieving the minimal loss (e.g., minimal width) over the $n+1$ calibration and test feature values.}
\begin{equation}\label{eqn:define_hat_lambda}\hat\lambda = \arg\min_{\lambda\in\Lambda} \mathcal{L}(\lambda,\hat{q}(\lambda);X_1,\dots,X_{n+1}),\end{equation}
the final prediction set is given by
\begin{equation}\label{eqn:YK-baseline_define}\widehat{C}_{\texttt{YK-baseline}}(X_{n+1}) = \left\{y \in \mathcal{Y}: S^{\hat\lambda}(X_{n+1},y)\leq \hat{q}(\hat\lambda)\right\}.\end{equation}

\begin{remark}[Breaking ties]
Above in~\eqref{eqn:define_hat_lambda}, and throughout the paper, when we define a model $\lambda\in\Lambda$ as a minimizer of the loss, if the minimizer is not unique then an arbitrary tie-breaking rule may be used.
We will consider this more formally in Appendix~\ref{app:tie-breaking}.\end{remark}

By construction, \texttt{YK-baseline} yields a prediction set with the smallest width among all the possible split conformal prediction sets $C_{\hat{q}(\lambda)}^{\lambda}$ (indexed over $\lambda\in\Lambda$). 
In terms of coverage, \cite{yang2024selection} show the following concentration-based guarantee:
\begin{equation}\label{eqn: YK-nonadj_cov}
    \mathbb{P} \left\{Y_{n+1} \in \widehat{C}_{\texttt{YK-baseline}}(X_{n+1}) \right\} \ge 1-\alpha - \frac{\sqrt{\frac{1}{2}\log(2|\Lambda|)} + \frac{1}{3} - \frac{1-\alpha}{\sqrt{n}}}{\sqrt{n}}
\end{equation}
Here they assume  $(X_1,Y_1),\dots,(X_n,Y_n),(X_{n+1},Y_{n+1})$ are i.i.d. rather than exchangeable.
We can interpret this lower bound as saying that, if $|\Lambda|$ is not too large (i.e., $\log(2|\Lambda|)\ll n$), then the selection bias is limited and will not substantially reduce coverage. On the other hand, for large $|\Lambda|$, this lower bound does not give a meaningful guarantee since coverage may be arbitrarily low, which will be confirmed in our experiments in Section~\ref{sec:experiment}. 

\paragraph{YK-adjust.}
By examining the lower bound in \eqref{eqn: YK-nonadj_cov},  one can adjust the value of $\alpha$ to recover the desired coverage level.
Let
$\tilde\alpha = \alpha  - \frac{\sqrt{\frac{1}{2}\log(2|\Lambda|)} + \frac{1}{3} - \frac{1-\alpha}{\sqrt{n}}}{\sqrt{n} (1+1/n)}$. One can construct $\widehat{C}_{\texttt{YK-adjust}}(X_{n+1})$ as in~\eqref{eqn:YK-baseline_define} with $\tilde\alpha$ in place of $\alpha$. Then applying the bound~\eqref{eqn: YK-nonadj_cov}, we have 
\[\mathbb{P} \left\{Y_{n+1} \in \widehat{C}_{\texttt{YK-adjust}}(X_{n+1}) \right\} \ge 1-\alpha.\]
However, as we will see in Section~\ref{sec:experiment}, this adjusted version is highly conservative and often returns prediction sets that are extremely wide.

\paragraph{YK-split.}
To avoid the problem of selection bias incurred by \texttt{YK-baseline}, \citet{yang2024selection} propose an alternative method, based on data splitting, which we will refer to as \texttt{YK-split}. At a high level, the calibration set of $n$ data points is split into two parts, where $n_1$ data points are used to select a model, and the remaining $n_2 = n-n_1$ data points are used to construct the split conformal prediction set with the selected model. Specifically, define
$\hat{q}_1(\lambda) = \textnormal{Quantile}_{(1-\alpha)(1+1/n_1)}(\{S^\lambda(X_i,Y_i)\}_{i\in[n_1]})$, 
and $\hat{q}_2(\lambda) = \textnormal{Quantile}_{(1-\alpha)(1+1/n_2)}(\{S^\lambda(X_i,Y_i)\}_{n_1<i\leq n})$. The prediction set is given by
\[\widehat{C}_{\texttt{YK-split}}(X_{n+1}) = \left\{y\in\mathcal{Y}: S^{\hat\lambda_{1}}(X_{n+1},y)\leq \hat{q}_2(\hat\lambda_1)\right\}, \]
where $\hat\lambda_{1}=\arg\min_{\lambda\in\Lambda}\mathcal{L}(\lambda,\hat{q}_1(\lambda);X_1,\dots,X_{n_1})$.
Since we have split the data into two disjoint portions, this construction avoids selection bias, and thus marginal coverage holds,
\begin{equation*}
    \mathbb{P} \left\{Y_{n+1} \in \widehat{C}_{\texttt{YK-split}}(X_{n+1}) \right\} \ge 1-\alpha,
\end{equation*}
assuming that $(X_1,Y_1),\dots,(X_n,Y_n),(X_{n+1},Y_{n+1})$ are exchangeable.

While data splitting removes the issue of selection bias and restores  marginal coverage, we would naturally expect this to come at a price due to the reduction in sample size (both for the selection of the model $\hat\lambda$, and for computing a threshold for the scores). However, \citet{yang2024selection} show that under certain smoothness conditions for the quantile function and model class, \texttt{YK-split} is guaranteed to return approximately the smallest prediction set for sufficiently large $n$. When $n$ is moderate, though, we will see in our experiments in Section~\ref{sec:experiment} that sample splitting can lead to a wider prediction set.

\section{Conformal prediction with model selection}\label{sec:method}
In this section, we present our proposed methodology: conformal prediction with model selection over a finite collection of models $\Lambda$. We develop two variants of our procedure, called \texttt{ModSel-CP} and \texttt{ModSel-CP-LOO}, which we will describe in detail below. At a high level, our aim is to correct for the selection bias incurred because $\hat\lambda\in\Lambda$ is chosen in a data-dependent way---and to do so, we will use the idea of full conformal prediction, which itself is a mechanism for providing a valid prediction when using any model fitting algorithm (in this case, any model \emph{selection} algorithm).

\subsection{The \texttt{ModSel-CP} method}\label{sec:ModSel}
Recall that the key ingredient that ensures the finite-sample validity of conformal prediction is the exchangeability of the data. The reason why predictive coverage is lost when we use a data-dependent selected model $\hat\lambda\in\Lambda$ (as in \texttt{YK-baseline}) is that $\hat\lambda$ depends on the data nonsymmetrically: $\hat\lambda$ chooses a model based on the quantiles of scores on the calibration points, $(X_1,Y_1),\dots,(X_n,Y_n)$, but does not make use of the test point $(X_{n+1},Y_{n+1})$. Treating the test point differently from the calibration points means that, once $\hat\lambda$ is identified as the selected model, the data points are no longer (conditionally) exchangeable.

This is the starting point of our method: we will aim to remove this issue by restoring symmetry, via incorporating the test point (in addition to the calibration points) into the model selection process. 

Specifically, for each $y \in \mathcal{Y}$, define the test-point-augmented selection as follows:
\begin{equation*}
    \hat{\lambda}(y) = \mathrm{arg} \min \limits_{\lambda \in \Lambda}\mathcal{L}(\lambda,\hat{q}(\lambda,y); X_1,\dots,X_{n+1}),
\end{equation*}
where $\hat{q}(\lambda,y) = \mathrm{Quantile}_{1-\alpha} \left( S^{\lambda}(X_{1},Y_{1}), \dots, S^{\lambda}(X_{n},Y_{n}), S^{\lambda}(X_{n+1},y) \right).$
In other words, $\hat\lambda(y)$ identifies the model that yields the best (e.g., smallest) prediction sets, if calibrated and evaluated on all $n+1$ data points $(X_1,Y_1),\dots,(X_n,Y_n),(X_{n+1},y)$. In contrast, the selected model $\hat\lambda$ in \texttt{YK-baseline} uses only  $(X_1,Y_1),\dots,(X_n,Y_n)$ and does not include a test point. For our proposed method, including the (hypothesized) test point $(X_{n+1},y)$ into the model selection procedure can be viewed as restoring symmetry to our treatment of the $n+1$ data points---the $n$ calibration points $\{(X_i,Y_i)\}_{i\in [n]}$ and the test point $(X_{n+1},Y_{n+1})$.

With this definition in place, we then define the prediction set for \texttt{ModSel-CP} as:
\begin{equation}\label{defn:ModSel}
\widehat{C}_{\texttt{ModSel-CP}} (X_{n+1}) = \left\{y \in \mathcal{Y}: S^{\hat{\lambda}(y)}(X_{n+1},y) \le \hat{q} (\hat{\lambda}(y),y ) \right\}.\end{equation}
Our first main result establishes that this method offers marginal coverage.
\begin{theorem}[Validity of \texttt{ModSel-CP}]\label{thm:ModSel-validity}
    Fix any $\alpha\in(0,1)$. Assume $(X_1,Y_1),\dots,(X_{n+1},Y_{n+1})$ are exchangeable, and $\{S^\lambda:\lambda\in\Lambda\}$ is any collection of pretrained models. Then \textnormal{\texttt{ModSel-CP}} satisfies a marginal coverage guarantee,
    \begin{equation*}
    	\mathbb{P} \left\{Y_{n+1} \in \widehat{C}_{\textnormal{\texttt{ModSel-CP}}}(X_{n+1}) \right\} \ge 1-\alpha.
    \end{equation*}
\end{theorem}
\noindent As we will see in the proof, this marginal coverage result will follow by reformulating \texttt{ModSel-CP} as an instance of full conformal prediction. 
In particular, treating the model selection step (i.e., the process of selecting a model $\hat\lambda(y)$ as a function of the data) as a part of the training algorithm $\mathcal{A}$ in full conformal, the \texttt{ModSel-CP} method can be seen as conformalizing over a new meta-score output by this algorithm $\mathcal{A}$.

\subsubsection{Implementation}  \label{sec:ModSel_implement}
As mentioned above, the construction of our method is based on the full conformal prediction framework, by integrating the test point into the selection procedure in order to restore symmetry and ensure a distribution-free coverage guarantee. In general, full conformal prediction may be extremely computationally expensive to implement, since we need to retrain our model for each possible value $y\in\mathcal{Y}$, as in~\eqref{eqn:define_fullCP}. For our specific constructions of the \texttt{ModSel-CP} prediction sets, initially it may appear to be the case that these constructions are computationally infeasible as well, since \texttt{ModSel-CP} requires identifying the selected model $\hat\lambda(y)$ for each $y\in\mathcal{Y}$, as in~\eqref{defn:ModSel}. However, we will now see that this method can be implemented extremely efficiently.

To state Theorem \ref{thm:ModSel_simplify}, we first need to define some notation. For convenience, we will use the abbreviated notation $\mathcal{L}_{n+1}(\lambda,q)=\mathcal{L}(\lambda,q;X_1,\dots,X_{n+1})$, here and later on in the paper. Define the set of competing models,
\begin{equation}\label{eqn:good_models}
		\mathcal{M} = \left\{\lambda \in \Lambda: \mathcal{L}_{n+1}(\lambda,\hat{q}_{-}(\lambda)) \le \mathcal{L}_{n+1}({\hat\lambda},\hat{q}(\hat\lambda)) \right\},
		\end{equation}
	where $\hat{q}(\lambda)$ and $\hat\lambda$ are defined as in \eqref{eqn:define_hat_q_lambda} and~\eqref{eqn:define_hat_lambda}, respectively, while \[\hat{q}_{-}(\lambda) =  \mathrm{Quantile}_{(1-\alpha)(1+1/n)-1/n} \left(S^\lambda(X_1,Y_1),\dots,S^\lambda(X_n,Y_n)\right).\] 
Similarly, we define
$\mathcal{M}_{-} = \left\{\lambda \in \Lambda: \mathcal{L}_{n+1}(\lambda, \hat{q}_{-}(\lambda)) < \mathcal{L}_{n+1}(\hat{\lambda},\hat{q}(\hat{\lambda})) \right\} \subseteq \mathcal{M}.$
 
We can observe that, by construction, $\hat{q}(\lambda,y)\geq \hat{q}_-(\lambda) $ must hold for all $\lambda$ and all $y$.
Consequently, $\lambda\in\mathcal{M}$ indicates that it is possible, if the test point score $S^\lambda(X_{n+1},y)$ is sufficiently small, to have $\mathcal{L}_{n+1}(\lambda,\hat{q}(\lambda,y)) \leq \mathcal{L}_{n+1}(\hat\lambda,\hat{q}(\hat\lambda))$, which makes the value $\lambda$ a potential candidate for $\hat{\lambda}(y)$---this explains how the set $\mathcal{M}$ is related to the calculation of the \texttt{ModSel-CP} prediction set (and similarly for $\mathcal{M}_-$).
We now formalize this connection.

\begin{theorem}\label{thm:ModSel_simplify}
 For any given dataset $\mathcal{D}_n$ and test point $X_{n+1}$, the \textnormal{\texttt{ModSel-CP}} prediction set defined in~\eqref{defn:ModSel} satisfies 
 \begin{equation}\label{eqn:ModSel_simple}
     \widehat{C}_{\textnormal{\texttt{ModSel-CP}}} (X_{n+1}) \subseteq \left\{y \in \mathcal{Y}: \min_{\lambda\in\mathcal{M}}\mathcal{L}_{n+1}(\lambda,S^\lambda(X_{n+1},y))\le\mathcal{L}_{n+1}({\hat\lambda},\hat{q}(\hat\lambda)) \right\},
 \end{equation}
 \begin{equation}\label{eqn:ModSel_simple_lower}
     \widehat{C}_{\textnormal{\texttt{ModSel-CP}}} (X_{n+1}) \supseteq \left\{y \in \mathcal{Y}: \min_{\lambda\in\mathcal{M}_-}\mathcal{L}_{n+1}(\lambda,S^\lambda(X_{n+1},y))<\mathcal{L}_{n+1}({\hat\lambda},\hat{q}(\hat\lambda)) \right\}.
 \end{equation}

 Furthermore,
 if it holds almost surely that:
 \begin{itemize}
     \item The elements in the set $\{\mathcal{L}_{n+1}(\lambda,\hat{q}_-(\lambda))\}_{\lambda\in\mathcal{M}} \cup \{\mathcal{L}_{n+1}(\lambda,\hat{q}(\lambda))\}_{\lambda\in\mathcal{M}}$ are all distinct,
     \item For each $q\in\mathbb{R}$, the set $\{y\in\mathcal{Y}: \mathcal{L}_{n+1}(\lambda, S^{\lambda}(X_{n+1},y)) = q\}$ has measure zero (with respect to some base measure on $\mathcal{Y}$),
 \end{itemize} 
 then both set inclusions~\eqref{eqn:ModSel_simple} and~\eqref{eqn:ModSel_simple_lower} are in fact equalities up to sets of measure zero (with respect to the same base measure on $\mathcal{Y}$).
\end{theorem}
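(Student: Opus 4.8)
The plan is to reduce everything to a single elementary identity describing how the augmented quantile $\hat q(\lambda,y)$ depends on the test score $s_\lambda := S^\lambda(X_{n+1},y)$. First I would show, purely from the definitions of $\hat q(\lambda,y)$, $\hat q(\lambda)$, and $\hat q_-(\lambda)$ as order statistics of the calibration scores, that
\[\hat q(\lambda,y) = \mathrm{med}\big(\hat q_-(\lambda),\, s_\lambda,\, \hat q(\lambda)\big) = \min\big(\max(s_\lambda,\hat q_-(\lambda)),\, \hat q(\lambda)\big),\]
i.e.\ $\hat q(\lambda,y)$ is simply $s_\lambda$ clipped to $[\hat q_-(\lambda),\hat q(\lambda)]$. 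Indeed $\hat q(\lambda)$ and $\hat q_-(\lambda)$ are the $k$-th and $(k-1)$-th smallest calibration scores for $k=\lceil(1-\alpha)(n+1)\rceil$, and inserting one extra value $s_\lambda$ shifts the relevant order statistic by exactly one position. Two consequences drive the entire argument. By monotonicity of $\mathcal L$ in $q$, $\mathcal L_{n+1}(\lambda,\hat q_-(\lambda))\le \mathcal L_{n+1}(\lambda,\hat q(\lambda,y))\le \mathcal L_{n+1}(\lambda,\hat q(\lambda))$ for every $\lambda,y$, so writing $\ell(y)=\min_\lambda \mathcal L_{n+1}(\lambda,\hat q(\lambda,y))$ and $\ell^\star=\mathcal L_{n+1}(\hat\lambda,\hat q(\hat\lambda))=\min_\lambda\mathcal L_{n+1}(\lambda,\hat q(\lambda))$ we always have $\ell(y)\le\ell^\star$. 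Second, since $s\le\mathrm{clip}(s;a,b)\iff s\le b$, the membership condition simplifies to $y\in\widehat C_{\texttt{ModSel-CP}}\iff s_{\hat\lambda(y)}\le \hat q(\hat\lambda(y))$.

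For the upper inclusion~\eqref{eqn:ModSel_simple}, I would use the selected model $\lambda^\star=\hat\lambda(y)$ as the witness. If $y$ is covered then $s_{\lambda^\star}\le\hat q(\lambda^\star)$, so the clip identity gives $\hat q(\lambda^\star,y)=\max(s_{\lambda^\star},\hat q_-(\lambda^\star))\ge s_{\lambda^\star}$, and monotonicity yields $\mathcal L_{n+1}(\lambda^\star,s_{\lambda^\star})\le \mathcal L_{n+1}(\lambda^\star,\hat q(\lambda^\star,y))=\ell(y)\le\ell^\star$. The same chain, together with $\hat q_-(\lambda^\star)\le \hat q(\lambda^\star,y)$, shows $\mathcal L_{n+1}(\lambda^\star,\hat q_-(\lambda^\star))\le\ell^\star$, i.e.\ $\lambda^\star\in\mathcal M$; hence $\lambda^\star$ witnesses the right-hand set.

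For the lower inclusion~\eqref{eqn:ModSel_simple_lower}, suppose some $\mu\in\mathcal M_-$ satisfies $\mathcal L_{n+1}(\mu,s_\mu)<\ell^\star$. The strict inequality forces $s_\mu<\hat q(\mu)$ (otherwise monotonicity gives $\mathcal L_{n+1}(\mu,s_\mu)\ge\mathcal L_{n+1}(\mu,\hat q(\mu))\ge\ell^\star$), so $\hat q(\mu,y)=\max(s_\mu,\hat q_-(\mu))$ and therefore $\mathcal L_{n+1}(\mu,\hat q(\mu,y))$ equals either $\mathcal L_{n+1}(\mu,s_\mu)$ or $\mathcal L_{n+1}(\mu,\hat q_-(\mu))$, both strictly below $\ell^\star$ (the latter because $\mu\in\mathcal M_-$). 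Thus $\ell(y)<\ell^\star$. I then argue by contradiction: were $y$ not covered, the simplified membership rule gives $s_{\hat\lambda(y)}>\hat q(\hat\lambda(y))$, so the clip pins $\hat q(\hat\lambda(y),y)=\hat q(\hat\lambda(y))$ and $\ell(y)=\mathcal L_{n+1}(\hat\lambda(y),\hat q(\hat\lambda(y)))\ge\ell^\star$, contradicting $\ell(y)<\ell^\star$. The asymmetry between the two bounds ($\le$ in~\eqref{eqn:ModSel_simple} versus $<$ in~\eqref{eqn:ModSel_simple_lower}) is precisely the boundary behavior of the clip at its endpoints, and tracking strict versus non-strict inequalities is the part requiring the most care.

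Finally, for the equality-up-to-measure-zero claim I would first deduce $\mathcal M=\mathcal M_-$ from the distinctness hypothesis: a model lies in $\mathcal M\setminus\mathcal M_-$ exactly when $\mathcal L_{n+1}(\lambda,\hat q_-(\lambda))=\ell^\star=\mathcal L_{n+1}(\hat\lambda,\hat q(\hat\lambda))$, and since $\hat\lambda\in\mathcal M$ (because $\mathcal L_{n+1}(\hat\lambda,\hat q_-(\hat\lambda))\le\ell^\star$), this would equate a ``$\hat q_-$''-entry with a ``$\hat q$''-entry of the list assumed to have all-distinct elements, which is impossible. With $\mathcal M=\mathcal M_-$, the outer and inner candidate sets differ only on $\{y:\min_{\lambda\in\mathcal M}\mathcal L_{n+1}(\lambda,s_\lambda)=\ell^\star\}\subseteq\bigcup_{\lambda\in\mathcal M}\{y:\mathcal L_{n+1}(\lambda,S^\lambda(X_{n+1},y))=\ell^\star\}$, a finite union of measure-zero sets by the second hypothesis (applied with $q=\ell^\star$). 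Since~\eqref{eqn:ModSel_simple} and~\eqref{eqn:ModSel_simple_lower} sandwich $\widehat C_{\texttt{ModSel-CP}}$ between these two sets, all three coincide up to measure zero. The main obstacle throughout is the opening step: establishing the order-statistic/clip identity and pinning down its boundary cases exactly, since every subsequent inequality and the $\le$/$<$ distinction depends on it.
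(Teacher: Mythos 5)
Your proposal is correct and follows essentially the same route as the paper's proof: the upper inclusion via the witness $\hat\lambda(y)$ (shown to lie in $\mathcal{M}$ by monotonicity of $q\mapsto\mathcal{L}_{n+1}(\lambda,q)$), the lower inclusion via the chain forcing $\hat q(\hat\lambda(y),y)<\hat q(\hat\lambda(y))$ (your contradiction argument is just the contrapositive of the paper's direct one), and the measure-zero claim via $\mathcal{M}=\mathcal{M}_-$ from the distinctness hypothesis plus the level-set condition applied at $q=\mathcal{L}_{n+1}(\hat\lambda,\hat q(\hat\lambda))$. The only presentational difference is that you package the order-statistic facts into a single explicit clip identity $\hat q(\lambda,y)=\mathrm{med}\bigl(\hat q_-(\lambda),S^\lambda(X_{n+1},y),\hat q(\lambda)\bigr)$, whereas the paper invokes its one-sided consequences ($\hat q_-(\lambda)\le\hat q(\lambda,y)\le\hat q(\lambda)$, $\hat q(\lambda,y)\le\max\{\hat q_-(\lambda),S^\lambda(X_{n+1},y)\}$, and the ``by construction of these quantiles'' step) inline where needed.
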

\noindent 
Note that Theorem~\ref{thm:ModSel_simplify} does not provide an exact calculation for the set $\widehat{C}_{\textnormal{\texttt{ModSel-CP}}} (X_{n+1})$, but rather provides upper and lower bounds on this set. Nevertheless, we argue that in most practical scenarios, the upper bound \eqref{eqn:ModSel_simple} can serve as a close approximation to the \texttt{ModSel-CP} set, as the additional conditions specified in the theorem are typically satisfied. To illustrate this, we consider the following example.

\paragraph{Theorem~\ref{thm:ModSel_simplify} in practice: the residual score.} In the case of the residual score $S^{\lambda}(x, y) = |y - f_{\lambda}(x)|$, where $f_{\lambda}$ is a pretrained regression function, the upper bound~\eqref{eqn:ModSel_simple} is equal to a union of intervals,
\begin{multline}\label{eqn:ModSel_simple_res}
    \widehat{C}_{\texttt{ModSel-CP}} (X_{n+1}) \subseteq \left\{y\in\mathbb{R} : \min_{\lambda\in\mathcal{M}}|y-f_\lambda(X_{n+1})| \leq \hat{q}(\hat\lambda)\right\} \\=  \bigcup \limits_{\lambda \in \left\{\lambda' \in \Lambda: \hat{q}_{-}(\lambda') \le \hat{q}(\hat{\lambda}) \right\}} \left[f_{\lambda}(X_{n+1}) - \hat{q}(\hat\lambda) \ , \ f_{\lambda}(X_{n+1})  + \hat{q}(\hat\lambda)\right].
\end{multline}
In the meantime, the lower bound \eqref{eqn:ModSel_simple_lower} is also equal to a union of intervals,
\begin{multline}\label{eqn:ModSel_simple_res_lower}
    \widehat{C}_{\texttt{ModSel-CP}} (X_{n+1}) \supseteq \left\{y\in\mathbb{R} : \min_{\lambda\in\mathcal{M}_{-}}|y-f_\lambda(X_{n+1})| < \hat{q}(\hat\lambda)\right\} \\=  \bigcup \limits_{\lambda \in \left\{\lambda' \in \Lambda: \hat{q}_{-}(\lambda') < \hat{q}(\hat{\lambda}) \right\}} \left(f_{\lambda}(X_{n+1}) - \hat{q}(\hat\lambda) \ , \ f_{\lambda}(X_{n+1})  + \hat{q}(\hat\lambda)\right).
\end{multline}
Since in most regression settings the data are continuously distributed, we have $\mathcal{M}=\mathcal{M}_{-}$ almost surely. Consequently, comparing \eqref{eqn:ModSel_simple_res} and \eqref{eqn:ModSel_simple_res_lower}, we see that the upper and lower bounds coincide up to a set of measure zero. Thus, the \texttt{ModSel-CP} set differs from the upper bound \eqref{eqn:ModSel_simple_res} only on a null set, effectively making the upper bound an alternative expression of the \texttt{ModSel-CP} set.

Returning to the general scores, we can equivalently rewrite the upper bound~\eqref{eqn:ModSel_simple} as
\begin{equation}\label{eqn:ModSel_simple_2}
\widehat{C}_{\textnormal{\texttt{ModSel-CP}}} (X_{n+1}) \subseteq \bigcup_{\lambda\in\mathcal{M}}\left\{y \in \mathcal{Y}: \mathcal{L}_{n+1}(\lambda,S^\lambda(X_{n+1},y))\le\mathcal{L}_{n+1}({\hat\lambda},\hat{q}(\hat\lambda)) \right\}.\end{equation}
Hence,  \texttt{ModSel-CP} --- or more precisely, its upper bound used as a close approximation --- can be efficiently implemented, provided that sets of the form $\left\{y: \mathcal{L}_{n+1}(\lambda, S^{\lambda}(X_{n+1},y)) \le q \right\}$  are  easy to compute.

While the additional conditions in the theorem may occasionally fail to hold (e.g., due to ties among values), such cases most likely arise in discrete data settings, where $\mathcal{Y}$ is itself finite---i.e., a categorical response. However, this does not present a significant challenge to implementation in this case, since we can simply enumerate all possible values $y\in\mathcal{Y}$ and calculate the \texttt{ModSel-CP} prediction set directly---the upper and lower bounds are no longer necessary for computational efficiency.

\subsection{The \texttt{ModSel-CP-LOO} method}
As described above, the \texttt{ModSel-CP} method can be interpreted as a mechanism for restoring symmetry in the selection of $\hat\lambda$, by using all $n+1$ data points (the $n$ calibration points plus the test point) when choosing a model. We will now consider an alternative version of the method, where we restore symmetry in a different way, by considering  a leave-one-out (LOO) strategy for selecting the model.

Specifically, for each $i\in[n]$ and each $y\in\mathcal{Y}$, define the leave-one-out selected model as
\[\hat\lambda_{-i}(y) = \arg\min_{\lambda\in\Lambda}\mathcal{L}(\lambda,\hat{q}_{-i}(\lambda,y) ; X_1,\dots,X_{n+1}),\]
where $\hat{q}_{-i}(\lambda,y)$ is defined as \[\textnormal{Quantile}_{(1-\alpha)(1+1/n)}\left(S^\lambda(X_1,Y_1),\dots,S^\lambda(X_{i-1},Y_{i-1}),S^\lambda(X_{i+1},Y_{i+1}),\dots,S^\lambda(X_n,Y_n),S^\lambda(X_{n+1},y)\right).\]
In other words, while \texttt{ModSel-CP} uses the augmented data set $(X_1,Y_1),\dots,(X_n,Y_n),(X_{n+1},y)$ for model selection, here we instead use a leave-one-out version of this data set, removing the $i$th point $(X_i,Y_i)$.

We are now ready to define the prediction set for \texttt{ModSel-CP-LOO}. As before, we will write $\mathcal{L}_{n+1}(\lambda,q)$ to denote $\mathcal{L}(\lambda,q;X_1,\dots,X_{n+1})$, i.e., the subscript indicates that the width of the prediction set is being evaluated on all $n+1$ data points, $X_1,\dots,X_{n+1}$. 
\begin{multline}\label{defn:LOO}
\widehat{C}_{\texttt{ModSel-CP-LOO}} (X_{n+1}) = \Bigg\{y \in \mathcal{Y}: 
\mathcal{L}_{n+1}\big({\hat\lambda}, S^{\hat\lambda}(X_{n+1},y)\big) \leq {}\\
\textnormal{Quantile}_{(1-\alpha)(1+1/n)}\left(\left\{ \mathcal{L}_{n+1}\big({\hat\lambda_{-i}(y)},S^{\hat\lambda_{-i}(y)}(X_i,Y_i)\big)\right\}_{i\in[n]}\right)\Bigg\}.\end{multline}
As before, this variant of the method also offers marginal coverage.
\begin{theorem}[Validity of \texttt{ModSel-CP-LOO}]\label{thm:LOO-validity}
    Fix any $\alpha\in(0,1)$. Assume $(X_1,Y_1),\dots,(X_{n+1},Y_{n+1})$ are exchangeable, and $\{S^\lambda:\lambda\in\Lambda\}$ is any collection of pretrained models. Then \textnormal{\texttt{ModSel-CP-LOO}} satisfies a marginal coverage guarantee,
    \begin{equation*}
    	\mathbb{P} \left\{Y_{n+1} \in \widehat{C}_{\textnormal{\texttt{ModSel-CP-LOO}}}(X_{n+1}) \right\} \ge 1-\alpha.
    \end{equation*}
\end{theorem}
\noindent Like for the \texttt{ModSel-CP}, we will see in the proof that this result follows by interpreting \texttt{ModSel-CP-LOO} in the language of the full conformal prediction framework by defining a new meta-score via model selection.

Unlike our result in Theorem~\ref{thm:ModSel_simplify} giving an efficient implementation of \texttt{ModSel-CP},  computing an efficient representation of the prediction set for the \texttt{ModSel-CP-LOO} method is more complicated. We defer this question to the Appendix~\ref{sec:appendix_implement_LOO}.

\subsection{Comparing \texttt{ModSel-CP} and \texttt{ModSel-CP-LOO} to \texttt{YK-baseline}}\label{sec:ModSelvsLOO}

As mentioned at the beginning of Section~\ref{sec:method}, the construction of our method is built around the idea of restoring the symmetry required by the conformal procedure, which can be viewed as a correction to alleviate selection bias. Hence, both \texttt{ModSel-CP} and \texttt{ModSel-CP-LOO} can be seen as a selection-bias-corrected version of \texttt{YK-baseline} (though the two methods take different approaches towards the question of how to restore symmetry to perform this correction).
This perspective suggests that \texttt{ModSel-CP} and \texttt{ModSel-CP-LOO} should be more conservative than \texttt{YK-baseline}, in order to correct for selection bias, and indeed the following result confirms this intuition under a mild assumption.
\begin{proposition}\label{prop:containYK-nonadj}
    For any $\alpha\in(0,1)$, any collection of pretrained models $\{S^\lambda:\lambda\in\Lambda\}$, and any data set  $(X_1,Y_1),\dots,(X_n,Y_n)$ and test point $X_{n+1}$,
    it holds deterministically that if $\hat{\lambda}$ is the unique minimizer in~\eqref{eqn:define_hat_lambda} then
    \[\widehat{C}_{\textnormal{\texttt{YK-baseline}}}(X_{n+1}) \subseteq \widehat{C}_{\textnormal{\texttt{ModSel-CP}}}(X_{n+1}), \textnormal{ and } \widehat{C}_{\textnormal{\texttt{YK-baseline}}}(X_{n+1}) \subseteq \widehat{C}_{\textnormal{\texttt{ModSel-CP-LOO}}}(X_{n+1}).\]
\end{proposition}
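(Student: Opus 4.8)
The plan is to derive both inclusions from a few purely deterministic facts about how a $k$-th order statistic responds to inserting or deleting a single score, combined with the monotonicity of $\mathcal{L}$ in $q$ and the assumed uniqueness of $\hat\lambda$. Write $k=\lceil(1-\alpha)(n+1)\rceil$, so that $\hat q(\lambda)$, $\hat q(\lambda,y)$ and $\hat q_{-i}(\lambda,y)$ are each the $k$-th smallest element of the relevant $n$- or $(n+1)$-element multiset of scores. I will lean on two elementary facts. (Fact A) $\hat q(\lambda,y)\le\hat q(\lambda)$, and moreover $S^\lambda(X_{n+1},y)\le\hat q(\lambda)$ iff $S^\lambda(X_{n+1},y)\le\hat q(\lambda,y)$; indeed, inserting the test score into the calibration multiset leaves the $k$-th order statistic unchanged when the test score strictly exceeds it, and can only decrease it otherwise. (Fact B) $\hat q_{-i}(\lambda,y)\ge\hat q(\lambda,y)$, since deleting one element from the $(n+1)$-element multiset can only raise its $k$-th smallest value. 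Throughout set $c=\mathcal{L}_{n+1}(\hat\lambda,\hat q(\hat\lambda))$, the loss achieved by \texttt{YK-baseline}.

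For \texttt{ModSel-CP}, fix $y\in\widehat{C}_{\texttt{YK-baseline}}(X_{n+1})$, i.e.\ $S^{\hat\lambda}(X_{n+1},y)\le\hat q(\hat\lambda)$, and write $\mu=\hat\lambda(y)$. By Fact A it suffices to show $S^\mu(X_{n+1},y)\le\hat q(\mu)$, since this is equivalent to the defining inequality $S^\mu(X_{n+1},y)\le\hat q(\mu,y)$ of membership in $\widehat{C}_{\texttt{ModSel-CP}}(X_{n+1})$. Suppose instead $S^\mu(X_{n+1},y)>\hat q(\mu)$; then inserting this large test score does not move the $k$-th order statistic, so $\hat q(\mu,y)=\hat q(\mu)$. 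Using optimality of $\mu=\hat\lambda(y)$ for the augmented selection, then monotonicity together with $\hat q(\hat\lambda,y)\le\hat q(\hat\lambda)$ (Fact A), I obtain
\[\mathcal{L}_{n+1}(\mu,\hat q(\mu))=\mathcal{L}_{n+1}(\mu,\hat q(\mu,y))\le\mathcal{L}_{n+1}(\hat\lambda,\hat q(\hat\lambda,y))\le\mathcal{L}_{n+1}(\hat\lambda,\hat q(\hat\lambda))=c.\]
Since $\hat\lambda$ is the \emph{unique} minimizer of $\lambda\mapsto\mathcal{L}_{n+1}(\lambda,\hat q(\lambda))$, this forces $\mu=\hat\lambda$; but then $S^\mu(X_{n+1},y)=S^{\hat\lambda}(X_{n+1},y)\le\hat q(\hat\lambda)=\hat q(\mu)$, contradicting the assumption. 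Hence $S^\mu(X_{n+1},y)\le\hat q(\mu)$ and $y\in\widehat{C}_{\texttt{ModSel-CP}}(X_{n+1})$.

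For \texttt{ModSel-CP-LOO}, abbreviate $\mu_i=\hat\lambda_{-i}(y)$ and $R_i=\mathcal{L}_{n+1}(\mu_i,S^{\mu_i}(X_i,Y_i))$ for $i\in[n]$, and $R_{n+1}=\mathcal{L}_{n+1}(\hat\lambda,S^{\hat\lambda}(X_{n+1},y))$; membership of $y$ is the statement $R_{n+1}\le\mathrm{Quantile}_{(1-\alpha)(1+1/n)}(\{R_i\}_{i\in[n]})$, i.e.\ $R_{n+1}$ is at most the $k$-th smallest $R_i$. From $S^{\hat\lambda}(X_{n+1},y)\le\hat q(\hat\lambda)$ and monotonicity, $R_{n+1}\le c$, so it is enough to show the $k$-th smallest of $\{R_i\}$ is at least $c$, equivalently $\#\{i\in[n]:R_i<c\}\le k-1$. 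Here uniqueness pays off again: if $R_i<c$, then since $\mathcal{L}_{n+1}(\mu_i,\hat q(\mu_i))\ge c$ (with equality exactly when $\mu_i=\hat\lambda$, and strict otherwise by uniqueness), monotonicity gives $S^{\mu_i}(X_i,Y_i)<\hat q(\mu_i)$; and a one-line count (at most $k-1$ of the $n$ scores $\{S^{\mu_i}(X_j,Y_j)\}_{j\le n}$ lie below their own $k$-th order statistic, and deleting the $i$-th score while inserting the test score leaves at most $k-1$ scores no larger than $S^{\mu_i}(X_i,Y_i)$) upgrades this to $S^{\mu_i}(X_i,Y_i)<\hat q_{-i}(\mu_i,y)$. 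Thus $\{i:R_i<c\}$ is contained in the leave-one-out ``low-score'' set $\{i:S^{\mu_i}(X_i,Y_i)<\hat q_{-i}(\mu_i,y)\}$.

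The main obstacle is the final step: bounding the size of this leave-one-out low-score set by $k-1$. If the selected model were the same across all folds ($\mu_i\equiv\hat\lambda$), this would reduce to the familiar order-statistic count---a calibration point lies in the set iff its score occupies one of the $k$ bottom slots among all $n+1$ scores, and the assumption $S^{\hat\lambda}(X_{n+1},y)\le\hat q(\hat\lambda)$ guarantees the test point occupies one such slot, leaving at most $k-1$ for the calibration points. The difficulty is that $\mu_i=\hat\lambda_{-i}(y)$ may switch from fold to fold, so the ranked scores, and the effective rank of the test point, change with $i$. To handle this I would compare each fold's selection back to $\hat\lambda$: for an index $i$ whose $\hat\lambda$-score is large, deleting $(X_i,Y_i)$ and inserting the small test score can only push $\hat q_{-i}(\hat\lambda,y)$ below $\hat q(\hat\lambda)$, and feeding this through the optimality inequality $\mathcal{L}_{n+1}(\mu_i,\hat q_{-i}(\mu_i,y))\le\mathcal{L}_{n+1}(\hat\lambda,\hat q_{-i}(\hat\lambda,y))$ together with Facts A--B and uniqueness constrains how $\mu_i$ can differ from $\hat\lambda$. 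The aim is to build an injection from the offending folds into the at-most-$(k-1)$ calibration points whose $\hat\lambda$-score lies below $\hat q(\hat\lambda)$; making this charging map precise (and tracking ties via the tie-breaking convention) is the technical heart of the argument, and is where the full-conformal/meta-score reformulation promised after Theorem~\ref{thm:LOO-validity} does the organizing work.
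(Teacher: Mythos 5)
Your first inclusion (for \texttt{ModSel-CP}) is correct, and it is essentially the paper's own argument in direct rather than contrapositive form: your ``Fact A'' quantile equivalence, the chain $\mathcal{L}_{n+1}(\mu,\hat q(\mu))=\mathcal{L}_{n+1}(\mu,\hat q(\mu,y))\le\mathcal{L}_{n+1}(\hat\lambda,\hat q(\hat\lambda,y))\le\mathcal{L}_{n+1}(\hat\lambda,\hat q(\hat\lambda))$, and the appeal to uniqueness to force $\mu=\hat\lambda$ are exactly the steps in the paper's proof. No issues there.

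The second inclusion (for \texttt{ModSel-CP-LOO}) has a genuine gap, and you identify it yourself: the bound $\#\{i\in[n]:R_i<c\}\le k-1$ is never proved. Your proposed route---an ``injection/charging map'' from offending folds into calibration points with small $\hat\lambda$-score---is only a plan; the sentence ``making this charging map precise \dots is the technical heart of the argument'' concedes that the step which actually carries the difficulty (folds where $\hat\lambda_{-i}(y)\neq\hat\lambda$) is missing. The reduction you set up is also \emph{stronger} than what membership in~\eqref{defn:LOO} requires (you compare the $R_i$ to $c$, whereas it suffices to compare them to $R_{n+1}\le c$), which makes the missing step harder, not easier. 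The paper avoids any global counting over the varying models by proving a per-index transfer lemma, stated as~\eqref{eqn:LOO_score_compare_step_imply}: for every $i$ with $S^{\hat\lambda}(X_{n+1},y)\le S^{\hat\lambda}(X_i,Y_i)$, one has $\mathcal{L}_{n+1}(\hat\lambda,S^{\hat\lambda}(X_{n+1},y))\le \mathcal{L}_{n+1}(\hat\lambda_{-i}(y),S^{\hat\lambda_{-i}(y)}(X_i,Y_i))$. Its proof is by contradiction and is precisely the ingredient your sketch lacks: if the conclusion fails then $\hat\lambda_{-i}(y)\neq\hat\lambda$, so uniqueness of $\hat\lambda$ forces the $(1-\alpha)(1+1/n)$-quantile of the losses $\{\mathcal{L}_{n+1}(\hat\lambda_{-i}(y),S^{\hat\lambda_{-i}(y)}(X_j,Y_j))\}_{j\in[n]}$ to lie strictly above $c$ (the paper's~\eqref{eqn:quantile_is_above}), while LOO-optimality of $\hat\lambda_{-i}(y)$ combined with the hypothesis $S^{\hat\lambda}(X_{n+1},y)\le S^{\hat\lambda}(X_i,Y_i)$ forces the quantile of the same losses with the $i$-th point swapped for the test point to lie at or below $c$ (the paper's~\eqref{eqn:quantile_is_below}); since the two multisets differ in a single element, this pins $R_i>c$, contradicting $R_i<R_{n+1}\le c$. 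With this lemma, the count is done once, for the fixed model $\hat\lambda$ (from $S^{\hat\lambda}(X_{n+1},y)\le\hat q(\hat\lambda)$, enough indices satisfy $S^{\hat\lambda}(X_i,Y_i)\ge S^{\hat\lambda}(X_{n+1},y)$), and transferred pointwise to the $R_i$; the ``injection'' you were seeking turns out to be the identity map, but establishing that it works on folds with $\hat\lambda_{-i}(y)\neq\hat\lambda$ requires exactly this two-quantile comparison, which your proposal does not supply.
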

\noindent Note that the assumption that $\hat{\lambda}$ is the unique minimizer in~\eqref{eqn:define_hat_lambda}  is mild in many common settings---e.g., in a regression setting with continuously distributed data, where a distinct class of models will all have distinct residuals almost surely.

While our two methods, \texttt{ModSel-CP} and \texttt{ModSel-CP-LOO}, share the property that both are strictly more conservative than \texttt{YK-baseline} (under a mild assumption), they differ in the exact way that they enlarge the set from \texttt{YK-baseline} to achieve validity---in Appendix~\ref{sec:ENSvsLOO-additional}, we compare the construction of \texttt{ModSel-CP} and \texttt{ModSel-CP-LOO} from two perspectives to illustrate such differences, which may pave the way for further understanding the performance difference between the two methods.

\section{Asymptotically optimal width}\label{sec: efficiency_res}
While the results above establish the (finite-sample) validity of our prediction sets, we now turn to the question of whether the \texttt{ModSel-CP} and \texttt{ModSel-CP-LOO} methods perform well in terms of the width of the resulting prediction sets. 
We will now verify theoretically that our methods offer prediction sets that have asymptotically optimal width, when the model class $\{S^\lambda : \lambda\in\Lambda\}$ satisfies certain degree of regularity. 
In addition, our experiments in Section~\ref{sec:experiment} below will examine the empirical effectiveness of our methods.

In this section, for clarity of the exposition, we will restrict our attention to a special case: the residual score, $S^\lambda(x,y) = |y-f_\lambda(x)|$ for a class of pretrained models $\{f_\lambda: \lambda\in\Lambda\}$. In Appendix~\ref{app:efficiency} we will generalize these results to a broader setting.
 
\subsection{Theoretical guarantee}
Recall the \texttt{YK-baseline} method~\eqref{eqn:YK-baseline_define} can be rewritten as (under the residual score),
\[\widehat{C}_{\texttt{YK-baseline}}(X_{n+1}) = C^{\hat\lambda}_{\hat{q}(\hat\lambda)}(X_{n+1}) = f_{\hat\lambda}(X_{n+1}) \pm \hat{q}(\hat\lambda).\]
That is, this is the prediction set returned by running split conformal prediction with the model $\lambda=\hat\lambda$. As we have seen, ignoring the fact that $\hat\lambda$ is chosen in a data-dependent way can lead to undercoverage in finite samples---the resulting prediction set may be too small. Thus, to ensure an assumption-free finite-sample coverage guarantee, it is necessary to use a finite-sample valid method such as \texttt{ModSel-CP}, though it may come at a cost of being more conservative than \texttt{YK-baseline}. However, the asymptotic result below reassures us that, once we are in a regime where asymptotics are a good approximation to the data, this cost becomes negligible, i.e., with high probability, the  \texttt{ModSel-CP} and \texttt{ModSel-CP-LOO}
prediction sets are contained in the set
\[C^{\hat\lambda}_{\hat{q}(\hat\lambda) + \kappa}(X_{n+1}) = f_{\hat\lambda}(X_{n+1}) \pm \big(\hat{q}(\hat\lambda) + \kappa\big)\]
for some small $\kappa>0$,
which is a slight inflation of the \texttt{YK-baseline} method.

\paragraph{Assumptions and notation.}
Before presenting our results, we begin with a few assumptions and definitions that allow us to establish regularity conditions on the class of models $\Lambda$. Throughout this section, we will assume that the data points are i.i.d.\ (rather than exchangeable, which is sufficient for the validity results of Section~\ref{sec:method}), and will write $P$ to denote the distribution from which data points $(X_i,Y_i)$ are drawn.

\begin{definition}[Approximately best models for residual score]\label{defn: approx_best_mdl_res}
	Fix any $\Delta>0$. Then the collection of $\Delta$-approximately-best models associated with the model class $\Lambda$ is defined as
	\begin{equation*}
		\Lambda_{\ast}(\Delta) =  \left\{\lambda \in \Lambda: Q_{1-\alpha-\Delta}(\lambda)\leq  \min_{\lambda'\in\Lambda}Q_{1-\alpha+\Delta}(\lambda')\right\},
	\end{equation*}
	where, for each $\lambda\in\Lambda$, 
	$Q_\tau(\lambda)$  denotes the $\tau$-quantile of $S^\lambda(X,Y)$ for $(X,Y)\sim P$. 
\end{definition}
\noindent In other words, $\Lambda_{\ast}(\Delta)$ is a subset of $\Lambda$, identifying all models $\lambda$ that are approximately optimal in terms of the resulting residual of the model. 

Next, we define
\[\kappa(\Delta) = \max_{\lambda,\lambda'\in\Lambda_*(\Delta)} |f_\lambda(X_{n+1}) - f_{\lambda'}(X_{n+1})|, \]
and
\[\kappa_{\texttt{LOO}}(\Delta) = \max_{\lambda,\lambda'\in\Lambda_*(\Delta)}\max_{i=1,\dots,n}|f_\lambda(X_i) - f_{\lambda'}(X_i)|,\]
which we will use in the results for \texttt{ModSel-CP} and \texttt{ModSel-CP-LOO} respectively.
We will discuss these definitions in more detail below, but to give some brief intuition: our theory will rely on $\kappa(\Delta)$ and $\kappa_{\texttt{LOO}}(\Delta)$ being small when $\Delta$ is small, which essentially means that we must have $f_{\lambda}\approx f_{\lambda'}$ for any two models $f_{\lambda},f_{\lambda'}$ that are both approximately optimal in terms of their residual error.

We will need one more piece of notation: we define
\[\mathcal{R}_n(\Lambda) = \sup_{\{t_\lambda\}_{\lambda\in\Lambda}}\mathbb{E}\left[\sup_{\lambda\in\Lambda}\left|\frac{1}{n}\sum_{i=1}^n \xi_i \mathbf{1}\{|Y_i - f_\lambda(X_i)| \leq t_\lambda\}\right|\right], \]
where the expectation is taken with respect to $(X_i,Y_i)\stackrel{\textnormal{i.i.d.}}{\sim}P$ and $\xi_i\stackrel{\textnormal{i.i.d.}}{\sim}\textnormal{Unif}\{\pm 1\}$.  This is the Rademacher complexity \citep{shalev2014understanding} of the class of functions $\left\{\mathbf{1} \left\{|y-f_\lambda(x)| \le t_\lambda \right\}: \lambda \in \Lambda\right\}$ (or rather, the maximum Rademacher complexity, over any choice of the thresholds $\{t_\lambda\}_{\lambda\in\Lambda}$).

\paragraph{Theoretical bound.} We are now ready to state our result for width optimality of the residual score class. (See Theorem~\ref{thm:efficiency_general_} in the Appendix for a more general version.)

\begin{theorem}\label{thm:efficiency_res_}
Let $(X_i,Y_i)\stackrel{\rm i.i.d.}{\sim}P$.
Assume that $|Y-f_\lambda(X)|$ has a continuous distribution under $(X,Y)\sim P$, for each $\lambda\in\Lambda$. 
Let $\Delta_n = 2\mathcal{R}_n(\Lambda)+   \sqrt{\frac{\log(1/\delta)}{2n}}+\frac{2}{n}.$ Then it holds that,
\[\mathbb{P} \left\{\widehat{C}_{\textnormal{\texttt{ModSel-CP}}} (X_{n+1}) \subseteq f_{\hat\lambda}(X_{n+1})\pm \left(\hat{q}(\hat\lambda) + \kappa(\Delta_n)\right) \right\} \ge 1-2\delta.\]
 \[\mathbb{P} \left\{ \widehat{C}_{\textnormal{\texttt{ModSel-CP-LOO}}} (X_{n+1}) \subseteq  f_{\hat\lambda}(X_{n+1})\pm \left(\hat{q}(\hat\lambda) + \kappa_{\textnormal{\texttt{LOO}}}(\Delta_n)\right) \right\} \ge 1-2\delta.\]
\end{theorem}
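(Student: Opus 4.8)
The plan is to reduce each inclusion to a single deterministic event---that every model which could possibly be selected lies in $\Lambda_*(\Delta_n)$---and then to read off the radius inflation from the definition of $\kappa(\Delta_n)$ (resp.\ $\kappa_{\texttt{LOO}}(\Delta_n)$). For \texttt{ModSel-CP} I would start from the explicit union-of-intervals representation in the residual-score case, \eqref{eqn:ModSel_simple_res}, which shows $\widehat{C}_{\texttt{ModSel-CP}}(X_{n+1})$ is contained in the union, over all $\lambda$ with $\hat{q}_-(\lambda)\le\hat{q}(\hat\lambda)$, of the intervals centered at $f_\lambda(X_{n+1})$ with common radius $\hat{q}(\hat\lambda)$. (Here, since $\mathcal{L}(\lambda,q;\cdot)=2q$ for the residual score, $\hat\lambda$ minimizes $\hat{q}(\cdot)$ and $\hat{q}(\hat\lambda)=\min_{\lambda'}\hat{q}(\lambda')$.) If every such competing $\lambda$---as well as $\hat\lambda$ itself, which lies in this set since $\hat{q}_-(\hat\lambda)\le\hat{q}(\hat\lambda)$---belongs to $\Lambda_*(\Delta_n)$, then $|f_\lambda(X_{n+1})-f_{\hat\lambda}(X_{n+1})|\le\kappa(\Delta_n)$ by definition, so each interval, and hence the whole union, is contained in $f_{\hat\lambda}(X_{n+1})\pm(\hat{q}(\hat\lambda)+\kappa(\Delta_n))$. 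Thus the theorem for \texttt{ModSel-CP} reduces to showing $\{\lambda:\hat{q}_-(\lambda)\le\hat{q}(\hat\lambda)\}\subseteq\Lambda_*(\Delta_n)$ with probability $\ge1-2\delta$.

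To control this event I would pass through a uniform concentration bound on the empirical CDFs of the scores. Writing $\hat{F}_\lambda(t)=\frac1n\sum_i\mathbf{1}\{|Y_i-f_\lambda(X_i)|\le t\}$ and $F_\lambda(t)=\mathbb{P}\{|Y-f_\lambda(X)|\le t\}$, the key lemma is that for any fixed collection of thresholds $\{t_\lambda\}$, each of the one-sided deviations $\sup_\lambda(\hat{F}_\lambda(t_\lambda)-F_\lambda(t_\lambda))$ and $\sup_\lambda(F_\lambda(t_\lambda)-\hat{F}_\lambda(t_\lambda))$ is at most $\epsilon_n:=2\mathcal{R}_n(\Lambda)+\sqrt{\log(1/\delta)/(2n)}$ with probability $\ge1-\delta$. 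This follows from McDiarmid's inequality (the supremum has bounded differences $1/n$) together with symmetrization; the empirical Rademacher average arising for the collection $\{t_\lambda\}$ is dominated by $\mathcal{R}_n(\Lambda)$ precisely because the latter takes a supremum over all threshold collections. The important structural point is that I only ever need $\hat{F}_\lambda$ at \emph{deterministic} thresholds---the population quantiles $Q_{\tau_--\epsilon_n}(\lambda)$ and $Q_{\tau_++\epsilon_n}(\lambda)$, where $\tau_+=(1-\alpha)(1+1/n)$ and $\tau_-=\tau_+-1/n$---so no supremum over $t$ is required, and a union bound over the two one-sided events on these two collections yields probability $\ge1-2\delta$.

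On this event I would sandwich the empirical quantiles. Evaluating $\hat{F}_\lambda\le F_\lambda+\epsilon_n$ at $Q_{\tau_--\epsilon_n}(\lambda)$ and using continuity of the score distribution gives $\hat{q}_-(\lambda)\ge Q_{\tau_--\epsilon_n}(\lambda)$, while evaluating $F_{\lambda'}-\hat{F}_{\lambda'}\le\epsilon_n$ at $Q_{\tau_++\epsilon_n}(\lambda')$ gives $\hat{q}(\lambda')\le Q_{\tau_++\epsilon_n}(\lambda')$ for every $\lambda'$, hence $\hat{q}(\hat\lambda)\le\min_{\lambda'}Q_{\tau_++\epsilon_n}(\lambda')$. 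Consequently any $\lambda$ with $\hat{q}_-(\lambda)\le\hat{q}(\hat\lambda)$ obeys $Q_{\tau_--\epsilon_n}(\lambda)\le\min_{\lambda'}Q_{\tau_++\epsilon_n}(\lambda')$, and since the $O(1/n)$ gaps between $\tau_\pm$ and $1-\alpha$ are absorbed by the $2/n$ term in $\Delta_n$ (so that $\tau_--\epsilon_n\ge1-\alpha-\Delta_n$ and $\tau_++\epsilon_n\le1-\alpha+\Delta_n$), monotonicity of $Q_\tau(\lambda)$ in $\tau$ upgrades this to $Q_{1-\alpha-\Delta_n}(\lambda)\le\min_{\lambda'}Q_{1-\alpha+\Delta_n}(\lambda')$, i.e.\ $\lambda\in\Lambda_*(\Delta_n)$. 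This settles the \texttt{ModSel-CP} bound.

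For \texttt{ModSel-CP-LOO} I would work from the explicit form \eqref{defn:LOO_altdef}, in which the set is $\{y:|y-f_{\hat\lambda}(X_{n+1})|\le\mathrm{Quantile}_{\tau_+}(\{|Y_i-f_{\hat\lambda_{-i}(y)}(X_i)|\}_{i\in[n]})\}$. By monotonicity of the quantile under coordinatewise domination, the target inclusion reduces to the per-coordinate bound $|Y_i-f_{\hat\lambda_{-i}(y)}(X_i)|\le|Y_i-f_{\hat\lambda}(X_i)|+\kappa_{\texttt{LOO}}(\Delta_n)$, which is the triangle inequality once $\hat\lambda_{-i}(y),\hat\lambda\in\Lambda_*(\Delta_n)$. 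The main obstacle is proving $\hat\lambda_{-i}(y)\in\Lambda_*(\Delta_n)$ simultaneously for all $i\in[n]$ and all $y\in\mathcal{Y}$, even though the leave-one-out selection depends on the arbitrary hypothesized value $y$ through the test score $|y-f_\lambda(X_{n+1})|$. The observation that dissolves this difficulty is that swapping $(X_i,Y_i)$ for $(X_{n+1},y)$ changes the relevant empirical CDF by at most $1/n$ in sup-norm, deterministically and uniformly in $(i,y)$; so I would absorb this extra $1/n$ into the concentration slack and rerun the sandwiching argument of the previous paragraph at the deterministic thresholds $Q_{\tau_+\pm(\epsilon_n+1/n)}(\lambda)$---it is here that the $2/n$ budget in $\Delta_n$ is used essentially in full, since now both level gaps are of order $(1-\alpha)/n+1/n$. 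This yields $\hat\lambda_{-i}(y)\in\Lambda_*(\Delta_n)$ for all $(i,y)$ on an event of probability $\ge1-2\delta$, and the remaining bookkeeping (the level inequalities, and $\hat\lambda\in\Lambda_*(\Delta_n)$) is exactly as in the \texttt{ModSel-CP} case.
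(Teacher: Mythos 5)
Your proposal is correct and follows essentially the same route as the paper's proof: reduce each inclusion to the event that every competing (resp.\ leave-one-out selected) model lies in $\Lambda_*(\Delta_n)$, establish that event via Rademacher-complexity-plus-McDiarmid concentration of the empirical score quantiles evaluated at the deterministic population quantiles, and absorb the $O(1/n)$ level shifts into the $2/n$ term of $\Delta_n$. The only organizational difference is that the paper proves a general-loss version (Theorem~\ref{thm:efficiency_general_}) and formalizes the leave-one-out step through Proposition~\ref{prop:Mcal_i} (the sets $\mathcal{M}_i$ with bounds $l_i,u_i$), whereas you work directly in the residual-score case and use the coarser deterministic sandwich $\hat{q}_-(\lambda)\le\hat{q}_{-i}(\lambda,y)\le\hat{q}_+(\lambda)$, which is the same underlying idea.
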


\subsection{A closer look at the regularity conditions}\label{sec:examples_for_optimality}
Theorem~\ref{thm:efficiency_res_} gives an upper bound on the  prediction sets $\widehat{C}_{\texttt{ModSel-CP}}$ and $\widehat{C}_{\texttt{ModSel-CP-LOO}}$, but these upper bounds are meaningful (i.e., are close to the width of $\widehat{C}_{\textnormal{\texttt{YK-baseline}}}$) only under certain conditions. Specifically, we need $\Delta_n$ to be small (i.e., the Rademacher complexity $\mathcal{R}_n(\Lambda)$ must be small), and we need the continuity terms $\kappa(\Delta_n)$ or $\kappa_{\texttt{LOO}}(\Delta_n)$ to be small as well. In what settings will this hold?

In what follows, we examine an example:
a setting where $\Delta_n$ is vanishing as $n \to \infty$, and both $\kappa(\Delta_n)$ and $\kappa_{\texttt{LOO}}(\Delta_n)$ are indeed small with high probability.
In this example, we will consider a Gaussian linear regression setting, where $\Lambda$ is a collection of pretrained linear models---for instance, obtained by running the Lasso (on a separate pretraining data set) with a range of different values for the penalty parameter.

\begin{proposition}\label{prop:positive_example} 
    Consider a Gaussian linear model setting, where $(X,Y)\in\mathbb{R}^d\times\mathbb{R}$ has the distribution $X\sim \mathcal{N}(0,\Sigma), \ Y\mid X \sim \mathcal{N}(X^\top\lambda_{\rm true},\sigma^2).$
    Consider a class of $s$-sparse linear models $\{f_\lambda : \lambda\in\Lambda, \|\lambda\|_0 \le s\}$, where $\Lambda\subseteq\mathbb{R}^d$ is any collection of coefficient values for models $f_\lambda(x) = x^\top\lambda$ (note in particular that we do not assume $\lambda_{\rm true}\in\Lambda$). Then,  there exists some constant $C>0$ such that
    \[\mathcal{R}_{n}(\Lambda) \le C \cdot  \sqrt{\frac{s \mathrm{log}(ed/s)}{n}}.\]
    Moreover, for any $\Delta\in(0,1-\alpha)$ and $\delta\in(0,1)$, it holds that
    \[\mathbb{P}  \left\{\kappa(\Delta) \le 8\sigma \sqrt{\frac{\eta_{\max}(\Sigma)}{\eta_{\min}(\Sigma)}}\cdot\sqrt{(1+\gamma^2)(1+\phi)^2 - 1}\cdot \sqrt{ s\mathrm{log}\left(\frac{ed}{s} \right) + \mathrm{log}(1/\delta)}\right\} \ge 1-\delta,\]
     where $\eta_{\min}(\Sigma)$ and $\eta_{\max}(\Sigma)$ denote the smallest and the largest eigenvalue of $\Sigma$, and where
    \[ \gamma^2 = \min_{\lambda\in\Lambda} \frac{(\lambda - \lambda_{\rm true})^\top \Sigma (\lambda - \lambda_{\rm true})}{\sigma^2}, \]
    and
    \[\phi = \frac{\Phi^{-1}\left(1 - \frac{\alpha -\Delta}{2}\right)}{\Phi^{-1}\left(1 - \frac{\alpha + \Delta}{2}\right)} - 1 = \mathcal{O}(\Delta),\]
    for $\Phi$ denoting the CDF of the standard normal distribution.
    Similarly, it holds that
    \[\mathbb{P}  \left\{\kappa_{\textnormal{\texttt{LOO}}}(\Delta) \le 8\sigma \sqrt{\frac{\eta_{\max}(\Sigma)}{\eta_{\min}(\Sigma)}}\cdot\sqrt{(1+\gamma^2)(1+\phi)^2 - 1}\cdot \sqrt{ s\mathrm{log}\left(\frac{ed}{s} \right) + \mathrm{log}(n/\delta)}\right\} \ge 1-\delta.\]
\end{proposition}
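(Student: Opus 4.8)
The plan is to exploit the Gaussianity of the model to obtain an exact formula for each quantile $Q_\tau(\lambda)$, translate membership in $\Lambda_*(\Delta)$ into an explicit ellipsoidal constraint on $\lambda-\lambda_{\rm true}$, and then control $\kappa(\Delta)$ and $\kappa_{\textnormal{\texttt{LOO}}}(\Delta)$ by combining Cauchy--Schwarz in the $\Sigma$-geometry with standard Gaussian norm concentration. I would begin by computing the law of the score: writing $Y = X^\top\lambda_{\rm true}+\varepsilon$ with $\varepsilon\sim\mathcal{N}(0,\sigma^2)$ independent of $X\sim\mathcal{N}(0,\Sigma)$, the residual $Y-X^\top\lambda = X^\top(\lambda_{\rm true}-\lambda)+\varepsilon$ is centered Gaussian with variance $v_\lambda^2 := \sigma^2 + (\lambda-\lambda_{\rm true})^\top\Sigma(\lambda-\lambda_{\rm true})$. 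Hence $S^\lambda(X,Y)=|Y-X^\top\lambda|\overset{d}{=}v_\lambda|Z|$ for $Z\sim\mathcal{N}(0,1)$, so $Q_\tau(\lambda)=v_\lambda\,\Phi^{-1}\big(\tfrac{1+\tau}{2}\big)$. Since $\Phi^{-1}(\tfrac{1+\tau}{2})>0$ for $\tau>0$, the quantile is an increasing function of $v_\lambda$, and $\min_{\lambda'\in\Lambda}v_{\lambda'}^2 = \sigma^2(1+\gamma^2)$ by the definition of $\gamma^2$.

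Next I would unpack the definition of $\Lambda_*(\Delta)$. The condition $Q_{1-\alpha-\Delta}(\lambda)\le\min_{\lambda'}Q_{1-\alpha+\Delta}(\lambda')$ becomes $v_\lambda\,\Phi^{-1}(1-\tfrac{\alpha+\Delta}{2})\le\sigma\sqrt{1+\gamma^2}\,\Phi^{-1}(1-\tfrac{\alpha-\Delta}{2})$, i.e. $v_\lambda\le\sigma\sqrt{1+\gamma^2}\,(1+\phi)$ with $\phi$ exactly as defined in the statement; note $\phi>0$ because $\Phi^{-1}$ is increasing and both of its arguments exceed $1/2$ whenever $\Delta<1-\alpha$. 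Squaring and substituting the formula for $v_\lambda^2$ then yields the clean characterization that every $\lambda\in\Lambda_*(\Delta)$ satisfies $(\lambda-\lambda_{\rm true})^\top\Sigma(\lambda-\lambda_{\rm true})\le B^2:=\sigma^2\big[(1+\gamma^2)(1+\phi)^2-1\big]$. The auxiliary claim $\phi=\mathcal{O}(\Delta)$ follows from a first-order expansion of $\Phi^{-1}$ around $1-\alpha/2$.

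Finally I would bound the width gaps. For any $\lambda,\lambda'\in\Lambda_*(\Delta)$ and any point $x$, Cauchy--Schwarz in the inner product induced by $\Sigma$ gives $|x^\top(\lambda-\lambda')|\le\|\Sigma^{-1/2}x\|\,\big(\|\Sigma^{1/2}(\lambda-\lambda_{\rm true})\|+\|\Sigma^{1/2}(\lambda'-\lambda_{\rm true})\|\big)\le 2B\,\|\Sigma^{-1/2}x\|$, using the ellipsoidal bound from the previous step. Taking $x=X_{n+1}$ gives $\kappa(\Delta)\le 2B\,\|\Sigma^{-1/2}X_{n+1}\|$, and taking the maximum over $x\in\{X_1,\dots,X_n\}$ gives $\kappa_{\textnormal{\texttt{LOO}}}(\Delta)\le 2B\max_{i\le n}\|\Sigma^{-1/2}X_i\|$. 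Since $\Sigma^{-1/2}X_i\sim\mathcal{N}(0,I_d)$ and $z\mapsto\|z\|$ is $1$-Lipschitz, Gaussian (Lipschitz) concentration together with $\mathbb{E}\|\Sigma^{-1/2}X_{n+1}\|\le\sqrt{d}$ yields $\|\Sigma^{-1/2}X_{n+1}\|\le\sqrt{d}+\sqrt{2\log(1/\delta)}$ with probability at least $1-\delta$, while a union bound over the $n$ calibration points replaces $\log(1/\delta)$ by $\log(n/\delta)$ for the LOO term. Multiplying through by $2B$ recovers precisely the two stated bounds.

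The calculations are all routine; the only step requiring genuine care is the second one, where the abstract quantile condition defining $\Lambda_*(\Delta)$ must be converted into an explicit Mahalanobis ball. The structural fact that makes everything collapse is that, under the Gaussian model, $Q_\tau(\lambda)$ factorizes as a fixed scalar $\Phi^{-1}(\tfrac{1+\tau}{2})$ times the standard deviation $v_\lambda$; consequently the quantile ordering of models coincides with their ordering by estimation error, and the ratio of the two relevant Gaussian quantiles is exactly the factor $1+\phi$, so that $\kappa$ degrades only through the $\Delta$-dependence of $\phi$ together with the benign $\sqrt{d}+\sqrt{2\log(\cdot/\delta)}$ scaling of the design norm.
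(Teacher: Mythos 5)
Your proposal is correct and follows essentially the same route as the paper's proof: compute the exact Gaussian law of the score to get $Q_\tau(\lambda) = v_\lambda\,\Phi^{-1}((1+\tau)/2)$, characterize $\Lambda_*(\Delta)$ as the Mahalanobis ball $\|\lambda-\lambda_{\rm true}\|_\Sigma \le \sigma\sqrt{(1+\gamma^2)(1+\phi)^2-1}$, bound $|f_\lambda(x)-f_{\lambda'}(x)|$ by Cauchy--Schwarz and the triangle inequality through $\lambda_{\rm true}$, and finish with $\chi_d$ concentration (your Gaussian Lipschitz argument is exactly the paper's cited fact that $\chi_d$ is $1$-subgaussian) plus a union bound for the LOO case. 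No gaps.
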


In particular, Proposition~\ref{prop:positive_example} implies that
\[\kappa(\Delta) = \mathcal{O}_P\left(\sqrt{s\mathrm{log}\left(\frac{ed}{s} \right)} \cdot (\gamma + \Delta)\right),\quad \kappa_{\textnormal{\texttt{LOO}}}(\Delta) = \mathcal{O}_P\left(\sqrt{s\mathrm{log}\left(\frac{ed}{s} \right) + \log(n)} \cdot (\gamma + \Delta)\right).\]
Recall from above that, in the setting of a $d$-dimensional $s$-sparse linear model, we can take $\Delta = \Delta_n \propto \sqrt{\frac{s\mathrm{log}\left(ed/s \right)}{n}}$.
If $\lambda_{\rm true}\in\Lambda$ (and so $\gamma=0$), then we see that
\[\kappa(\Delta_n) = \mathcal{O}_P\left(\frac{s\mathrm{log}\left(ed/s \right)}{\sqrt{n}}\right),\quad \kappa_{\textnormal{\texttt{LOO}}}(\Delta_{n}) = \mathcal{O}_P\left(\frac{s\mathrm{log}\left(ed/s \right) + \log(n)}{\sqrt{n}}\right),\]
i.e., the bounds are vanishing as long as $n\gg \left(s\mathrm{log}(ed/s) \right)^2$.
If instead $\gamma\approx 0$---i.e., $\Lambda$ might not contain the true coefficient vector $\lambda_{\rm true}$, but contains some good approximation---then these upper bounds can be very small as $\Delta\to 0$. For example, if $\|\lambda_{\textnormal{true}}\|_0 = s$ and when $\Lambda$ is obtained by running the Lasso, then the typical rate of estimation is given by $\sqrt{\frac{s\mathrm{log}(d)}{n_{\textnormal{train}}}}$. In other words, $\gamma$ is typically $\mathcal{O} \left(\sqrt{\frac{s\mathrm{log}(d)}{n_{\textnormal{train}}}} \right)$. Since it is often the case where $n_{\textnormal{train}} = n$, i.e., the entire dataset is split into two subsets where one is used for pretraining and the other is used for calibration, we have that $\gamma$ is also vanishing with $n \to \infty$, leading to vanishing $\kappa$ even if the true coefficient $\lambda_{\textnormal{true}}$ is not in the set $\Lambda$. This shows that the regularity conditions required for Theorem~\ref{thm:efficiency_res_} to give a meaningful upper bound will hold in the common setting of linear regression.

\section{Experiments}\label{sec:experiment}
In this section, we study empirically the performance of \texttt{ModSel-CP} and \texttt{ModSel-CP-LOO}, comparing them to the existing methods of \citet{yang2024selection} (namely, \texttt{YK-baseline} and \texttt{YK-split}, along with the adjusted version \texttt{YK-adjust} of the baseline method) to demonstrate their effectiveness in finite samples. Since we are working with settings where the gap between \texttt{ModSel-CP} and the upper bound~\eqref{eqn:ModSel_simple} is at most a set of measure zero, we then implement \texttt{ModSel-CP} using the upper bound \eqref{eqn:ModSel_simple}. For the implementation of \texttt{ModSel-CP-LOO},  see Appendix~\ref{sec:appendix_implement_LOO} for details. Code to reproduce all the experiments is available at \url{https://github.com/RuitingL/ModSel-CP}.

To better assess the improvement in the width of prediction sets, we also  compare with the size of the best model in the class---that is, the minimal width obtained by any single model $\lambda\in\Lambda$ via  conformal prediction, $\min_{\lambda \in \Lambda} \mathbb{E} \left|\widehat{C}^{\lambda}_{\hat{q}(\lambda)}(X_{n+1}) \right|$. Throughout,  we fix $\alpha=0.1$  and take $\mathcal{L}(\lambda,q; x_1,\dots, x_{n+1}) = \frac{1}{n+1}\sum_{i=1}^{n+1} \left|C^{\lambda}_{q}(x_{i}) \right|$. 

\subsection{Simulation}\label{sec:simulation}
For simulations, we consider a regression setting, and simulate i.i.d.\ data from several different distributions. Additional experiments are deferred to the Appendix~\ref{sec:appendxi_addition_sim}.

\subsubsection{Simulation setting}

Fix the dimension of feature $d=300$. Let $Y = X^{\top}\theta + \epsilon,$ where $\theta \in \mathbb{R}^{d}$ and the noise $\epsilon$ is independent of $X$. Fix $\nu = 3$. We run experiments with four different data distributions on $(X,Y)\in\mathbb{R}^d\times\mathbb{R}$:
    \begin{itemize}
        \item \textbf{NormalX $+$ sparse $+$ Gaussian noise:} $X \sim \mathcal{N}(0, \mathbf{I}_d), \quad \epsilon \sim \mathcal{N}(0,1)$, where $\theta_{j} = \mathbf{1}\{j \textnormal{ mod } 20=0\}$ for each $j\in [d]$.

        \item \textbf{NormalX $+$ sparse structure $+$ heavy-tailed noise:} $X \sim \mathcal{N}(0, \mathbf{I}_d), \quad \epsilon \sim t_{\nu}(1)$, where $\theta_{j} = \mathbf{1}\{j \textnormal{ mod } 20=0\}$ for each $j\in [d]$.

        \item \textbf{NormalX $+$ dense structure $+$ Gaussian noise:} $X \sim \mathcal{N}(0, \mathbf{I}_d), \quad \epsilon \sim \mathcal{N}(0,1/d^{2})$, where $ \theta_{j} = \frac{1}{d}$ for each $j\in [d]$.

        \item \textbf{tX $+$ sparse structure $+$ Gaussian noise:} $X \sim t_{\nu}(0,\mathbf{I}_d), \quad \epsilon \sim \mathcal{N}(0,1)$, where $\theta_{j} = \mathbf{1}\{j \textnormal{ mod } 20=0\}$ for each $ j\in [d]$.
    \end{itemize}

\subsubsection{Implementing the methods}
For pretraining the models, we use $n_{\textnormal{train}} = 300$ many data points (independent of the calibration set $(X_1,Y_1),\dots,(X_n,Y_n)$). We consider the residual score class fitted via ridge regression, $\{|y-x^\top\hat\theta_\lambda| : \lambda\in\Lambda\}$. (Experiments conducted with the rescaled residual scores are deferred to the Appendix~\ref{sec:appendix_sim_rescaled_residual}.)
Each $\hat\theta_\lambda$ is obtained  by first uniformly at random selecting $10\%$ features, then fitting a ridge regression (penalty $\eta =0.1$) with the projected data, and finally embedding it back to the $d$-dimensional space (setting the rest of coordinates to be $0$). In other words, each $\lambda\in\Lambda$ corresponds to a (random) selection of 10\% of the features, for inclusion in the model. 
As a final detail, when  implementing the \texttt{YK-split} method, we split the hold-out set in half for selection and calibration separately.

\subsubsection{Results}
For each setting of our experiment, we repeat 5000 independent trials of the simulation. We vary the sample size $n$, and also, the size of the model class, $|\Lambda|$.

Figure~\ref{fig:residual_n100} shows the results for varying model class size $|\Lambda|$ while sample size is fixed at $n=100$, while Figure~\ref{fig:residual_m200} shows the results for varying sample size $n$ while model class size is fixed at $|\Lambda|=200$. The top row of each figure shows the average coverage (with target level $1-\alpha = 90\%$), while the bottom row shows the ratio between the width of prediction set from each method and the smallest conformal prediction set constructed with any single model in the class $\Lambda$. All plots of the results show the average over 5000 trials, with standard errors indicated via shaded regions. Table summary of the results are in Appendix~\ref{app:table_summary_simulation}.

\begin{figure}  
\centering
\includegraphics[scale = 0.32]{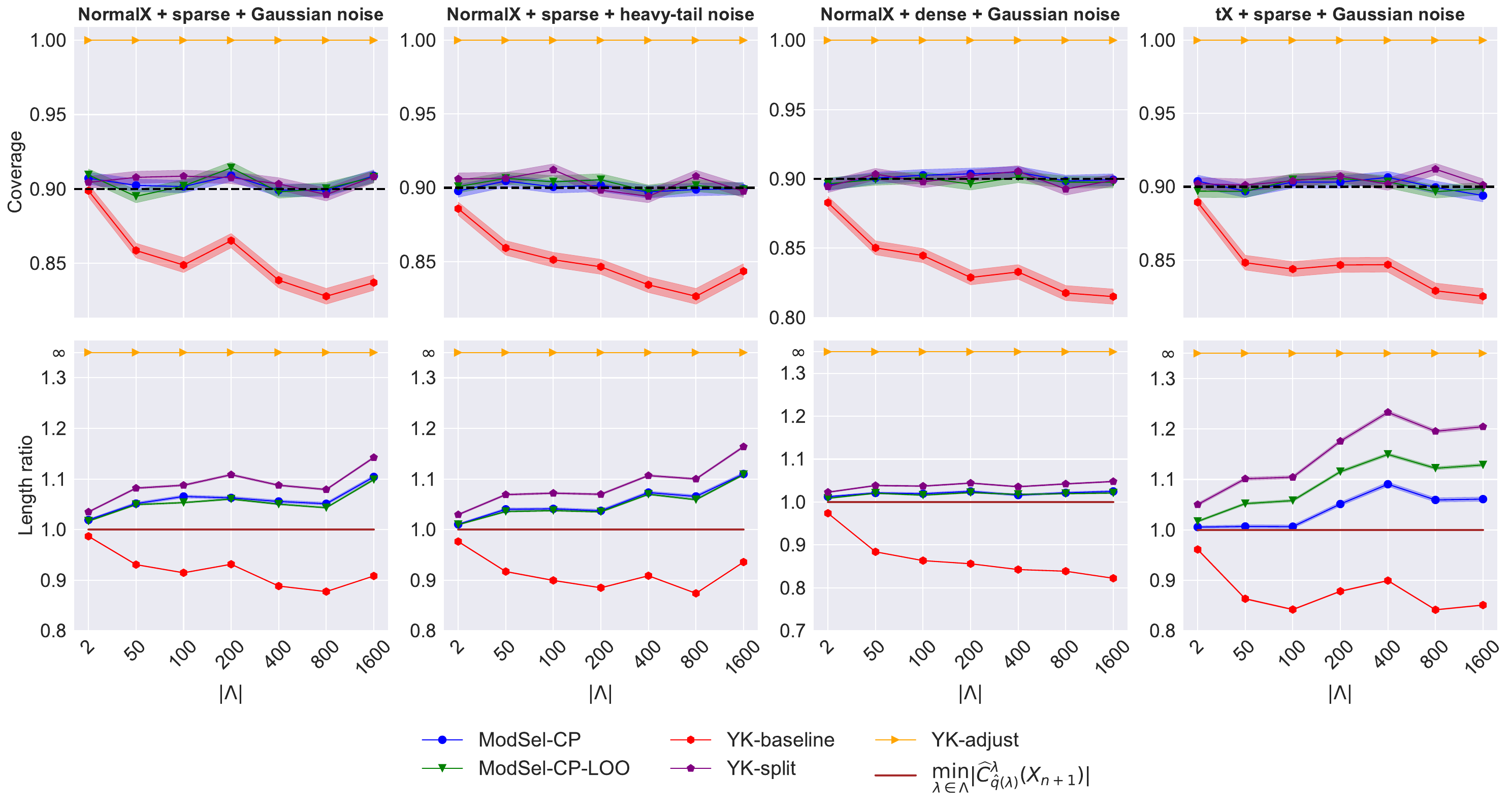}
\caption{Residual score, $n = 100$, varying $|\Lambda|$.} 
\label{fig:residual_n100}
\vspace{0.7cm}

\includegraphics[scale = 0.32]{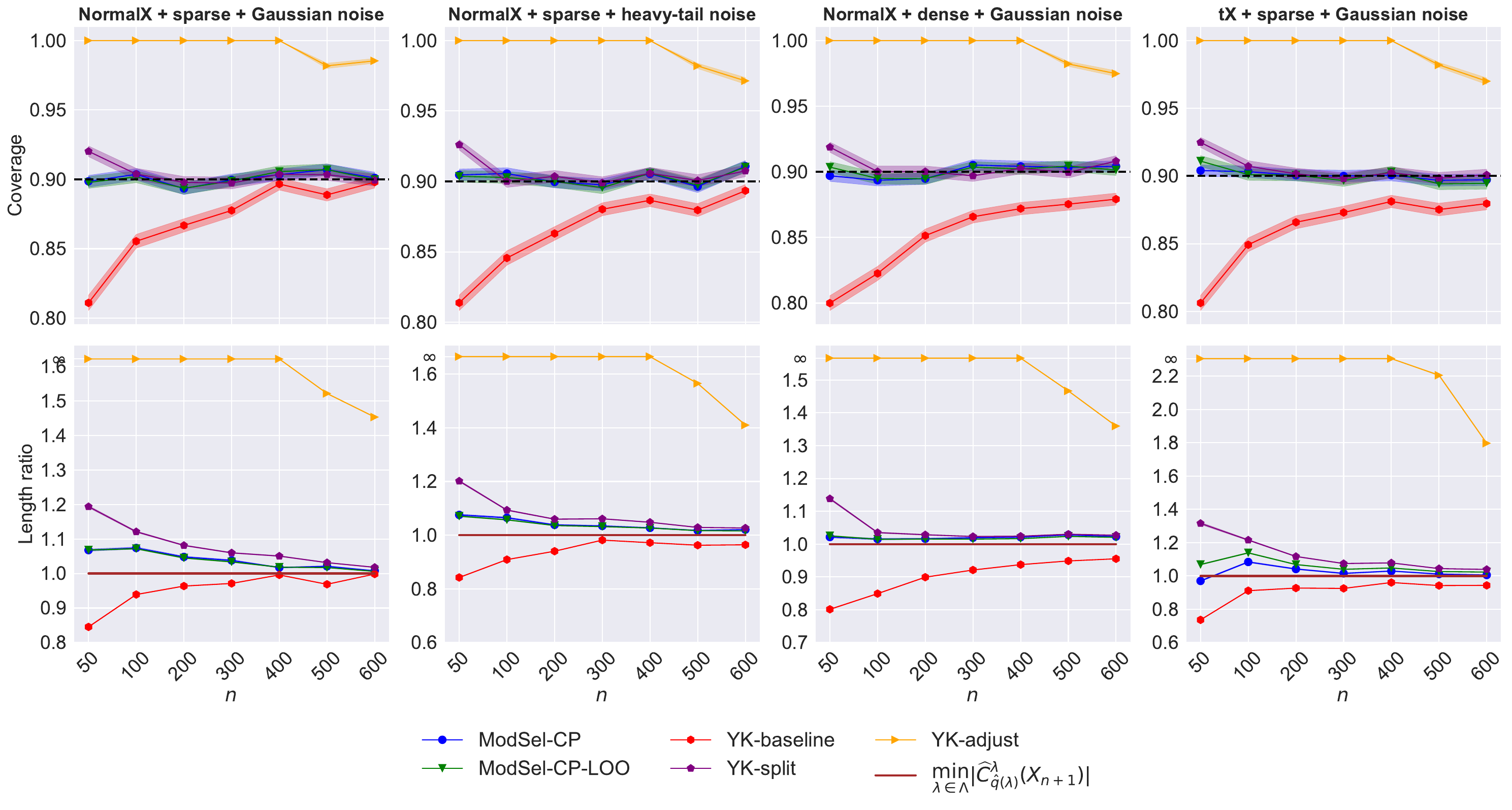}
\caption{Residual score, $|\Lambda|=200$, varying $n$.} 
\label{fig:residual_m200}
\end{figure}

Overall, we observe the following results from our simulation:
\begin{itemize}
    \item All methods except for \texttt{YK-baseline} achieve $1-\alpha$ coverage as expected. In particular, \texttt{YK-baseline} loses coverage significantly in finite samples as the number of models increases (which is as expected, since it does not correct for selection bias), whereas its naively adjusted counterpart, \texttt{YK-adjust}, is severely conservative. It is worth mentioning that, when $n$ is relatively small,  \texttt{YK-adjust} will always return the entire set $\mathcal{Y}$ as the prediction set---this is simply the real line $\mathbb{R}$ in our experiments, i.e., the prediction set has infinite width, even if $|\Lambda|$ is as small as $2$ since the value $n=100$ is not large. (In additional experiments conducted, where  we raise $n$ to $400$, \texttt{YK-adjust} is still very conservative even when its width is finite.)

    \item In terms of the width of prediction sets in finite samples, both \texttt{ModSel-CP} and \texttt{ModSel-CP-LOO} tend to yield smaller prediction sets than \texttt{YK-split} and \texttt{YK-adjust}. 
    Moreover, both \texttt{ModSel-CP} and \texttt{ModSel-CP-LOO} can output prediction sets with width very close to (or even smaller than in some cases) the smallest conformal prediction set constructed with any single model, which is consistent with our theoretical results in Section~\ref{sec: efficiency_res}.

    \item \texttt{ModSel-CP} and \texttt{ModSel-CP-LOO} perform similarly, and generally yield intervals of fairly similar width. Whichever method outputs a smaller prediction set, \texttt{ModSel-CP} or \texttt{ModSel-CP-LOO}, depends on the data generating distribution and the given model class. However, based our experiments and efficiency of implementation, we advocate for \texttt{ModSel-CP} owing to its simplicity.

\end{itemize}

\subsection{Real data example} 
We apply the methods under comparison to the protein structure dataset from UCI repository\footnote{\url{https://archive.ics.uci.edu/dataset/154/protein+data}}. This dataset contains $N=45730$ data points $(X,Y)$ with the dimension of feature $X$ being $d=9$. We first uniformly at random choose $n_{\textnormal{train}} = 300$ from this dataset to train the models. For this experiment, we consider the CQR class, namely $\left\{\max \left(\hat{Q}^{\lambda}_{\beta}(x)-y, y-\hat{Q}^{\lambda}_{1-\beta}(x) \right): \lambda \in \Lambda, \beta \in (0,1) \right\}$. The \texttt{RandomForestQuantileRegressor} function in Python's \texttt{quantile$\_$forest} package is applied to train these quantile regressors, varying \texttt{n$\_$estimator} over $\{100, 200, 300, 400\}$, \texttt{max$\_$feature} over $\left\{\frac{d}{10}, \frac{2d}{10}, \dots, d\right\}$,  and leaving all other parameters at their default values. Along with $10$ values for $\beta$ as an equispaced sequence from $10^{-4}\alpha$ to $4\alpha$, the total number of models is $|\Lambda|=400$. (This experiment setting is a slight modification of the experiment setting in \citet{yang2024selection}'s work.) It is worth pointing out that in this setting where $\beta$ needs to be determined, it is not immediately clear how to apply cross-validation for model selection.

In order to estimate the coverage and the width of the prediction sets, we draw independently $600$ observations without replacement from the rest of dataset (of size $N-n_{\textnormal{train}}$) for $100$ times to create independent versions of the datasets. $500$ of these observations are used to construct prediction sets, i.e. $n=500$, and the remaining $100$ observations are used to estimate the coverage probability. Specifically,
\begin{equation*}
    \textnormal{Empirical marginal coverage} = \frac{1}{100} \sum_{b=1}^{100} \left(\frac{1}{100} \sum_{i=1}^{100} \mathbf{1} \left\{Y_{i}^{(b)} \in C^{(b)}(X_{i}^{(b)}) \right\} \right)
\end{equation*}
\begin{equation*}
    \textnormal{Empirical average width} = \frac{1}{100} \sum_{b=1}^{100} \left(\frac{1}{100} \sum_{i=1}^{100} \left| C^{(b)}(X_{i}^{(b)}) \right| \right)
\end{equation*}
Here, $(X^{(b)}_{i}, Y^{(b)}_{i})$ is the $i$-th test point in the $b$-th version of a dataset, and $C^{(b)}(\cdot)$ is a prediction set constructed from the $b$-th version of a dataset.

\subsubsection{Results}
The result of this experiment is given below,  with standard errors shown in parentheses.
\begin{table}[h!]
    \centering
    \renewcommand{\arraystretch}{2}
    \begin{tabular}{c|c|c|c|c|c}
    \rowcolor{gray!20}
        \ & \texttt{ModSel-CP} & \texttt{ModSel-CP-LOO} & \texttt{YK-baseline} & \texttt{YK-adjust} & \texttt{YK-split}  \\
        \hline
        Coverage & 0.899 \ (0.004) & 0.899 \ (0.004) & 0.881 \ (0.004)& 0.992 \ (0.001) & 0.902 \ (0.004)\\ 
        \hline
        Width & 14.18 \ (0.09) & 14.14 \ (0.08) & 13.56 \ (0.06)& 19.76 \ (0.06) & 14.44 \ (0.10)\\
        \hline
    \end{tabular}
    \caption{Results on the protein structure dataset. Evaluation results of coverage and width are reported in the table, with standard errors for the averages shown in parentheses. }
\end{table}
For comparison, the smallest width obtained by any single model in this experiment is $\min \limits_{\lambda \in \Lambda} \mathbb{E} \left|\widehat{C}^{\lambda}_{\hat{q}(\lambda)}(X_{n+1}) \right| = 14.01$. In general, here we see similar trends as in the simulation setting. The width of the \texttt{ModSel-CP} and \texttt{ModSel-CP-LOO} intervals are similar to the smallest width $14.01$, while \texttt{YK-adjust} is much too conservative, and the width of \texttt{YK-split} is slightly larger as well. In this particular example, the undercoverage of \texttt{YK-baseline} is quite mild relative to the severe selection bias issue in the simulations, but nonetheless there is some loss of coverage.

\section{Discussion and related work}\label{sec:discuss}

\subsection{Related literature}\label{sec:related_literature}
Many previous works have considered the problem of constructing valid prediction sets while optimizing the width of the set.
In the field of conformal prediction, prior work such as \citet{sesia2020comparison, lei2018distribution, burnaev2014efficiency} (among others), suggests that as long as we have a consistent estimator of the underlying distribution, the conformal prediction set will be width-optimal asymptotically. Therefore, optimizing the width of prediction sets ultimately hinges on training a good predictor and conformity score that can accurately capture the underlying data distribution.

To better understand the range of different approaches to this question, we can view a conformal prediction method as following the flowchart illustrated below: any (split) conformal based method needs to train a fitted model (or multiple models) and design the corresponding score function(s) during the pretraining phase, and then use the calibration data to set a threshold for the score as well as to perform model selection if appropriate.

\begin{center}
\resizebox{\linewidth}{!}{
\begin{tikzpicture}[
    node distance=2cm and 2cm,
    block/.style={rectangle, minimum width=3cm, minimum height=2.4cm, text centered, draw=black, font=\normalsize},
    blueblock/.style={block, fill=blue!20},
    greenblock/.style={block, fill=green!25},
    redblock/.style={block, fill=red!25},
    arrow/.style={thick, -{Stealth[scale=1.2]}}
  ]

  \node (a) [blueblock, text width=3.5cm, align=center] {Step (a)\\ \vspace{0.5em}Train predictors};
  \node (b) [blueblock, right=of a, text width=3.5cm, align=center] {Step (b)\\\vspace{0.5em}Choose \\conformity score};
  \node (c) [redblock, right=of b, text width=3.5cm, align=center] {Step (c) \\ \vspace{0.5em}Model selection};
  \node (d) [greenblock, right=of c, text width=3.5cm, align=center] {Step (d)\\\vspace{0.5em}Conformal calibration};

  \begin{pgfonlayer}{background}
    \node [rectangle, rounded corners, fill=yellow!20, draw=none, inner sep=0.7cm, fit=(a)(b), label=below:{\large Pretraining stage using $\mathcal{D}_{\mathrm{train}}$}] {};
  \end{pgfonlayer}
  \begin{pgfonlayer}{background}
    \node [rectangle, rounded corners, fill=yellow!20, draw=none, inner sep=0.7cm, fit=(c)(d), label=below:{\large Calibration stage using $\mathcal{D}_{\mathrm{cal}} \cup \{X_{\mathrm{test}}\}$}] {};
  \end{pgfonlayer}

  \draw [arrow] (a) -- (b);
  \draw [arrow] (b) -- (c);
  \draw [arrow] (c) -- (d);

\end{tikzpicture}
}
\end{center}

With this illustration of the different components of the pipeline for implementing a method within the framework of conformal prediction, we can now broadly categorize various efforts proposed in the literature to optimize width of the resulting intervals:
\begin{itemize}
    \item \textbf{Train a better predictor---improve step (a):} \citet{stutzlearning} consider training an optimal classifier that can yield small conformal prediction sets by simulating conformal calibration during training (step (a)). Their method produces a single, improved classifier---measured by conformal prediction set width, since its training process is specifically designed to align with the goal of subsequent conformal prediction set construction.
    Relatedly, the conditional boosting technique developed in \citep{cherian2024llm} can be seen as a conditionally valid counterpart of such alignment, as they adopted the conformal calibration (step (d)) proposed by \citet{gibbs2025conformal} for set construction.

    \item \textbf{Design a more adaptive conformity score---improve step (b):} By designing scores that can adapt better locally, we can achieve improved conditional validity empirically, and thus overall smaller sets. See, for example, the work of \citet{lei2018distribution, romano2019conformalized, romano2020classification,angelopoulosuncertainty}.

    \item \textbf{Ensemble for a better predictor/score during pretraining:} \citet{xie2024boosted} propose to add an extra boosting step after step (a) and (b) during the pretraining stage, in order to obtain a boosted score function. To certain extent, our work shares a similar idea of ensembling as theirs. However, the key difference is that, in the work of \citet{xie2024boosted}, the boosting step happens during the training stage using only the training data, and the output of this process is a single score ready for the latter calibration (step (d)). In contrast, our work employs model selection to ensemble during the conformal calibration, using the calibration data $\mathcal{D}_{\textnormal{cal}}$ and the test point.

    \item \textbf{Add a model selection step (c) during calibration:}
    As mentioned earlier in Section~\ref{sec:exist_mtd}, \citet{yang2024selection} propose selecting from a set of conformal prediction sets with the smallest width to improve efficiency. However, their methods either suffer from a loss of coverage or exhibit lower statistical efficiency. Our work also contributes from this perspective: including an additional model selection step (c) during the calibration stage to optimize width while maintaining marginal validity.
\end{itemize}

\noindent Moving beyond the field of conformal prediction, \citet{bai2022efficient,fan2023utopia,kiyani2024length,chen2021learning} formulate the problem from the perspective of constrained optimization, i.e. minimizing the average prediction intervals width with the constraint of valid empirical coverage. 
We point out that our work is different from theirs because we exploit the data exchangeability to achieve exact finite-sample coverage without further sample-splitting. Whereas, they only achieve approximate coverage in theory---a PAC type of statement, and require bounded complexity of the model class for the theory to be applicable, since their calibration process involves only the calibration data $\mathcal{D}_{\textnormal{cal}}$ (without using the test point). Notably, although \citet{bai2022efficient} do mention an alternative version of their method that achieves the exact marginal coverage, it requires additional sample-splitting as in \citet{yang2024selection}, which can be costly for a small sample size.

Another related line of work considers merging a collection of prediction sets in a non-conservative way while trying to maintain the validity. For example, \citet{gasparin2024merging,gasparin2024conformal,cherubin2019majority,solari2022multi} use majority vote to combine conformal prediction sets that may have arbitrary dependence.  \citet{gasparin2024merging,gasparin2024conformal} show that the merged set will have at least $1-2\alpha$ marginal coverage, if given sets all have $1-\alpha$ marginal coverage, and the size of the merged set will be no worse than the maximum size of those given sets. Because the primary focus of this line of work is on effective sets aggregation, which does not necessarily minimize the size of the merged set, whereas we are interested in selecting models that can be used to construct small prediction sets, this approach is studying a fundamentally different problem than the one we address.

Lastly, we note that prior work of \citet{liang2023conformal} also explored the idea of incorporating the hypothesized test point into the model selection step (c) as summarized in the flowchart above.
However, our work distinguishes from theirs in terms of the model selection criterion, and more fundamentally the goal of model selection. Specifically, we focus on selecting the model with the minimum conformal prediction set width, which achieves an alignment between selection objective and the latter conformal calibration (set construction). Such alignment results in the simplified alternative expression of \texttt{ModSel-CP}, i.e., the (slightly more conservative) upper bound \eqref{eqn:ModSel_simple}. In contrast, \citet{liang2023conformal} focus on selecting the model with the minimum prediction error, e.g., the empirical mean of square error in the regression setting, which is not fully aligned with the latter conformal prediction set construction that deals primarily with quantile thresholding.

\subsection{Conclusions and future direction}
Constructing a valid prediction set while keeping its width as small as possible is vital for effective prediction task, as it enhances the precision and reliability of predictions. In this paper, we propose novel, easy-to-implement methods to address this problem, showing our methods have finite-sample validity without additional data-splitting and demonstrating their effectiveness both theoretically and empirically.

The results of this work suggest a number of potential directions for further study. First, while \texttt{ModSel-CP} has a very simple alternative expression for computing the prediction set, the computation of the \texttt{ModSel-CP-LOO} prediction set is (at least to our knowledge) more complex. This suggests many avenues for further study, both in terms of understanding how to efficiently implement the method, and in terms of finding more intuitive expressions of the construction to enable a tighter theoretical analysis, such as  theoretical comparison with \texttt{ModSel-CP} in terms of width of prediction sets in finite-sample, which has thus far only been investigated empirically.
Finally, as mentioned previously, our methods are based on the idea of full conformal, meaning that exchangeability plays an important role in guaranteeing validity. Recent developments (e.g. \citep{barber2023conformal}) have extend the conformal prediction beyond the exchangeable setting. Whether these methods can be adapted to the non-exchangeable case remains unexplored. We leave these open questions for future work.

\subsection*{Acknowledgements}
R.L. and R.F.B. were partially supported by the Office of Naval Research via grant N00014-24-1-2544. R.F.B. was partially supported by the National Science Foundation via grant DMS-2023109.

\bibliographystyle{plainnat}
\bibliography{bib}

\appendix

\section{Proofs for theoretical results in Section~\ref{sec:method}}

\subsection{Preliminaries: breaking ties when selecting a model}\label{app:tie-breaking}
In this section, we formally outline the tie-breaking procedure for selecting a model when there are multiple models attaining the minimum of the loss.

Denote the tie-breaking mechanism as $\tau$, which is a map that inputs a subset of model indices $\Lambda' \subseteq \Lambda$ and an auxiliary random seed $\xi \sim \mathrm{Unif}([0,1])$, and returns an element from that subset of indices: \[(\Lambda', \xi) \mapsto \tau(\Lambda', \xi) \in \Lambda'. \]
Hence, selecting a model $\hat{\lambda}$ \eqref{eqn:define_hat_lambda} formally means that, \[\hat{\lambda} = \tau\left( \mathrm{arg}\min \limits_{\lambda \in \Lambda} \mathcal{L}(\lambda,\hat{q}(\lambda); X_{1},\dots, X_{n+1}), \xi \right),\] for a random seed $\xi \sim \mathrm{Unif}([0,1])$ drawn independently of the data $(X_{1},Y_1),\dots, (X_n,Y_n),$ $X_{n+1}$.
Similarly, selecting a model $\hat{\lambda}(y)$ formally means that, \[\hat{\lambda}(y) = \tau\left( \mathrm{arg}\min \limits_{\lambda \in \Lambda} \mathcal{L}(\lambda,\hat{q}(\lambda,y); X_{1},\dots, X_{n+1}), \xi \right),\] with the same random seed $\xi$. The same tie-breaking mechanism applies to $\hat{\lambda}_{-i}(y)$ for each $i \in [n]$ as well. 

We note that throughout the paper,  $\hat{\lambda}, \hat{\lambda}(y)$ and $\hat{\lambda}_{-i}(y)$, for any $y$ and any $i\in [n]$, are single elements obtained by applying the tie-breaking map $\tau$  with some random seed $\xi \sim \mathrm{Unif}([0,1])$, and we omit explicitly writing the map $\tau$ outside any $\mathrm{argmin}$ for notation simplicity.  The flexibility of choosing any map $\tau$ means that we can break ties (i.e., choose $\hat\lambda$ or $\hat\lambda(y)$ or $\hat\lambda_{-i}(y)$ from the set of minimizers) by any desired rule, e.g., choosing a value $\lambda$ uniformly at random from the set of minimizers, or, applying a deterministic rule such as always choosing the smallest possible value of $\lambda$.

\subsection{Proofs of validity results: Theorems~\ref{thm:ModSel-validity} and~\ref{thm:LOO-validity}}
In this section we prove the marginal coverage guarantees for \texttt{ModSel-CP} and \texttt{ModSel-CP-LOO}. At a high level, the methods' validity follows from the fact that they are both special cases of the full conformal framework.
\begin{proof}[Proof of Theorem~\ref{thm:ModSel-validity}]
    By definition of the method~\eqref{defn:ModSel}, we can write
    \begin{multline*}\widehat{C}_{\texttt{ModSel-CP}} (X_{n+1}) = {}\\\left\{y \in \mathcal{Y}: S^{\hat{\lambda}(y)}(X_{n+1},y) \le \textnormal{Quantile}_{1-\alpha}\left(S^{\hat\lambda(y)}(X_1,Y_1),\dots,S^{\hat\lambda(y)}(X_n,Y_n),S^{\hat\lambda(y)}(X_{n+1},y)\right) \right\}.\end{multline*}
Therefore, the coverage event $Y_{n+1}\in \widehat{C}_{\texttt{ModSel-CP}} (X_{n+1})$ holds if and only if
\begin{equation}\label{eqn:coverage_event_ModSel}S^{\hat{\lambda}_{n+1}}(X_{n+1},Y_{n+1}) \le \textnormal{Quantile}_{1-\alpha}\left(S^{\hat\lambda_{n+1}}(X_1,Y_1),\dots,S^{\hat\lambda_{n+1}}(X_n,Y_n),S^{\hat\lambda_{n+1}}(X_{n+1},Y_{n+1})\right),\end{equation}
where we define
\[\hat\lambda_{n+1}= \hat\lambda(Y_{n+1}) .\]
Note that, by construction, $\hat\lambda_{n+1}$ is a symmetric function of the $n+1$ data points $\{(X_i,Y_i)\}_{i\in [n+1]}$. By exchangeability, then, the event given in~\eqref{eqn:coverage_event_ModSel} above has equal probability as
\[S^{\hat{\lambda}_{n+1}}(X_i,Y_i) \le \textnormal{Quantile}_{1-\alpha}\left(S^{\hat\lambda_{n+1}}(X_1,Y_1),\dots,S^{\hat\lambda_{n+1}}(X_n,Y_n),S^{\hat\lambda_{n+1}}(X_{n+1},Y_{n+1})\right),\]
for any $i\in[n+1]$. Therefore,
\begin{align*}
    &\mathbb{P}\{Y_{n+1}\in\widehat{C}_{\texttt{ModSel-CP}} (X_{n+1})\}\\
    &=\mathbb{P}\left\{S^{\hat{\lambda}_{n+1}}(X_{n+1},Y_{n+1}) \le \textnormal{Quantile}_{1-\alpha}\left(S^{\hat\lambda_{n+1}}(X_1,Y_1),\dots,S^{\hat\lambda_{n+1}}(X_{n+1},Y_{n+1})\right)\right\}\\
    &=\frac{1}{n+1}\sum_{i=1}^{n+1}\mathbb{P}\left\{S^{\hat{\lambda}_{n+1}}(X_i,Y_i) \le \textnormal{Quantile}_{1-\alpha}\left(S^{\hat\lambda_{n+1}}(X_1,Y_1),\dots,S^{\hat\lambda_{n+1}}(X_{n+1},Y_{n+1})\right)\right\}\\
    &=\mathbb{E}\left[\frac{1}{n+1}\sum_{i=1}^n\mathbf{1}\left\{S^{\hat{\lambda}_{n+1}}(X_i,Y_i) \le \textnormal{Quantile}_{1-\alpha}\left(S^{\hat\lambda_{n+1}}(X_1,Y_1),\dots,S^{\hat\lambda_{n+1}}(X_{n+1},Y_{n+1})\right)\right\}\right]\\
    &\geq \mathbb{E}[1-\alpha] = 1-\alpha,
\end{align*}
where the inequality holds by definition of the quantile.
\end{proof}

\begin{proof}[Proof of Theorem~\ref{thm:LOO-validity}]
    By definition of the method~\eqref{defn:LOO}, the coverage event \[ \left\{Y_{n+1}\in \widehat{C}_{\texttt{ModSel-CP-LOO}} (X_{n+1})\right\}\] holds if and only if
\[\mathcal{L}_{n+1}(\hat\lambda,S^{\hat\lambda}(X_{n+1},Y_{n+1})) \leq  \textnormal{Quantile}_{(1-\alpha)(1+1/n)}\left(\{\mathcal{L}_{n+1}(\hat\lambda_{-i}(Y_{n+1}),S^{\hat\lambda_{-i}(Y_{n+1})}(X_i,Y_i))\}_{i\in[n]}\right).\]
Now define
\[\hat\lambda^*_{-i} = \begin{cases} \hat\lambda_{-i}(Y_{n+1}), & i\in[n],\\ \hat\lambda, & i=n+1.\end{cases}\]
Then we can rewrite the above expression as
\[\mathcal{L}_{n+1}(\hat\lambda^*_{-(n+1)},S^{\hat\lambda^*_{-(n+1)}}(X_{n+1},Y_{n+1})) \leq  \textnormal{Quantile}_{(1-\alpha)(1+1/n)}\left(\{\mathcal{L}_{n+1}(\hat\lambda^*_{-i},S^{\hat\lambda^*_{-i}}(X_i,Y_i))\}_{i\in[n]}\right),\]
or equivalently (by standard calculations on the quantile),
\begin{equation}\label{eqn:coverage_event_LOO}\mathcal{L}_{n+1}(\hat\lambda^*_{-(n+1)},S^{\hat\lambda^*_{-(n+1)}}(X_{n+1},Y_{n+1})) \leq  \textnormal{Quantile}_{1-\alpha}\left(\{\mathcal{L}_{n+1}(\hat\lambda^*_{-i},S^{\hat\lambda^*_{-i}}(X_i,Y_i))\}_{i\in[n+1]}\right).\end{equation}
Now we check that this construction satisfies symmetry---that is, we want to see that $\hat\lambda^*_{-i}$ is constructed analogously for $i\in[n]$ as for $i=n+1$. By definition, for $i\in[n]$, we have
\[\hat\lambda^*_{-i} = \arg\min_{\lambda\in\Lambda}\mathcal{L}(\lambda,\hat{q}_{-i}(\lambda,Y_{n+1}) ; X_1,\dots,X_{n+1}),\]
where
\begin{multline*}\hat{q}_{-i}(\lambda,Y_{n+1}) = {}\\\textnormal{Quantile}_{(1-\alpha)(1+1/n)}\left(S^\lambda(X_1,Y_1),\dots,S^\lambda(X_{i-1},Y_{i-1}),S^\lambda(X_{i+1},Y_{i+1}),\dots,S^\lambda(X_{n+1},Y_{n+1})\right),\end{multline*}
while for $i=n+1$, we have
\[\hat\lambda^*_{-(n+1)} = \hat\lambda = \arg\min_{\lambda\in\Lambda}\mathcal{L}(\lambda,\hat{q}(\lambda);X_1,\dots,X_{n+1}) \]where as before,\[\hat{q}(\lambda) = \textnormal{Quantile}_{(1-\alpha)(1+1/n)}\left(S^{\lambda}(X_1,Y_1),\dots,S^{\lambda}(X_n,Y_n)\right).\]
In other words, this construction is defined in the same way for each $i$ (whether $i\in[n]$ or $i=n+1$), differing only by removing data point $(X_i,Y_i)$ from the optimization.
Then, by exchangeability of the data the event given in~\eqref{eqn:coverage_event_LOO} above has equal probability as
\[\mathcal{L}_{n+1}(\hat\lambda^*_{-j},S^{\hat\lambda^*_{-j}}(X_j,Y_j)) \leq  \textnormal{Quantile}_{1-\alpha}\left(\{\mathcal{L}_{n+1}(\hat\lambda^*_{-i},S^{\hat\lambda^*_{-i}}(X_i,Y_i))\}_{i\in[n+1]}\right),\]
for any $j\in[n+1]$. Therefore,
\begin{align*}
    &\mathbb{P}\{Y_{n+1}\in\widehat{C}_{\texttt{ModSel-CP-LOO}} (X_{n+1})\}\\
    &=\mathbb{P}\left\{\mathcal{L}_{n+1}(\hat\lambda^*_{-(n+1)},S^{\hat\lambda^*_{-(n+1)}}(X_{n+1},Y_{n+1})) \leq  \textnormal{Quantile}_{1-\alpha}\left(\{\mathcal{L}_{n+1}(\hat\lambda^*_{-i},S^{\hat\lambda^*_{-i}}(X_i,Y_i))\}_{i\in[n+1]}\right)\right\}\\
    &=\frac{1}{n+1}\sum_{j=1}^{n+1}\mathbb{P}\left\{\mathcal{L}_{n+1}(\hat\lambda^*_{-j},S^{\hat\lambda^*_{-j}}(X_j,Y_j)) \leq  \textnormal{Quantile}_{1-\alpha}\left(\{\mathcal{L}_{n+1}(\hat\lambda^*_{-i},S^{\hat\lambda^*_{-i}}(X_i,Y_i))\}_{i\in[n+1]}\right)\right\}\\
    &=\mathbb{E}\left[\frac{1}{n+1}\sum_{j=1}^n\mathbf{1}\left\{\mathcal{L}_{n+1}(\hat\lambda^*_{-j},S^{\hat\lambda^*_{-j}}(X_j,Y_j)) \leq  \textnormal{Quantile}_{1-\alpha}\left(\{\mathcal{L}_{n+1}(\hat\lambda^*_{-i},S^{\hat\lambda^*_{-i}}(X_i,Y_i))\}_{i\in[n+1]}\right)\right\}\right]\\
    &\geq \mathbb{E}[1-\alpha] = 1-\alpha,
\end{align*}
where the inequality again holds by definition of the quantile.    
\end{proof}

\subsection{Proof of implementation for \texttt{ModSel-CP}: Theorem~\ref{thm:ModSel_simplify}}

In this section, we prove the results of Theorem~\ref{thm:ModSel_simplify}, which offers a simple and efficient expression for computing the upper and lower bounds on the \texttt{ModSel-CP} prediction set. The proof is divided into three parts: first, establishing the upper bound~\eqref{eqn:ModSel_simple}, then the lower bound~\eqref{eqn:ModSel_simple_lower}, and finally showing their agreement up to a set of measure zero under the given conditions. Throughout the proof, we will write
\[\widehat{C}_+(X_{n+1}) = \bigcup_{\lambda\in\mathcal{M}} \left\{y \in \mathcal{Y}: \mathcal{L}_{n+1}(\lambda,S^\lambda(X_{n+1},y))\le\mathcal{L}_{n+1}({\hat\lambda},\hat{q}(\hat\lambda))\right\}\]
and
\[\widehat{C}_-(X_{n+1}) = \bigcup_{\lambda\in\mathcal{M}_-}\left\{y \in \mathcal{Y}: \mathcal{L}_{n+1}(\lambda,S^\lambda(X_{n+1},y))<\mathcal{L}_{n+1}({\hat\lambda},\hat{q}(\hat\lambda))\right\}.\]
These two sets are equivalent to the claimed upper bound~\eqref{eqn:ModSel_simple} and the claimed lower bound~\eqref{eqn:ModSel_simple_lower} on the prediction set $\widehat{C}_{\texttt{ModSel-CP}}(X_{n+1})$, respectively.

\paragraph{Proving the upper bound.}
By definition, we have
\begin{multline*}
    y\in\widehat{C}_{\textnormal{\texttt{ModSel-CP}}} (X_{n+1}) \Longleftrightarrow S^{\hat{\lambda}(y)}(X_{n+1},y) \le \hat{q} (\hat{\lambda}(y),y )   \\\Longrightarrow \mathcal{L}_{n+1}\big(\hat\lambda(y),S^{\hat{\lambda}(y)}(X_{n+1},y)\big) \le \mathcal{L}_{n+1}\big(\hat\lambda(y),\hat{q} (\hat{\lambda}(y),y )\big),
\end{multline*}
where the second step uses the monotonicity of $q\mapsto \mathcal{L}_{n+1}(\lambda,q)$.
Moreover, since 
\[\hat\lambda(y) \in\arg\min_{\lambda\in\Lambda}\mathcal{L}(\lambda,\hat{q}(\lambda,y)),\]
and $\hat{q}(\lambda,y)\leq\hat{q}(\lambda)$ by construction,
this implies 
\[\mathcal{L}_{n+1}\big(\hat\lambda(y),\hat{q} (\hat{\lambda}(y),y )\big)\leq \mathcal{L}_{n+1}\big(\hat\lambda,\hat{q} (\hat{\lambda},y )\big) \leq \mathcal{L}_{n+1}\big(\hat\lambda,\hat{q} (\hat{\lambda} )\big),\]
where the last step again uses the monotonicity of $q\mapsto \mathcal{L}_{n+1}(\lambda,q)$.
Therefore we have
\[y\in\widehat{C}_{\textnormal{\texttt{ModSel-CP}}} (X_{n+1}) \Longrightarrow \mathcal{L}_{n+1}\big(\hat\lambda(y),S^{\hat{\lambda}(y)}(X_{n+1},y)\big) \le \mathcal{L}_{n+1}\big(\hat\lambda,\hat{q} (\hat{\lambda} )\big).\]

To verify that $y\in\widehat{C}_+(X_{n+1})$, then, we only need to show that $\hat\lambda(y)\in\mathcal{M}$. First, we have
\[\hat{q}_-(\hat\lambda(y)) \leq \hat{q}(\hat\lambda(y),y) \Longrightarrow \mathcal{L}_{n+1}\big(\hat\lambda(y),\hat{q}_-(\hat\lambda(y))\big)\leq \mathcal{L}_{n+1}\big(\hat\lambda(y),\hat{q}(\hat\lambda(y),y)\big),\]
where the first bound holds since $\hat{q}_-(\lambda)\leq\hat{q}(\lambda,y)$ for all $\lambda$ by construction, and then we apply monotonicity of $q\mapsto\mathcal{L}_{n+1}(\lambda,q)$. Similarly, we have
\[\hat{q}(\hat\lambda)\geq \hat{q}(\hat\lambda,y) \Longrightarrow \mathcal{L}_{n+1}\big(\hat\lambda,\hat{q}(\hat\lambda)\big)\geq \mathcal{L}_{n+1}\big(\hat\lambda,\hat{q}(\hat\lambda,y)\big).\]
Combining everything, then, we have
\begin{multline*}
    \mathcal{L}_{n+1}\big(\hat\lambda(y),\hat{q}_-(\hat\lambda(y))\big)\leq \mathcal{L}_{n+1}\big(\hat\lambda(y),\hat{q}(\hat\lambda(y),y)\big)
\\= \min_{\lambda\in\Lambda}\mathcal{L}_{n+1}\big(\lambda,\hat{q}(\lambda,y)\big) \leq \mathcal{L}_{n+1}\big(\hat\lambda,\hat{q}(\hat\lambda,y)\big)\leq \mathcal{L}_{n+1}\big(\hat\lambda,\hat{q}(\hat\lambda)\big),
\end{multline*}
where the first and last steps come from the calculations above, while the equality follows from the definition of $\hat\lambda(y)$ as a minimizer. This verifies that $\hat\lambda(y)\in\mathcal{M}$.

\paragraph{Proving the lower bound.} 
Suppose that $y\in\widehat{C}_-(X_{n+1})$. By definition, we must have some $\lambda_{\ast}\in\Lambda$ such that
\[\mathcal{L}_{n+1}\big(\lambda_{\ast},\hat{q}_-(\lambda_{\ast})\big)<\mathcal{L}_{n+1}\big(\hat\lambda,\hat{q}(\hat\lambda)\big)\]
(i.e., $\lambda_{\ast}\in\mathcal{M}_-$) and
\[\mathcal{L}_{n+1}\big(\lambda_{\ast},S^{\lambda_{\ast}}(X_{n+1},y)\big)<\mathcal{L}_{n+1}\big(\hat\lambda,\hat{q}(\hat\lambda)\big).\]
Since by construction of the quantiles, we must have $\hat{q}(\lambda_{\ast},y) \leq \max\{ \hat{q}_-(\lambda_{\ast}), S^{\lambda_{\ast}}(X_{n+1},y)\}$, 
this therefore implies 
 \[\mathcal{L}_{n+1}\big(\lambda_{\ast},\hat{q}(\lambda_{\ast},y)\big)<\mathcal{L}_{n+1}\big(\hat\lambda,\hat{q}(\hat\lambda)\big).\]
We then have
\begin{multline*}\mathcal{L}_{n+1}\big(\hat\lambda(y),\hat{q}(\hat\lambda(y),y)\big) = \min_{\lambda\in\Lambda}\mathcal{L}_{n+1}\big(\lambda,\hat{q}(\lambda,y)\big) \leq \mathcal{L}_{n+1}\big(\lambda_{\ast},\hat{q}(\lambda_{\ast},y)\big) \\ < \mathcal{L}_{n+1}\big(\hat\lambda,\hat{q}(\hat\lambda)\big) = \min_{\lambda\in\Lambda}\mathcal{L}_{n+1}\big(\lambda,\hat{q}(\lambda)\big) \leq \mathcal{L}_{n+1}\big(\hat\lambda(y),\hat{q}(\hat\lambda(y))\big).\end{multline*}
Monotonicity of $q\mapsto\mathcal{L}_{n+1}(\lambda,q)$ then implies
\[\hat{q}(\hat\lambda(y),y) < \hat{q}(\hat\lambda(y)).\]
By construction of these quantiles, then, we must have
\[S^{\hat\lambda(y)}(X_{n+1},y) \leq \hat{q}(\hat\lambda(y),y),\]
i.e., $y\in\widehat{C}_{\texttt{ModSel-CP}}(X_{n+1})$, as desired.

\paragraph{Proving equality up to a set of measure zero under additional conditions.} 
For this last part of the proof, our aim is to prove that, under the additional conditions, the set difference $\widehat{C}_+(X_{n+1})\backslash \widehat{C}_-(X_{n+1})$ has measure zero.
First, since $\mathcal{M}\supseteq\mathcal{M}_-$ by construction, therefore, the set difference can be bounded as
\begin{multline*}\widehat{C}_+(X_{n+1}) \backslash \widehat{C}_-(X_{n+1})\subseteq \underbrace{\left[\bigcup_{\lambda\in\mathcal{M}_-} \left\{y \in \mathcal{Y}: \mathcal{L}_{n+1}(\lambda,S^\lambda(X_{n+1},y))=\mathcal{L}_{n+1}({\hat\lambda},\hat{q}(\hat\lambda))\right\}\right]}_{\textnormal{Term 1}} \\{}\bigcup\underbrace{\left[\bigcup_{\lambda\in\mathcal{M}\backslash \mathcal{M}_-} \left\{y \in \mathcal{Y}: \mathcal{L}_{n+1}(\lambda,S^\lambda(X_{n+1},y))\leq\mathcal{L}_{n+1}({\hat\lambda},\hat{q}(\hat\lambda))\right\}\right]}_{\textnormal{Term 2}}.\end{multline*}
Now we apply the additional conditions. Since we assume that $\{y\in\mathcal{Y}: \mathcal{L}_{n+1}(\lambda, S^{\lambda}(X_{n+1},y)) = q\}$ has measure zero (for any $q\in\mathbb{R}$), this implies that, for every $\lambda\in\Lambda$, the set
\[\left\{y \in \mathcal{Y}: \mathcal{L}_{n+1}(\lambda,S^\lambda(X_{n+1},y))=\mathcal{L}_{n+1}({\hat\lambda},\hat{q}(\hat\lambda))\right\}\]
has measure zero. In particular, this means that Term 1 has measure zero.

Next, we turn to Term 2. For this term, we will simply prove that $\mathcal{M}=\mathcal{M}_-$ (i.e., Term 2 is the empty set). Indeed, for any $\lambda$, if $\lambda\in \mathcal{M}\backslash \mathcal{M}_-$, then we must have
\[\mathcal{L}_{n+1}(\lambda, \hat{q}_{-}(\lambda)) = \mathcal{L}_{n+1}(\hat{\lambda},\hat{q}(\hat{\lambda})),\]
which cannot hold under our additional conditions.

\subsection{Proof of method comparison results: Proposition~\ref{prop:containYK-nonadj}}
In this section, we prove the result of Proposition~\ref{prop:containYK-nonadj}, which compares the \texttt{ModSel-CP}  and \texttt{ModSel-CP-LOO} prediction sets with that produced by \texttt{YK-baseline}.

First consider the \texttt{ModSel-CP} method. Suppose $y
\not\in\widehat{C}_{\texttt{ModSel-CP}}(X_{n+1})$. Then by definition,
\[S^{\hat\lambda(y)}(X_{n+1},y) > \hat{q}(\hat\lambda(y),y), \]
and so
\begin{align}
    \notag\hat{q}(\hat\lambda(y),y)
    &=\textnormal{Quantile}_{1-\alpha}\left(S^{\hat\lambda(y)}(X_1,Y_1),\dots,S^{\hat\lambda(y)}(X_n,Y_n),S^{\hat\lambda(y)}(X_{n+1},y)\right)\\
    \notag&=\textnormal{Quantile}_{(1-\alpha)(1+1/n)}\left(S^{\hat\lambda(y)}(X_1,Y_1),\dots,S^{\hat\lambda(y)}(X_n,Y_n)\right)\\
   \label{eqn:exclude_a_large_value} &=\hat{q}(\hat\lambda(y)).
\end{align}
On the other hand, we have
\begin{align*}
    \hat{q}(\hat\lambda,y)
    &=\textnormal{Quantile}_{1-\alpha}\left(S^{\hat\lambda}(X_1,Y_1),\dots,S^{\hat\lambda}(X_n,Y_n),S^{\hat\lambda}(X_{n+1},y)\right)\\
    &\leq\textnormal{Quantile}_{(1-\alpha)(1+1/n)}\left(S^{\hat\lambda}(X_1,Y_1),\dots,S^{\hat\lambda}(X_n,Y_n)\right)\\
    &=\hat{q}(\hat\lambda).
\end{align*}
By optimality of $\hat\lambda(y)$, we must therefore have
\[\mathcal{L}_{n+1}(\hat{\lambda}(y), \hat{q}(\hat\lambda(y)))= \mathcal{L}_{n+1}(\hat{\lambda}(y),  \hat{q}(\hat\lambda(y),y))
\leq \mathcal{L}_{n+1}(\hat{\lambda},  \hat{q}(\hat\lambda,y)) \leq \mathcal{L}_{n+1}(\hat{\lambda},  \hat{q}(\hat\lambda)).\]
But by our assumption of uniqueness of $\hat\lambda$, this implies that $\hat\lambda = \hat\lambda(y)$.
We then have
\[S^{\hat\lambda}(X_{n+1},y) =S^{\hat\lambda(y)}(X_{n+1},y) > \hat{q}(\hat\lambda(y),y) = \hat{q}(\hat\lambda,y),\]
and thus, by a similar calculation as in~\eqref{eqn:exclude_a_large_value} above (now with $\hat\lambda$ in place of $\hat\lambda(y)$), \[\hat{q}(\hat\lambda,y) = \hat{q}(\hat\lambda).\]
We have therefore shown that $S^{\hat\lambda}(X_{n+1},y) >\hat{q}(\hat\lambda)$, i.e., $y\not\in\widehat{C}_{\texttt{YK-baseline}}(X_{n+1})$.

Next we turn to \texttt{ModSel-CP-LOO}. 
Fix any $y\in\widehat{C}_{\texttt{YK-baseline}}(X_{n+1})$.
First suppose that, for some $i\in[n]$, we have
\begin{equation}\label{eqn:LOO_score_compare_step}
\mathcal{L}_{n+1}(\hat\lambda_{-i}(y),S^{\hat\lambda_{-i}(y)}(X_i,Y_i)) < \mathcal{L}_{n+1}(\hat\lambda,S^{\hat\lambda}(X_{n+1},y)), \   S^{\hat\lambda}(X_{n+1},y) 
\leq S^{\hat\lambda}(X_i,Y_i) .
\end{equation}
Then in particular, we must have $\hat\lambda_{-i}(y)\neq\hat\lambda$, and so by uniqueness of $\hat\lambda$, 
\begin{multline}\label{eqn:quantile_is_above}\mathcal{L}_{n+1}(\hat\lambda,\hat{q}(\hat\lambda)) < \mathcal{L}_{n+1}(\hat\lambda_{-i}(y),\hat{q}(\hat\lambda_{-i}(y))) \\ = \mathcal{L}_{n+1}\left(\hat\lambda_{-i}(y), \textnormal{Quantile}_{(1-\alpha)(1+1/n)}\left( S^{\hat\lambda_{-i}(y)}(X_1,Y_1),\dots,S^{\hat\lambda_{-i}(y)}(X_n,Y_n)\right)\right) \\=\textnormal{Quantile}_{(1-\alpha)(1+1/n)}\left(\{ \mathcal{L}_{n+1}(\hat\lambda_{-i}(y),S^{\hat\lambda_{-i}(y)}(X_j,Y_j))\}_{j\in[n]}\right).\end{multline}
On the other hand, by optimality of $\hat\lambda_{-i}(y)$,
\begin{multline}\label{eqn:quantile_is_below}\textnormal{Quantile}_{(1-\alpha)(1+1/n)}\left( \{\mathcal{L}_{n+1}(\hat\lambda_{-i}(y),S^{\hat\lambda_{-i}(y)}(X_j,Y_j))\}_{j\in[n]\backslash\{i\}}\cup \{\mathcal{L}_{n+1}(\hat\lambda_{-i}(y),S^{\hat\lambda_{-i}(y)}(X_{n+1},y)\}\right) \\= \mathcal{L}_{n+1}(\hat\lambda_{-i}(y),\hat{q}_{-i}(\hat\lambda_{-i}(y),y)) \leq \mathcal{L}_{n+1}(\hat\lambda,\hat{q}_{-i}(\hat\lambda,y))\\
 = \textnormal{Quantile}_{(1-\alpha)(1+1/n)}\left( \{\mathcal{L}_{n+1}(\hat\lambda,S^{\hat\lambda}(X_j,Y_j))\}_{j\in[n]\backslash\{i\}}\cup \{\mathcal{L}_{n+1}(\hat\lambda,S^{\hat\lambda}(X_{n+1},y))\}\right)\\
\leq \textnormal{Quantile}_{(1-\alpha)(1+1/n)}\left( \{\mathcal{L}_{n+1}(\hat\lambda,S^{\hat\lambda}(X_j,Y_j))\}_{j\in[n]}\right) = \mathcal{L}_{n+1}(\hat\lambda,\hat{q}(\hat\lambda)),\end{multline}
where the last inequality holds since we have assumed $S^{\hat\lambda}(X_{n+1},y) \leq S^{\hat\lambda}(X_i,Y_i)$. Comparing the two calculations~\eqref{eqn:quantile_is_above} and~\eqref{eqn:quantile_is_below}, we see that we must have 
\[\mathcal{L}_{n+1}(\hat\lambda_{-i}(y),S^{\hat\lambda_{-i}(y)}(X_i,Y_i))  > \mathcal{L}_{n+1}(\hat\lambda,\hat{q}(\hat\lambda)).\]
This is a contradiction with~\eqref{eqn:LOO_score_compare_step}, since we have assumed $y\in\widehat{C}_{\texttt{YK-baseline}}(X_{n+1})$, i.e., $S^{\hat\lambda}(X_{n+1},y) \leq \hat{q}(\hat\lambda)$ and so $\mathcal{L}_{n+1}(\hat\lambda,S^{\hat\lambda}(X_{n+1},y)) \leq \mathcal{L}_{n+1}(\hat\lambda,\hat{q}(\hat\lambda))$.

Therefore, for all $i\in[n]$, we must have
\begin{equation}\label{eqn:LOO_score_compare_step_imply}
S^{\hat\lambda}(X_{n+1},y) \leq S^{\hat\lambda}(X_i,Y_i) \ \Longrightarrow \  \mathcal{L}_{n+1}(\hat\lambda,S^{\hat\lambda}(X_{n+1},y))\leq \mathcal{L}_{n+1}(\hat\lambda_{-i}(y),S^{\hat\lambda_{-i}(y)}(X_i,Y_i)).
\end{equation}
Since $y\in\widehat{C}_{\texttt{YK-baseline}}(X_{n+1})$, we have
\[S^{\hat\lambda}(X_{n+1},y) \leq \hat{q}(\hat\lambda) = \textnormal{Quantile}_{(1-\alpha)(1+1/n)}(\{S^{\hat\lambda}(X_i,Y_i)\}_{i\in[n]}),\]
or equivalently,
\[\sum_{i=1}^n \mathbf{1}\{S^{\hat\lambda}(X_{n+1},y) \leq S^{\hat\lambda}(X_i,Y_i)\} \geq \alpha(n+1).\]
Therefore, applying~\eqref{eqn:LOO_score_compare_step_imply}, we have
\[\sum_{i=1}^n \mathbf{1}\{\mathcal{L}_{n+1}(\hat\lambda,S^{\hat\lambda}(X_{n+1},y)) \leq \mathcal{L}_{n+1}(\hat\lambda_{-i}(y),S^{\hat\lambda_{-i}(y)}(X_i,Y_i))\} \geq \alpha(n+1),\]
or equivalently, 
\[\mathcal{L}_{n+1}(\hat\lambda,S^{\hat\lambda}(X_{n+1},y)) \leq \textnormal{Quantile}_{(1-\alpha)(1+1/n)}(\{\mathcal{L}_{n+1}(\hat\lambda_{-i}(y),S^{\hat\lambda_{-i}(y)}(X_i,Y_i))\}_{i\in[n]}),\]
i.e., $y\in\widehat{C}_{\texttt{ModSel-CP-LOO}}(X_{n+1})$.

\section{Implementation for \texttt{ModSel-CP-LOO}}\label{sec:appendix_implement_LOO}
In this section, we derive an algorithm for implementing the \texttt{ModSel-CP-LOO} method. Unlike the straightforward implementation of \texttt{ModSel-CP} shown in Theorem~\ref{thm:ModSel_simplify}, for this method the procedure is more complex, but can still be implemented efficiently for many common choices such as the residual score.

At a high level, our strategy is the following. We will partition $\mathcal{Y}$ into many regions, $\mathcal{Y} = \cup_{j \in \mathcal{J}} \mathcal{B}_j$, such that within each region, for all $i\in[n]$, the selected model $\hat\lambda_{-i}(y)$ is constant:
\[\hat\lambda_{-i}(y) = \lambda_{i,j}\textnormal{ for all $i\in[n]$, $j\in \mathcal{J}$, $y\in\mathcal{B}_j$.}\]
 (In practice, $\mathcal{J}$ will typically be a finite collection of intervals.)
Examining the definition~\eqref{defn:LOO} of the prediction set, we then see that it can be rewritten as
\begin{multline*}\widehat{C}_{\texttt{ModSel-CP-LOO}}(X_{n+1}) =\\ \cup_{j \in \mathcal{J}}\mathcal{B}_j \cap \left\{ y\in\mathcal{Y} : \mathcal{L}_{n+1}(\hat\lambda,S^{\hat\lambda}(X_{n+1},y))\leq \textnormal{Quantile}_{(1-\alpha)(1+1/n)}\left(\big\{\mathcal{L}_{n+1}(\lambda_{i,j},S^{\lambda_{i,j}}(X_i,Y_i))\big\}_{i\in[n]}\right)\right\}.\end{multline*}
While it might appear that this has not simplified the calculation, in cases such as the residual score, these sets are more tractable to work with---specifically, for the residual score and with loss equal to interval length,
\begin{multline}\label{eqn:calculate_LOO_interval}\widehat{C}_{\texttt{ModSel-CP-LOO}}(X_{n+1}) ={}\\ \cup_{j \in \mathcal{J}} \mathcal{B}_j \cap \left\{ y\in\mathcal{Y} : |y-f_{\hat\lambda}(X_{n+1})| \leq  \textnormal{Quantile}_{(1-\alpha)(1+1/n)}\left(\big\{|Y_i - f_{\lambda_{i,j}}(X_i)|\big\}_{i\in[n]}\right)\right\}.\end{multline}

The remaining question, then, is how we may calculate these sets $\mathcal{B}_j$. For the case of a finite $\mathcal{Y}$ (i.e., a categorical response), this is simply a question of enumerating all possible values $y$. In a regression setting where $\mathcal{Y}=\mathbb{R}$, however, computing this efficiently will require the following propositions.
First we give a result about the \texttt{ModSel-CP-LOO} method in a general setting---specifically, about the possible values that the selected models $\hat\lambda_{-i}(y)$ may take.
\begin{proposition}\label{prop:Mcal_i}
    For each $i\in [n]$, define \begin{equation}\label{defn:calM_i}
        \mathcal{M}_i = \left\{\lambda\in\Lambda : l_i(\lambda)\leq \min_{\lambda'\in\Lambda}u_i(\lambda')\right\},
    \end{equation}
    for
    \begin{equation*}
        l_{i}(\lambda) = \begin{cases}
            \mathcal{L}_{n+1}(\lambda,\hat{q}(\lambda)), \textrm{ if } \mathcal{L}_{n+1}(\lambda,S^{\lambda}(X_{i},Y_{i})) < \mathcal{L}_{n+1}(\lambda,\hat{q}(\lambda))\\
            \mathcal{L}_{n+1}(\lambda,\hat{q}_{-}(\lambda)), \textrm{ if } \mathcal{L}_{n+1}(\lambda,S^{\lambda}(X_{i},Y_{i})) \ge \mathcal{L}_{n+1}(\lambda,\hat{q}(\lambda))
        \end{cases},
    \end{equation*}
    \begin{equation*}
        u_{i}(\lambda) = \begin{cases}
            \mathcal{L}_{n+1}(\lambda,\hat{q}_{+}(\lambda)), \textrm{ if } \mathcal{L}_{n+1}(\lambda,S^{\lambda}(X_{i},Y_{i})) \le \mathcal{L}_{n+1}(\lambda,\hat{q}(\lambda))\\
            \mathcal{L}_{n+1}(\lambda,\hat{q}(\lambda)), \textrm{ if } \mathcal{L}_{n+1}(\lambda,S^{\lambda}(X_{i},Y_{i})) > \mathcal{L}_{n+1}(\lambda,\hat{q}(\lambda))
        \end{cases},
    \end{equation*}
    where $ \hat{q}(\lambda),\hat{q}_-(\lambda)$ are defined as before, while \[\hat{q}_+(\lambda) = \textnormal{Quantile}_{(1-\alpha)(1+1/n)+1/n}(S^\lambda(X_1,Y_1),\dots,S^\lambda(X_n,Y_n)).\] Then, for all $i\in[n]$ and all $y\in\mathcal{Y}$,
    \[\hat\lambda_{-i}(y)\in\mathcal{M}_i.\]
\end{proposition}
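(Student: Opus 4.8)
The plan is to reduce the whole statement to a uniform-in-$y$ two-sided bound on a single order statistic, and then close with a one-line optimality argument. First I would fix the reading of the quantile: writing $k = \lceil (1-\alpha)(n+1)\rceil$, the level $(1-\alpha)(1+1/n)$ selects the $k$-th smallest of $n$ numbers, so that for each fixed $\lambda$, if $z_{(1)}\le\cdots\le z_{(n)}$ denote the sorted calibration scores $S^\lambda(X_j,Y_j)$, $j\in[n]$, then $\hat{q}_-(\lambda)=z_{(k-1)}$, $\hat{q}(\lambda)=z_{(k)}$, and $\hat{q}_+(\lambda)=z_{(k+1)}$ (with conventions $z_{(0)}=-\infty$, $z_{(n+1)}=+\infty$). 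The quantity $\hat{q}_{-i}(\lambda,y)$ is then the $k$-th smallest of the $n-1$ fixed scores with $z_i=S^\lambda(X_i,Y_i)$ deleted, together with the single free value $w=S^\lambda(X_{n+1},y)$.

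The heart of the argument is a sandwich on this leave-one-out quantile that holds for every $y$ simultaneously. Letting $z'_{(1)}\le\cdots\le z'_{(n-1)}$ be the sorted scores after deleting $z_i$, a direct counting argument gives
\[z'_{(k-1)} \le \hat{q}_{-i}(\lambda,y) \le z'_{(k)} \quad\text{for all } y,\]
since inserting one free point into $n-1$ fixed points leaves at most $k-1$ of the $n$ values strictly below $z'_{(k-1)}$ and at least $k$ of them at or below $z'_{(k)}$. I would then relate $z'_{(k-1)},z'_{(k)}$ back to $z_{(k-1)},z_{(k)},z_{(k+1)}$ by a short count on where $z_i$ sits: deleting any point gives the universal bounds $z'_{(k-1)}\ge z_{(k-1)}$ and $z'_{(k)}\le z_{(k+1)}$, while $z_i<z_{(k)}$ additionally forces $z'_{(k-1)}\ge z_{(k)}$, and $z_i>z_{(k)}$ additionally forces $z'_{(k)}\le z_{(k)}$.

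To convert these into the loss thresholds $l_i(\lambda)$ and $u_i(\lambda)$, I would use monotonicity of $q\mapsto\mathcal{L}_{n+1}(\lambda,q)$ together with its one-directional consequence: a strict loss inequality $\mathcal{L}_{n+1}(\lambda,S^\lambda(X_i,Y_i))<\mathcal{L}_{n+1}(\lambda,\hat{q}(\lambda))$ forces $S^\lambda(X_i,Y_i)<\hat{q}(\lambda)$ by the contrapositive, and symmetrically for $>$. Matching this case by case against the branches defining $l_i$ and $u_i$ yields, for every $y$,
\[l_i(\lambda) \le \mathcal{L}_{n+1}\big(\lambda,z'_{(k-1)}\big) \le \mathcal{L}_{n+1}\big(\lambda,\hat{q}_{-i}(\lambda,y)\big) \le \mathcal{L}_{n+1}\big(\lambda,z'_{(k)}\big) \le u_i(\lambda),\]
where the two outer inequalities invoke exactly the branch ($<$ versus $\ge$ for $l_i$, and $\le$ versus $>$ for $u_i$) dictated by the loss comparison, and the two inner ones use the sandwich above with monotonicity. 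Crucially, the branches demanding the tighter endpoint ($z_{(k)}$ rather than $z_{(k-1)}$ or $z_{(k+1)}$) are precisely those where the score translation is available in the needed direction, while the remaining branches rely only on the universal bounds.

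Finally I would close with the optimality chain. For any $y$, by the definition of $\hat{\lambda}_{-i}(y)$ as the minimizer of $\mathcal{L}_{n+1}(\cdot,\hat{q}_{-i}(\cdot,y))$ and the bounds just established, every competitor $\lambda'\in\Lambda$ satisfies
\[l_i\big(\hat{\lambda}_{-i}(y)\big) \le \mathcal{L}_{n+1}\big(\hat{\lambda}_{-i}(y),\hat{q}_{-i}(\hat{\lambda}_{-i}(y),y)\big) \le \mathcal{L}_{n+1}\big(\lambda',\hat{q}_{-i}(\lambda',y)\big) \le u_i(\lambda'),\]
so minimizing over $\lambda'$ gives $l_i(\hat{\lambda}_{-i}(y)) \le \min_{\lambda'\in\Lambda} u_i(\lambda')$, i.e.\ $\hat{\lambda}_{-i}(y)\in\mathcal{M}_i$. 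The hard part, and the only place needing genuine care, is the interplay between ties and the \emph{weak} monotonicity of $\mathcal{L}_{n+1}$ in $q$: the loss comparisons in $l_i,u_i$ do not translate into strict score comparisons in both directions, so I must ensure each branch is paired with a bound valid in the relevant direction — the universal leave-one-out bounds on the branches where only a non-strict comparison is available, and the score-translated bounds on the branches where strictness holds.
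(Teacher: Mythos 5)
Your proof is correct and follows essentially the same route as the paper's: a uniform-in-$y$ sandwich $l_i(\lambda)\le \mathcal{L}_{n+1}\big(\lambda,\hat{q}_{-i}(\lambda,y)\big)\le u_i(\lambda)$ established branch by branch (universal leave-one-out quantile bounds for the non-strict branches, score-translated bounds via the contrapositive of weak monotonicity for the strict ones), followed by the optimality chain through $\hat\lambda_{-i}(y)$ to get $l_i(\hat\lambda_{-i}(y))\le \min_{\lambda'}u_i(\lambda')$. The only difference is presentational: you make the order-statistic counting and the loss-to-score translation explicit, where the paper asserts the corresponding quantile inequalities directly.
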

Next we use these sets $\mathcal{M}_i$ to help compute the breakpoints for the special case $\mathcal{Y}=\mathbb{R}$, as is the focus of this section.
\begin{proposition}\label{prop:B_j}
    Under the notation and definitions above, in the case of $\mathcal{Y}=\mathbb{R}$,  assume also that $y\mapsto S^\lambda(x,y)$ is continuous, and $q\mapsto \mathcal{L}_{n+1}(\lambda,q)$ is continuous (e.g., as for the residual score, and for loss given by interval length). Then the partition $\mathbb{R}=\cup_{j \in \mathcal{J}}\mathcal{B}_j$ can be defined with the following set of breakpoints between intervals:
    \begin{equation}\label{eqn:find_equalities_for_B_j}\bigcup_{i\in[n]} \left\{y\in\mathbb{R} : \mathcal{L}_{n+1}(\lambda,\hat{q}_{-i}(\lambda,y)) = \mathcal{L}_{n+1}(\lambda',\hat{q}_{-i}(\lambda',y))\textnormal{ for some $\lambda\neq\lambda'\in\mathcal{M}_i$}\right\},\end{equation}
    where $\mathcal{M}_i$ is defined as in Proposition~\ref{prop:Mcal_i}.
\end{proposition}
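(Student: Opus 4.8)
The plan is to show that, for each fixed $i\in[n]$, the map $y\mapsto\hat\lambda_{-i}(y)$ is piecewise constant in $y$, with all of its breakpoints contained in the set $\mathcal{T}_i=\{y\in\mathbb{R}:\mathcal{L}_{n+1}(\lambda,\hat{q}_{-i}(\lambda,y))=\mathcal{L}_{n+1}(\lambda',\hat{q}_{-i}(\lambda',y))\text{ for some }\lambda\ne\lambda'\in\mathcal{M}_i\}$; the partition $\{\mathcal{B}_j\}_{j\in\mathcal{J}}$ is then obtained as the common refinement over $i\in[n]$, i.e.\ the connected components of $\mathbb{R}\setminus\bigcup_{i}\mathcal{T}_i$, which is exactly the breakpoint set displayed in~\eqref{eqn:find_equalities_for_B_j}. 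Writing $g_{i,\lambda}(y)=\mathcal{L}_{n+1}(\lambda,\hat{q}_{-i}(\lambda,y))$ for brevity, the key structural fact I would rely on is Proposition~\ref{prop:Mcal_i}, which guarantees $\hat\lambda_{-i}(y)\in\mathcal{M}_i$ for every $y$; this lets me replace the argmin over the full index set $\Lambda$ by an argmin over the finite set $\mathcal{M}_i$.

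The first technical step is to establish continuity of $y\mapsto g_{i,\lambda}(y)$ for each $\lambda$. Since the leave-one-out augmented sample always contains exactly $n$ scores (the $n-1$ retained calibration scores plus the single test score), $\hat{q}_{-i}(\lambda,y)$ is a fixed order statistic of the multiset $\{S^\lambda(X_j,Y_j)\}_{j\in[n]\setminus\{i\}}\cup\{S^\lambda(X_{n+1},y)\}$ whose index does not depend on $y$. Only one element of this multiset, namely $S^\lambda(X_{n+1},y)$, varies with $y$; the $k$-th order statistic of a multiset in which all but one entry are held fixed is a continuous (indeed $1$-Lipschitz, nondecreasing) function of the moving entry, and by assumption $y\mapsto S^\lambda(X_{n+1},y)$ is continuous. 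Composing with the assumed continuity of $q\mapsto\mathcal{L}_{n+1}(\lambda,q)$ then yields continuity of $g_{i,\lambda}$.

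With continuity in hand, I would fix a connected component $I$ of $\mathbb{R}\setminus\mathcal{T}_i$ and argue that $\hat\lambda_{-i}$ is constant on $I$. For any $\lambda\ne\lambda'\in\mathcal{M}_i$, the continuous function $g_{i,\lambda}-g_{i,\lambda'}$ has no zero on $I$ (by the definition of $\mathcal{T}_i$), so the intermediate value theorem forces it to keep a constant sign throughout $I$; hence the relative ordering of $\{g_{i,\lambda}\}_{\lambda\in\mathcal{M}_i}$ is fixed on $I$, and the minimizer over $\mathcal{M}_i$ is a single index $\lambda^\ast$ independent of $y\in I$. Because Proposition~\ref{prop:Mcal_i} places $\hat\lambda_{-i}(y)$ in $\mathcal{M}_i$, while $\hat\lambda_{-i}(y)$ is simultaneously a global minimizer of $g_{i,\cdot}(y)$ over $\Lambda\supseteq\mathcal{M}_i$, it must in particular minimize over $\mathcal{M}_i$; by the uniqueness just established this gives $\hat\lambda_{-i}(y)=\lambda^\ast$ for all $y\in I$. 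Taking the common refinement over all $i\in[n]$ produces the regions $\mathcal{B}_j$ on which every $\hat\lambda_{-i}$ is simultaneously constant, as claimed.

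I expect the main obstacle to be the subtle interplay between the global argmin (over $\Lambda$) and the restricted argmin (over $\mathcal{M}_i$) at points where ties occur. The no-crossing argument only guarantees a unique minimizer \emph{within} $\mathcal{M}_i$, whereas the selection $\hat\lambda_{-i}(y)$ is a tie-broken argmin over all of $\Lambda$; the crucial role of Proposition~\ref{prop:Mcal_i} is precisely to rule out that the tie-breaking ever lands outside $\mathcal{M}_i$, which is what upgrades "$\lambda^\ast$ attains the minimal value" to the identity $\hat\lambda_{-i}(y)=\lambda^\ast$. The continuity of the quantile map, while routine, also needs to be stated carefully so that the order-statistic index is genuinely constant in $y$ (which holds here because the augmented sample size is always $n$).
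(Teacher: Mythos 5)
Your proof is correct and takes essentially the same route as the paper's: both reduce to a per-$i$ claim, invoke Proposition~\ref{prop:Mcal_i} to confine the (tie-broken) selection to $\mathcal{M}_i$, and use continuity of $y \mapsto \mathcal{L}_{n+1}(\lambda,\hat{q}_{-i}(\lambda,y))$ to conclude that $\hat\lambda_{-i}(\cdot)$ cannot change without two of these curves crossing, i.e., without hitting the breakpoint set~\eqref{eqn:find_equalities_for_B_j}. The differences are presentational: you argue contrapositively (pairwise sign constancy via the intermediate value theorem on each connected component) where the paper constructs the transition point $y=\sup\{y'\ge y_0:\hat\lambda_{-i}(y')=\lambda_0\}$ and passes to one-sided limits, and you make explicit the $1$-Lipschitz order-statistic fact underlying continuity of the quantile map, which the paper uses implicitly.
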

\begin{proof}[Proof of Proposition~\ref{prop:Mcal_i}]
Fix any $\lambda_0\in\Lambda$ and 
suppose $\lambda_0 = \hat\lambda_{-i}(y)$. Then by definition of $\hat\lambda_{-i}(y)$, for any $\lambda\in\Lambda$ we have
\begin{align*}
\mathcal{L}_{n+1}(\lambda_0,\hat{q}_{-i}(\lambda_0,y))
&\leq \mathcal{L}_{n+1}(\lambda,\hat{q}_{-i}(\lambda,y))\\
&=\mathcal{L}_{n+1}\left(\lambda,\textnormal{Quantile}_{(1-\alpha)(1+1/n)}\left(\{S^{\lambda}(X_j,Y_j)\}_{j\in[n]\backslash\{i\}}, S^{\lambda}(X_{n+1},y)\right)\right)\\
&\leq\mathcal{L}_{n+1}\left(\lambda,\textnormal{Quantile}_{(1-\alpha)(1+1/n)+1/n}\left(\{S^{\lambda}(X_j,Y_j)\}_{j\in[n]}\right)\right)\\
&=\mathcal{L}_{n+1}\left(\lambda,\hat{q}_+(\lambda)\right).
\end{align*}
Moreover, if $\mathcal{L}_{n+1}(\lambda,S^\lambda(X_i,Y_i)) > \mathcal{L}_{n+1}(\lambda,\hat{q}(\lambda))$, then we have a tighter bound,
\begin{align*}
\mathcal{L}_{n+1}(\lambda_0,\hat{q}_{-i}(\lambda_0,y))
&\leq \mathcal{L}_{n+1}(\lambda,\hat{q}_{-i}(\lambda,y))\\
&=\mathcal{L}_{n+1}\left(\lambda,\textnormal{Quantile}_{(1-\alpha)(1+1/n)}\left(\{S^{\lambda}(X_j,Y_j)\}_{j\in[n]\backslash\{i\}}, S^{\lambda}(X_{n+1},y)\right)\right)\\
&\leq\mathcal{L}_{n+1}\left(\lambda,\textnormal{Quantile}_{(1-\alpha)(1+1/n)}\left(\{S^{\lambda}(X_j,Y_j)\}_{j\in[n]}\right)\right)\\
&=\mathcal{L}_{n+1}\left(\lambda,\hat{q}(\lambda)\right),
\end{align*}
where the second inequality holds specifically because of our assumption on $\lambda$. Combining these cases, then, we see that
$\mathcal{L}_{n+1}(\lambda_0,\hat{q}_{-i}(\lambda_0,y))\leq u_i(\lambda)$
holds for all $\lambda\in\Lambda$, and therefore,
\[\mathcal{L}_{n+1}(\lambda_0,\hat{q}_{-i}(\lambda_0,y))\leq \min_{\lambda\in\Lambda}u_i(\lambda).\]
On the other hand,
\begin{align*}
    \mathcal{L}_{n+1}(\lambda_0,\hat{q}_{-i}(\lambda_0,y))
    &=\mathcal{L}_{n+1}\left(\lambda_0,\textnormal{Quantile}_{(1-\alpha)(1+1/n)}\left(\{S^{\lambda_0}(X_j,Y_j)\}_{j\in[n]\backslash\{i\}}\cup S^{\lambda_0}(X_{n+1},y)\right)\right)\\
    &\geq \mathcal{L}_{n+1}\left(\lambda_0,\textnormal{Quantile}_{(1-\alpha)(1+1/n)-1/n}\left(\{S^{\lambda_0}(X_j,Y_j)\}_{j\in[n]}\right)\right)\\
    &=\mathcal{L}_{n+1}(\lambda_0,\hat{q}_-(\lambda_0)).
\end{align*}
And moreover, if $\mathcal{L}_{n+1}(\lambda_0,S^\lambda(X_i,Y_i)) < \mathcal{L}_{n+1}(\lambda_0,\hat{q}(\lambda_0))$, then we have a tighter bound,
\begin{align*}
    \mathcal{L}_{n+1}(\lambda_0,\hat{q}_{-i}(\lambda_0,y))
    &=\mathcal{L}_{n+1}\left(\lambda_0,\textnormal{Quantile}_{(1-\alpha)(1+1/n)}\left(\{S^{\lambda_0}(X_j,Y_j)\}_{j\in[n]\backslash\{i\}}\cup S^{\lambda_0}(X_{n+1},y)\right)\right)\\
    &\geq \mathcal{L}_{n+1}\left(\lambda_0,\textnormal{Quantile}_{(1-\alpha)(1+1/n)}\left(\{S^{\lambda_0}(X_j,Y_j)\}_{j\in[n]}\right)\right)\\
    &=\mathcal{L}_{n+1}(\lambda_0,\hat{q}(\lambda_0)).
\end{align*}
where the inequality holds specifically because of our assumption on $\lambda$. Combining these cases, then, we see that
$\mathcal{L}_{n+1}(\lambda_0,\hat{q}_{-i}(\lambda_0,y))\geq l_i(\lambda_0)$.

Combining all these calculations, we have verified that $l_i(\lambda_0) \leq \min_{\lambda\in\Lambda}u_i(\lambda)$, i.e., $\lambda_0\in\mathcal{M}_i$.
\end{proof}
\begin{proof}[Proof of Proposition~\ref{prop:B_j}]
It is sufficient to prove the following holds for any single $i\in[n]$: if for some $y_0<y_1$ we have $\hat\lambda_{-i}(y_0)\neq\hat\lambda_{-i}(y_1)$, then there must exist some $y\in[y_0,y_1]$, and some $\lambda\neq\lambda'\in\mathcal{M}_i$, such that $\mathcal{L}_{n+1}(\lambda,\hat{q}_{-i}(\lambda,y)) = \mathcal{L}_{n+1}(\lambda',\hat{q}_{-i}(\lambda',y))$. (This will imply that $y$ lies in the set defined in~\eqref{eqn:find_equalities_for_B_j}, i.e., the set of breakpoints between the intervals $\mathcal{B}_j$.)

Let $\lambda_0 = \hat\lambda_{-i}(y_0)$ and
define $y = \sup\{y' \geq y_0 : \hat\lambda_{-i}(y') = \lambda_0\}$. Then, for any $\epsilon>0$, we have $y'\in[y,y+\epsilon)$ with $\hat\lambda_{-i}(y')\neq \lambda_0$. Then we can choose $\lambda_1\in\Lambda\backslash\{0\}$ as any value such that $\inf\{y'\geq y_0 : \hat\lambda_{-i}(y') = \lambda_1\} = y$.
We will now verify that $\lambda_0,\lambda_1\in\mathcal{M}_i$, and, that $\mathcal{L}_{n+1}(\lambda_0,\hat{q}_{-i}(\lambda_0,y)) = \mathcal{L}_{n+1}(\lambda_1,\hat{q}_{-i}(\lambda_1,y))$.

First, to see that $\lambda_0\in\mathcal{M}_i$, observe that {$\lambda_0 = \hat\lambda_{-i}(y_0)$ and so we must have $\lambda_0\in\mathcal{M}_i$ by the work above.
Similarly,} by definition of $\lambda_1$, we must have some $y'\geq y$ with $\hat\lambda_{-i}(y') = \lambda_1$, so by a similar argument, we have $\lambda_1\in\mathcal{M}_i$.

To complete the proof, we will now verify $\mathcal{L}_{n+1}(\lambda_0,\hat{q}_{-i}(\lambda_0,y)) = \mathcal{L}_{n+1}(\lambda_1,\hat{q}_{-i}(\lambda_1,y))$. For any $\epsilon>0$, we have some $y''_{\epsilon}\in(y-\epsilon,y]$ and some $y'_{\epsilon}\in[y,y+\epsilon)$ such that $\hat\lambda_{-i}(y''_{\epsilon}) = \lambda_0$ and $\hat\lambda_{-i}(y'_{\epsilon}) = \lambda_1$. Therefore, by optimality of these selected models,
\[\mathcal{L}_{n+1}(\lambda_0,\hat{q}_{-i}(\lambda_0,y''_{\epsilon})) \leq \mathcal{L}_{n+1}(\lambda_1,\hat{q}_{-i}(\lambda_1,y''_{\epsilon})) \]
and
\[\mathcal{L}_{n+1}(\lambda_0,\hat{q}_{-i}(\lambda_0,y'_{\epsilon})) \geq \mathcal{L}_{n+1}(\lambda_1,\hat{q}_{-i}(\lambda_1,y'_{\epsilon})). \]
Moreover, since we have assumed $y\mapsto S^\lambda(x,y)$ is continuous, since $\epsilon$ is arbitrarily small we have
\[\lim_{\epsilon\to 0}\hat{q}_{-i}(\lambda_k,y'_{\epsilon}) = \hat{q}_{-i}(\lambda_k,y)\textnormal{ for each $k=0,1$}.\]
Since we have assumed $\mathcal{L}_{n+1}(\lambda,q)$ is continuous in $q$, therefore,
\[\lim_{\epsilon\to 0}\mathcal{L}_{n+1}(\lambda_k,\hat{q}_{-i}(\lambda_k,y'_{\epsilon})) = \lim_{\epsilon\to 0}\mathcal{L}_{n+1}(\lambda_k,\hat{q}_{-i}(\lambda_k,y''_{\epsilon})) = \mathcal{L}_{n+1}(\lambda_k,\hat{q}_{-i}(\lambda_k,y))\textnormal{ for each $k=0,1$}.\]
Therefore, we have
\begin{equation*}
    \mathcal{L}_{n+1}(\lambda_0,\hat{q}_{-i}(\lambda_0,y)= \lim_{\epsilon\to 0}\mathcal{L}_{n+1}(\lambda_0,\hat{q}_{-i}(\lambda_0,y''_{\epsilon})) \leq  \lim_{\epsilon\to 0}\mathcal{L}_{n+1}(\lambda_1,\hat{q}_{-i}(\lambda_1,y''_{\epsilon})) = \mathcal{L}_{n+1}(\lambda_1,\hat{q}_{-i}(\lambda_1,y))
\end{equation*}
and
\begin{equation*}
    \mathcal{L}_{n+1}(\lambda_0,\hat{q}_{-i}(\lambda_0,y)) = \lim_{\epsilon\to 0}\mathcal{L}_{n+1}(\lambda_0,\hat{q}_{-i}(\lambda_0,y'_{\epsilon})) \geq \lim_{\epsilon\to 0}\mathcal{L}_{n+1}(\lambda_1,\hat{q}_{-i}(\lambda_1,y'_{\epsilon}))=\mathcal{L}_{n+1}(\lambda_1,\hat{q}_{-i}(\lambda_1,y)),
\end{equation*}
which proves that $\mathcal{L}_{n+1}(\lambda_0,\hat{q}_{-i}(\lambda_0,y))=\mathcal{L}_{n+1}(\lambda_1,\hat{q}_{-i}(\lambda_1,y))$, as desired.

\end{proof}

\section{Proofs for theoretical results in Section~\ref{sec: efficiency_res}}

\subsection{Proof for asymptotic optimality results (Theorem~\ref{thm:efficiency_res_} and extension)}\label{app:efficiency}

In this section, we will prove Theorem~\ref{thm:efficiency_res_}, which establishes that \texttt{ModSel-CP} and \texttt{ModSel-CP-LOO} achieve asymptotically optimal prediction set size under certain conditions. While Theorem~\ref{thm:efficiency_res_} works in the special case of the residual score, $S^\lambda(x,y) = |y-f_\lambda(x)|$, here we will present a more general result showing that these optimality guarantees can hold more broadly.

We will begin by generalizing some of the notation and definitions to allow for results beyond the setting of a residual score. First, we specify our goal. The (uncorrected) \texttt{YK-baseline} method returns the prediction set
 \[\widehat{C}_{\texttt{YK-baseline}}(X_{n+1}) = C^{\hat\lambda}_{\hat{q}(\hat\lambda)}(X_{n+1}) = \left\{y \in \mathcal{Y} : S^{\hat\lambda}(X_{n+1},y)\leq \hat{q}(\hat\lambda)\right\}.\]
 In a setting where $\mathcal{L}_{n+1}(\lambda,q)$ is strictly increasing in $q$, this is equivalent to
  \[\widehat{C}_{\texttt{YK-baseline}}(X_{n+1}) = \left\{y\in \mathcal{Y} : \mathcal{L}_{n+1}(\hat\lambda,S^{\hat\lambda}(X_{n+1},y))\leq \mathcal{L}_{n+1}(\hat\lambda,\hat{q}(\hat\lambda))\right\}.\]
Here,
 we will aim to show that, under certain conditions, \texttt{ModSel-CP} and \texttt{ModSel-CP-LOO} return a set that is a slight inflation: namely, that the \texttt{ModSel-CP} and \texttt{ModSel-CP-LOO} lie inside a set of the form
\[\left\{y \in \mathcal{Y} : \mathcal{L}_{n+1}(\hat\lambda,S^{\hat\lambda}(X_{n+1},y))\leq \mathcal{L}_{n+1}(\hat\lambda,\hat{q}(\hat\lambda)) + \kappa\right\},\]
for a small value $\kappa$.

\subsubsection{Formal results} First we make some assumptions and definitions. First, define the set of nearly-optimal models,
\[\Lambda_*(\Delta,\Delta') = \left\{\lambda\in\Lambda : \mathbb{E}\left[\mathcal{L}_{n+1}(\lambda,Q_{1-\alpha-\Delta}(\lambda))\right]\leq  \min_{\lambda'\in\Lambda}\mathbb{E}\left[\mathcal{L}_{n+1}(\lambda',Q_{1-\alpha+\Delta}(\lambda'))\right] + 2\Delta'\right\}.\]
Define also
\[\kappa(\Delta,\Delta') = \max_{\lambda,\lambda'\in\Lambda_*(\Delta,\Delta')}\sup_{y\in\mathcal{Y}}\left| \mathcal{L}_{n+1}(\lambda,S^\lambda(X_{n+1},y)) -  \mathcal{L}_{n+1}({\lambda'},S^{\lambda'}(X_{n+1},y)) \right|\]
(which we will use in the results for \texttt{ModSel-CP}), and
\[\kappa_{\texttt{LOO}}(\Delta,\Delta') = \max_{\lambda,\lambda'\in\Lambda_*(\Delta,\Delta')}\max_{i=1,\dots,n}\left| \mathcal{L}_{n+1}(\lambda,S^\lambda(X_i,Y_i)) -  \mathcal{L}_{n+1}({\lambda'},S^{\lambda'}(X_i,Y_i)) \right|\]
(which we will use for \texttt{ModSel-CP-LOO}).

\begin{theorem}\label{thm:efficiency_general_}
Assume $(X_i,Y_i)\stackrel{\rm i.i.d.}{\sim}P$, and that $S^\lambda(X,Y)$ has a continuous distribution under $(X,Y)\sim P$, for each $\lambda\in\Lambda$. Assume that, for any fixed values $\{t_\lambda\}_{\lambda\in\Lambda}$, it holds that \begin{equation}\label{eqn:asm_loss_concentrates}\mathbb{P}\left\{\max_{\lambda\in\Lambda}\left|\mathcal{L}_{n+1}(\lambda,t_\lambda) - \mathbb{E}\left[\mathcal{L}_{n+1}(\lambda,t_\lambda)\right]\right|\leq \Delta_n'\right\}\geq 1-\delta'.\end{equation}
Define  
\[\Delta_n = 2\mathcal{R}_n(\Lambda)+   \sqrt{\frac{\log(1/\delta)}{2n}}+\frac{2}{n}.\]
Then, for \textnormal{\texttt{ModSel-CP}}, it holds that
\[\widehat{C}_{\textnormal{\texttt{ModSel-CP}}} (X_{n+1}) \subseteq \left\{y\in \mathcal{Y} : \mathcal{L}_{n+1}({\hat\lambda},S^{\hat\lambda}(X_{n+1},y)) \leq \mathcal{L}_{n+1}({\hat\lambda},\hat{q}(\hat\lambda))  + \kappa(\Delta_n,\Delta'_n)\right\}\]
with probability $\geq 1-2\delta-2\delta'$. Similarly, for \textnormal{\texttt{ModSel-CP-LOO}}, it holds that
\[ \widehat{C}_{\textnormal{\texttt{ModSel-CP-LOO}}} (X_{n+1}) \subseteq \left\{y\in \mathcal{Y} : \mathcal{L}_{n+1}({\hat\lambda},S^{\hat\lambda}(X_{n+1},y)) \leq \mathcal{L}_{n+1}({\hat\lambda},\hat{q}(\hat\lambda))  + \kappa_{\textnormal{\texttt{LOO}}}(\Delta_n,\Delta'_n)\right\} \]
with probability $\geq 1-2\delta-2\delta'$. 
\end{theorem}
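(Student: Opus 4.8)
The plan is to reduce both containments to a single structural fact: on a high-probability event, every model that either method can possibly select lies in the nearly-optimal class $\Lambda_*(\Delta_n,\Delta_n')$. Once this is in hand, both conclusions follow mechanically from the definitions of $\kappa$ and $\kappa_{\texttt{LOO}}$. For \texttt{ModSel-CP} I would invoke the upper bound~\eqref{eqn:ModSel_simple} of Theorem~\ref{thm:ModSel_simplify}: any $y$ in the prediction set admits some $\lambda\in\mathcal{M}$ with $\mathcal{L}_{n+1}(\lambda,S^\lambda(X_{n+1},y))\le\mathcal{L}_{n+1}(\hat\lambda,\hat q(\hat\lambda))$, and if both $\lambda$ and $\hat\lambda$ lie in $\Lambda_*(\Delta_n,\Delta_n')$ then swapping $\lambda$ for $\hat\lambda$ costs at most $\kappa(\Delta_n,\Delta_n')$, which is exactly the claimed inflation. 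For \texttt{ModSel-CP-LOO} I would instead compare, for each $i$, the leave-one-out score $\mathcal{L}_{n+1}(\hat\lambda_{-i}(y),S^{\hat\lambda_{-i}(y)}(X_i,Y_i))$ with $\mathcal{L}_{n+1}(\hat\lambda,S^{\hat\lambda}(X_i,Y_i))$, bounding the gap by $\kappa_{\texttt{LOO}}(\Delta_n,\Delta_n')$ whenever $\hat\lambda_{-i}(y),\hat\lambda\in\Lambda_*$, and then push the bound through the level-$(1-\alpha)(1+1/n)$ quantile using that it commutes with the monotone map $q\mapsto\mathcal{L}_{n+1}(\hat\lambda,q)$, so that the quantile of $\{\mathcal{L}_{n+1}(\hat\lambda,S^{\hat\lambda}(X_i,Y_i))\}_i$ equals $\mathcal{L}_{n+1}(\hat\lambda,\hat q(\hat\lambda))$.

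Thus the real work is the two membership statements $\mathcal{M}\subseteq\Lambda_*(\Delta_n,\Delta_n')$ and $\hat\lambda_{-i}(y)\in\Lambda_*(\Delta_n,\Delta_n')$ for all $i,y$, and the engine for both is a uniform sandwich of the empirical cutoffs by population quantiles. I would first show that, with probability at least $1-2\delta$,
\[Q_{1-\alpha-\Delta_n}(\lambda)\le\hat q_-(\lambda)\le\hat q(\lambda)\le\hat q_+(\lambda)\le Q_{1-\alpha+\Delta_n}(\lambda)\quad\text{for all }\lambda\in\Lambda,\]
where $\hat q_+(\lambda)$ is the level-$((1-\alpha)(1+1/n)+1/n)$ empirical quantile. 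The crucial point is that $\Delta_n$ is deterministic, hence the thresholds $Q_{1-\alpha\pm\Delta_n}(\lambda)$ are deterministic; I can therefore feed the \emph{fixed} threshold vector $t_\lambda=Q_{1-\alpha\pm\Delta_n}(\lambda)$ into a symmetrization-plus-bounded-differences argument whose Rademacher term is controlled exactly by the (threshold-fixed) definition of $\mathcal{R}_n(\Lambda)$, giving a one-sided deviation of at most $2\mathcal{R}_n(\Lambda)+\sqrt{\log(1/\delta)/(2n)}=\Delta_n-2/n$. A short computation then checks that the residual $2/n$ slack in $\Delta_n$ absorbs both the $(1-\alpha)/n$ gap between level $(1-\alpha)(1+1/n)$ and $1-\alpha$ and the $1/n$ granularity of the order statistics. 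The outer inequalities $\hat q_-(\lambda)\le\hat q_{-i}(\lambda,y)\le\hat q_+(\lambda)$, which hold deterministically for every $i,y$ (these are precisely the bounds already established in the proof of Proposition~\ref{prop:Mcal_i}), let the same sandwich cover the leave-one-out cutoffs too.

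With the sandwich in place and the loss-concentration assumption~\eqref{eqn:asm_loss_concentrates} applied at the two deterministic threshold vectors $Q_{1-\alpha\pm\Delta_n}$ (costing $2\delta'$), the membership facts follow from optimality. Writing $\underline L(\lambda)=\mathbb{E}[\mathcal{L}_{n+1}(\lambda,Q_{1-\alpha-\Delta_n}(\lambda))]$ and $\overline L(\lambda)=\mathbb{E}[\mathcal{L}_{n+1}(\lambda,Q_{1-\alpha+\Delta_n}(\lambda))]$: if $\lambda\in\mathcal{M}$ then $\mathcal{L}_{n+1}(\lambda,\hat q_-(\lambda))\le\mathcal{L}_{n+1}(\hat\lambda,\hat q(\hat\lambda))$; monotonicity and the sandwich lower-bound the left side by $\mathcal{L}_{n+1}(\lambda,Q_{1-\alpha-\Delta_n}(\lambda))$, while optimality of $\hat\lambda$ against $\lambda^\ast=\arg\min_{\lambda'}\overline L(\lambda')$ together with the sandwich upper-bound the right side by $\mathcal{L}_{n+1}(\lambda^\ast,Q_{1-\alpha+\Delta_n}(\lambda^\ast))$. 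Two applications of loss concentration convert the empirical losses to expectations and yield $\underline L(\lambda)\le\min_{\lambda'}\overline L(\lambda')+2\Delta_n'$, i.e.\ $\lambda\in\Lambda_*$. An identical argument with $\hat q_{-i}(\lambda,y)$ in place of $\hat q(\lambda)$, leaning on the deterministic outer bounds above, gives $\hat\lambda_{-i}(y)\in\Lambda_*$ uniformly in $i,y$; and since $\hat\lambda\in\mathcal{M}$, it too lies in $\Lambda_*$. A union bound over the four events gives the stated probability $1-2\delta-2\delta'$.

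The main obstacle I anticipate is the quantile-sandwich bookkeeping: I must verify that the $2/n$ slack in $\Delta_n$ is exactly enough to simultaneously absorb (i) the inflation of the nominal level from $1-\alpha$ to $(1-\alpha)(1+1/n)$ and its $\pm1/n$ shifts defining $\hat q_\pm$, and (ii) the one-sided McDiarmid deviation, which I can only bound by $\Delta_n-2/n$. Getting these constants to align—so that the fixed-threshold Rademacher bound suffices, given that the definition of $\mathcal{R}_n(\Lambda)$ places the supremum over threshold vectors \emph{outside} the expectation and therefore does \emph{not} directly furnish a uniform-over-$t$ empirical-process bound—is the delicate step. Everything downstream is monotonicity of $q\mapsto\mathcal{L}_{n+1}(\lambda,q)$, the optimality defining $\hat\lambda$, $\hat\lambda(y)$, and $\hat\lambda_{-i}(y)$, and the triangle inequalities already encoded in $\kappa$ and $\kappa_{\texttt{LOO}}$.
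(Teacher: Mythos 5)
Your proposal is correct and follows essentially the same route as the paper's proof: the same reduction via Theorem~\ref{thm:ModSel_simplify} and the deterministic leave-one-out quantile bounds underlying Proposition~\ref{prop:Mcal_i} to the membership statements $\mathcal{M}\subseteq\Lambda_*(\Delta_n,\Delta_n')$ and $\hat\lambda_{-i}(y)\in\Lambda_*(\Delta_n,\Delta_n')$, the same fixed-threshold symmetrization-plus-McDiarmid sandwich of $\hat{q}_-(\lambda)\le\hat{q}(\lambda)\le\hat{q}_+(\lambda)$ between $Q_{1-\alpha-\Delta_n}(\lambda)$ and $Q_{1-\alpha+\Delta_n}(\lambda)$ (costing $2\delta$), and the same two applications of the loss-concentration assumption~\eqref{eqn:asm_loss_concentrates} at the deterministic threshold vectors (costing $2\delta'$). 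The bookkeeping you flag as delicate---the $2/n$ slack in $\Delta_n$ absorbing the shift from level $1-\alpha$ to $(1-\alpha)(1+1/n)\pm 1/n$---works out exactly as you describe and matches the paper's calculation.
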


Of course, for this result to be meaningful, as in the residual score setting (in Theorem~\ref{thm:efficiency_res_}) we need to be in a setting where the Rademacher complexity term $\mathcal{R}_n(\Lambda)$ is small, and where the term $\kappa(\Delta_n,\Delta'_n)$ or $\kappa_{\texttt{LOO}}(\Delta_n,\Delta'_n)$ is small---this latter condition again is simply expressing the idea that nearly-optimal models should be similar to each other (see earlier discussion in Section~\ref{sec:examples_for_optimality}).

\subsubsection{The residual score as a special case}
In the residual score setting, we have $\mathcal{L}_{n+1}(\lambda,q) = q$ by definition, and therefore we can take $\Delta'_n=0$ and $\delta'=0$.
By definition we have $\Lambda_*(\Delta) = \Lambda_*(\Delta,0)$ (where $\Lambda_*(\Delta)$ is defined as in Theorem~\ref{thm:efficiency_res_} while $\Lambda_*(\Delta,0)$ is defined as in the general theorem, Theorem~\ref{thm:efficiency_general_}). We also calculate
\begin{align*}
    \kappa(\Delta,0) &= \max_{\lambda,\lambda'\in\Lambda_*(\Delta,0)}\sup_{y\in\mathcal{Y}}\left| \mathcal{L}_{n+1}(\lambda,S^\lambda(X_{n+1},y)) -  \mathcal{L}_{n+1}({\lambda'},S^{\lambda'}(X_{n+1},y)) \right| \\
    &= \max_{\lambda,\lambda'\in\Lambda_*(\Delta)}\sup_{y\in\mathcal{Y}}\left| S^\lambda(X_{n+1},y) - S^{\lambda'}(X_{n+1},y) \right| \\
    &= \max_{\lambda,\lambda'\in\Lambda_*(\Delta)}\sup_{y\in\mathcal{Y}}\left| |y - f_\lambda(X_{n+1})| - |y - f_{\lambda'}(X_{n+1})| \right| \\ 
    &\leq \max_{\lambda,\lambda'\in\Lambda_*(\Delta)}\left|  f_\lambda(X_{n+1})- f_{\lambda'}(X_{n+1}) \right| = \kappa(\Delta)
\end{align*}
and similarly,
\[\kappa_{\texttt{LOO}}(\Delta,0) \leq\kappa_{\texttt{LOO}}(\Delta).\]
Then we can see that Theorem~\ref{thm:efficiency_res_} is simply a special case of the more general result in Theorem~\ref{thm:efficiency_general_}.

\subsubsection{Proof of Theorem~\ref{thm:efficiency_general_} for \texttt{ModSel-CP}}

\paragraph{Step 1: reduce to an event on sets.}
    First, we compute a deterministic bound on the prediction set. For \texttt{ModSel-CP}, by Theorem~\ref{thm:ModSel_simplify} we have
    \[\widehat{C}_{\textnormal{\texttt{ModSel-CP}}} (X_{n+1}) \subseteq \left\{y \in \mathcal{Y}: \min_{\lambda\in\mathcal{M}}\mathcal{L}_{n+1}(\lambda,S^\lambda(X_{n+1},y))\le\mathcal{L}_{n+1}({\hat\lambda},\hat{q}(\hat\lambda)) \right\}.\]
    We can therefore relax this to
    \begin{multline*}\widehat{C}_{\textnormal{\texttt{ModSel-CP}}} (X_{n+1}) \subseteq \Bigg\{y \in \mathcal{Y}: 
    \mathcal{L}_{n+1}({\hat\lambda},S^{\hat\lambda}(X_{n+1},y)) \leq\\{} \mathcal{L}_{n+1}({\hat\lambda},\hat{q}(\hat\lambda))  + \max_{\lambda\in\mathcal{M}}\left| \mathcal{L}_{n+1}(\lambda,S^\lambda(X_{n+1},y)) -  \mathcal{L}_{n+1}({\hat\lambda},S^{\hat\lambda}(X_{n+1},y)) \right| \Bigg\}.\end{multline*}
Moreover, on the event that $\mathcal{M}\subseteq\Lambda_{\ast}(\Delta_n,\Delta'_n)$, we can simplify further: by definition of $\kappa(\Delta_n,\Delta'_n)$ (and using the fact that $\hat\lambda\in\mathcal{M}$ by definition), we have
\begin{equation*}
    \mathcal{M}\subseteq\Lambda_{\ast}(\Delta_n,\Delta'_n) \Longrightarrow  \widehat{C}_{\textnormal{\texttt{ModSel-CP}}} (X_{n+1}) \subseteq \left\{y : \mathcal{L}_{n+1}({\hat\lambda},S^{\hat\lambda}(X_{n+1},y)) \leq \mathcal{L}_{n+1}({\hat\lambda},\hat{q}(\hat\lambda))  + \kappa(\Delta_n,\Delta'_n)\right\}.
\end{equation*}
Therefore,
\begin{multline*}
    \mathbb{P}\left\{\widehat{C}_{\textnormal{\texttt{ModSel-CP}}} (X_{n+1}) \subseteq \left\{y : \mathcal{L}_{n+1}({\hat\lambda},S^{\hat\lambda}(X_{n+1},y)) \leq \mathcal{L}_{n+1}({\hat\lambda},\hat{q}(\hat\lambda))  + \kappa(\Delta_n,\Delta'_n)\right\}\right\} \\\geq 1  - \mathbb{P}\{\mathcal{M}\not\subseteq\Lambda_*(\Delta_n,\Delta'_n)\}.
\end{multline*}
From this point on, we only need to bound this remaining probability, $\mathbb{P}\{\mathcal{M}\not\subseteq\Lambda_*(\Delta_n,\Delta'_n)\}$.

\paragraph{Step 2: concentration bounds.} 
 We now need to establish some concentration results on the sample quantiles of the scores. 
 Fix any $\tau\in[0,1]$.
First, by the definition of Rachemacher complexity, together with a standard symmetrization step \citep[Section 2.1]{koltchinskii2011oracle}, we have that
\begin{align*}
&\mathbb{E}\Bigg[\max_{\lambda\in\Lambda} \Bigg|  \frac{1}{n}\sum_{i=1}^n\Bigg(\mathbf{1}\left\{S^\lambda(X_i,Y_i)\leq Q_{\tau - 2\mathcal{R}_n(\Lambda)-  \sqrt{\frac{\log(1/\delta)}{2n}}}(\lambda) \right\} - \left(\tau - 2\mathcal{R}_n(\Lambda)-  \sqrt{\frac{\log(1/\delta)}{2n}}\right)\Bigg)\Bigg|\Bigg] \\
& = \mathbb{E}\Bigg[\max_{\lambda\in\Lambda} \Bigg| \frac{1}{n}\sum_{i=1}^n\Bigg(\mathbf{1}\left\{S^\lambda(X_i,Y_i)\leq Q_{\tau - 2\mathcal{R}_n(\Lambda)-  \sqrt{\frac{\log(1/\delta)}{2n}}}(\lambda) \right\} - \mathbb{P}\{S^\lambda(X,Y)\leq Q_{\tau - 2\mathcal{R}_n(\Lambda)-  \sqrt{\frac{\log(1/\delta)}{2n}}}(\lambda)\}\Bigg)\Bigg|\Bigg]\\
& \leq 2\mathbb{E}\left[\max_{\lambda\in\Lambda} \left| \frac{1}{n}\sum_{i=1}^n\xi_i\mathbf{1}\left\{S^\lambda(X_i,Y_i)\leq Q_{\tau - 2\mathcal{R}_n(\Lambda)-  \sqrt{\frac{\log(1/\delta)}{2n}}}(\lambda)\right\} \right|\right]   \le 2\mathcal{R}_n(\Lambda),
\end{align*}
where $\xi_i\stackrel{\rm i.i.d.}{\sim}\textnormal{Uniform}\{\pm 1\}$, and where the first step uses the fact that
$$\mathbb{P}\left\{S^\lambda(X,Y)\leq Q_{\tau - 2\mathcal{R}_n(\Lambda)-  \sqrt{\frac{\log(1/\delta)}{2n}}}(\lambda) \right\} = \tau - 2\mathcal{R}_n(\Lambda)-  \sqrt{\frac{\log(1/\delta)}{2n}}$$ (by assumption of the continuous distribution of the scores).
In particular, this implies
\[\mathbb{E}\left[\max_{\lambda\in\Lambda}  \frac{1}{n} \sum_{i=1}^n\mathbf{1}\left\{S^\lambda(X_i,Y_i)\leq Q_{\tau - 2\mathcal{R}_n(\Lambda)-  \sqrt{\frac{\log(1/\delta)}{2n}}}(\lambda)\right\}\right]  \leq \tau -  \sqrt{\frac{\log(1/\delta)}{2n}}.\]
By McDiarmid's inequality,
\begin{multline*}\max_{\lambda\in\Lambda}  \frac{1}{n} \sum_{i=1}^n\mathbf{1}\left\{S^\lambda(X_i,Y_i) < Q_{\tau - 2\mathcal{R}_n(\Lambda)-  \sqrt{\frac{\log(1/\delta)}{2n}}}(\lambda)\right\}\\
< \mathbb{E}\left[\max_{\lambda\in\Lambda}  \frac{1}{n} \sum_{i=1}^n\mathbf{1}\left\{S^\lambda(X_i,Y_i) < Q_{\tau - 2\mathcal{R}_n(\Lambda)-  \sqrt{\frac{\log(1/\delta)}{2n}}}(\lambda)\right\}\right] + \sqrt{\frac{\log(1/\delta)}{2n}} \leq \tau \end{multline*}
with probability at least $1-\delta$, and therefore,
\begin{align}
\notag&\mathbb{P}\left\{\textnormal{Quantile}_\tau\left(\{S^\lambda(X_i,Y_i)\}_{i\in[n]}\right) \geq Q_{\tau- 2\mathcal{R}_n(\Lambda)-   \sqrt{\frac{\log(1/\delta)}{2n}}}(\lambda),\ \forall\ \lambda\in\Lambda\right\}\\
\notag&=\mathbb{P}\left\{\sum_{i=1}^n\mathbf{1}\left\{S^\lambda(X_i,Y_i) < Q_{\tau -  2\mathcal{R}_n(\Lambda)- \sqrt{\frac{\log(1/\delta)}{2n}}}(\lambda)\right\}< n \tau, \ \forall \ \lambda\in\Lambda\right\}\\
\label{eqn:concentration_all_lambda_lowerbd}&=\mathbb{P}\left\{\max_{\lambda\in\Lambda} \sum_{i=1}^n\mathbf{1}\left\{S^\lambda(X_i,Y_i) < Q_{\tau -  2\mathcal{R}_n(\Lambda)- \sqrt{\frac{\log(1/\delta)}{2n}}}(\lambda)\right\}< n \tau\right\}\geq 1-\delta,
\end{align}
where the second equality is due to the definition of the sample quantile.
Next we need an analogous result for the upper bound. We follow essentially the same steps. First we have
\begin{multline*}
    \mathbb{E}\left[\max_{\lambda\in\Lambda} \left| \frac{1}{n} \sum_{i=1}^n\left(\mathbf{1}\left\{S^\lambda(X_i,Y_i)\leq Q_{\tau + 2\mathcal{R}_n(\Lambda)+  \sqrt{\frac{\log(1/\delta)}{2n}}}(\lambda) \right\} - \left(\tau + 2\mathcal{R}_n(\Lambda)+ \sqrt{\frac{\log(1/\delta)}{2n}}\right)\right)\right|\right]\\ \leq 2\mathcal{R}_n(\Lambda),
\end{multline*}
and therefore,
\[\mathbb{E}\left[\min_{\lambda\in\Lambda} \frac{1}{n} \sum_{i=1}^n\mathbf{1}\left\{S^\lambda(X_i,Y_i) \leq Q_{\tau + 2\mathcal{R}_n(\Lambda)+ \sqrt{\frac{\log(1/\delta)}{2n}}}(\lambda)\right\} \right] \geq \tau + \sqrt{\frac{\log(1/\delta)}{2n}}.\]
Applying McDiarmid's inequality, then, with probability at least $1-\delta$,
\[\min_{\lambda\in\Lambda} \frac{1}{n} \sum_{i=1}^n\mathbf{1}\left\{S^\lambda(X_i,Y_i) \leq Q_{\tau + 2\mathcal{R}_n(\Lambda)+ \sqrt{\frac{\log(1/\delta)}{2n}}}(\lambda)\right\} \geq \tau,\]
which implies
\begin{equation}\label{eqn:concentration_all_lambda_upperbd}\mathbb{P}\left\{\textnormal{Quantile}_\tau\left(\{S^\lambda(X_i,Y_i)\}_{i\in[n]}\right) \leq Q_{\tau+ 2\mathcal{R}_n(\Lambda)+   \sqrt{\frac{\log(1/\delta)}{2n}}}(\lambda),\ \forall\ \lambda\in\Lambda\right\} \geq 1-\delta.
\end{equation}
\paragraph{Step 3: bounding the set.} Now we are ready to prove that $\mathcal{M}\subseteq\Lambda_*(\Delta_n,\Delta'_n)$ with high probability.
From this point on, we assume the event in~\eqref{eqn:concentration_all_lambda_lowerbd} holds with $\tau = (1-\alpha)(1+1/n)-1/n$, and the event in~\eqref{eqn:concentration_all_lambda_upperbd}  holds with $\tau = (1-\alpha)(1+1/n)+1/n$, and also that assumption~\eqref{eqn:asm_loss_concentrates} holds with the choice $t_\lambda = Q_{1-\alpha-\Delta_n}(\lambda)$, and again with the choice $t_\lambda = Q_{1-\alpha+\Delta_n}(\lambda)$. (With probability at least $1-2\delta-2\delta'$, these statements are all true, from the work above.) 

Then for any $\lambda\in\mathcal{M}$,
\begin{multline*}
    \mathcal{L}_{n+1}(\lambda,\hat{q}_{-}(\lambda)) \le \mathcal{L}_{n+1}(\hat\lambda,\hat{q}(\hat\lambda)) = \min_{\lambda'\in\Lambda}\mathcal{L}_{n+1}(\lambda',\hat{q}(\lambda'))\\\Longrightarrow \mathcal{L}_{n+1}(\lambda,Q_{1-\alpha-\Delta_n}(\lambda))\leq \min_{\lambda'\in\Lambda}\mathcal{L}_{n+1}(\lambda',Q_{1-\alpha+\Delta_n}(\lambda')),
\end{multline*}
where the first inequality holds by definition of $\mathcal{M}$, while the implication holds by events~\eqref{eqn:concentration_all_lambda_lowerbd} and~\eqref{eqn:concentration_all_lambda_upperbd} with the choices of $\tau$ defined above (along with the fact that $\mathcal{L}_{n+1}$ is monotone in $q$).
By assumption~\eqref{eqn:asm_loss_concentrates}, this implies
\[\mathbb{E}\left[\mathcal{L}_{n+1}(\lambda,Q_{1-\alpha-\Delta_n}(\lambda))\right]\leq  \min_{\lambda'\in\Lambda}\mathbb{E}\left[\mathcal{L}_{n+1}(\lambda',Q_{1-\alpha+\Delta_n}(\lambda'))\right] + 2\Delta'_n.\]
Therefore we have
\[\lambda\in\Lambda_*(\Delta_n,\Delta'_n),\]
as desired.

\subsubsection{Proof of Theorem~\ref{thm:efficiency_general_} for \texttt{ModSel-CP-LOO}}
\paragraph{Step 1: reduce to an event on sets.}
As for \texttt{ModSel-CP}, we begin with a deterministic calculation: since for each $i$ and all $y$ we have $\hat\lambda_{-i}(y)\in\mathcal{M}_i$ (as proved in Proposition~\ref{prop:Mcal_i}), we have
\begin{align*}
\widehat{C}_{\texttt{ModSel-CP-LOO}} (X_{n+1}) &= \Bigg\{y \in \mathcal{Y}: 
\mathcal{L}_{n+1}\big({\hat\lambda}, S^{\hat\lambda}(X_{n+1},y)\big) \leq {}\\
&\hspace{0.7in}\textnormal{Quantile}_{(1-\alpha)(1+1/n)}\left(\left\{ \mathcal{L}_{n+1}\big({\hat\lambda_{-i}(y)},S^{\hat\lambda_{-i}(y)}(X_i,Y_i)\big)\right\}_{i\in[n]}\right)\Bigg\}\\
&\subseteq \Bigg\{y \in \mathcal{Y}: 
\mathcal{L}_{n+1}\big({\hat\lambda}, S^{\hat\lambda}(X_{n+1},y)\big) \leq {}\\
&\hspace{0.7in}\textnormal{Quantile}_{(1-\alpha)(1+1/n)}\left(\left\{ \max_{\lambda\in\mathcal{M}_i}\mathcal{L}_{n+1}\big({\lambda},S^\lambda(X_i,Y_i)\big)\right\}_{i\in[n]}\right)\Bigg\} .\end{align*}
Now we will consider the event that $\mathcal{M}_i\subseteq\Lambda_*(\Delta_n,\Delta'_n)$ for all $i$: by definition of $\kappa_{\texttt{LOO}}(\Delta_n,\Delta'_n)$, and using the fact that $\hat\lambda\in\mathcal{M}_i$ by definition,
\[\mathcal{M}_i\subseteq\Lambda_*(\Delta_n) \Longrightarrow \max_{\lambda\in\mathcal{M}_i}\mathcal{L}_{n+1}\big(\lambda,S^\lambda (X_i,Y_i)\big) \leq \mathcal{L}_{n+1}\big(\hat\lambda,S^{\hat\lambda}(X_i,Y_i)\big) + \kappa_{\texttt{LOO}}(\Delta_n,\Delta'_n), \]
for all $i$. Therefore, if $\mathcal{M}_i \subseteq\Lambda_*(\Delta_n,\Delta'_n) $ holds for all $i=1,\dots,n$, then we have
\begin{multline*}\textnormal{Quantile}_{(1-\alpha)(1+1/n)}\left(\left\{ \max_{\lambda\in\mathcal{M}_i}\mathcal{L}_{n+1}\big({\lambda},S^\lambda(X_i,Y_i)\big)\right\}_{i\in[n]}\right) \leq {}\\ \textnormal{Quantile}_{(1-\alpha)(1+1/n)}\left(\left\{ \mathcal{L}_{n+1}\big({\hat\lambda},S^{\hat\lambda}(X_i,Y_i)\big)\right\}_{i\in[n]}\right)  +\kappa_{\texttt{LOO}}(\Delta_n,\Delta'_n) = \hat{q}(\hat\lambda) + \kappa_{\texttt{LOO}}(\Delta_n,\Delta'_n) . \end{multline*}
This implies that
\begin{multline*}\mathbb{P}\left\{ \widehat{C}_{\texttt{ModSel-CP-LOO}} (X_{n+1}) 
 \subseteq \left\{y\in\mathcal{Y} : \mathcal{L}_{n+1}(\hat\lambda,S^{\hat\lambda}(X_{n+1},y))\leq   \mathcal{L}_{n+1}(\hat{\lambda}, \hat{q}(\hat\lambda)  ) + \kappa_{\texttt{LOO}}(\Delta_n,\Delta'_n) \right\}\right\}\\ \geq 1 - \mathbb{P}\{\cup_{i\in[n]}\mathcal{M}_i \not \subseteq\Lambda_*(\Delta_n,\Delta'_n)\}.\end{multline*}
We now need to bound this last probability.

\paragraph{Step 2: concentration bounds.} The calculations for Step 2 are identical as for \texttt{ModSel-CP}, so we do not repeat them here.

\paragraph{Step 3: bounding the set.}
Now we need to show that $\mathcal{M}_i\subseteq\Lambda_*(\Delta_n,\Delta'_n)$ for all $i$, with high probability.
From this point on, we assume the event in~\eqref{eqn:concentration_all_lambda_lowerbd} holds with $\tau = (1-\alpha)(1+1/n)-1/n$, and the event in~\eqref{eqn:concentration_all_lambda_upperbd}  holds with $\tau = (1-\alpha)(1+1/n)+1/n$, and also that assumption~\eqref{eqn:asm_loss_concentrates} holds with the choice $t_\lambda = Q_{1-\alpha-\Delta_n}(\lambda)$, and again with the choice $t_\lambda = Q_{1-\alpha+\Delta_n}(\lambda)$. (With probability at least $1-2\delta-2\delta'$, these statements are all true, from the work above.) 

Now fix any $i$ and consider any $\lambda\in\mathcal{M}_i$.
By definition of $\mathcal{M}_i$, we have
\begin{multline*}l_i(\lambda)\leq \min_{\lambda'\in\Lambda}u_i(\lambda')
\Longrightarrow  \mathcal{L}_{n+1}(\lambda,\hat{q}_{-}(\lambda)) \leq  \min_{\lambda'\in\Lambda}\mathcal{L}_{n+1}(\lambda',\hat{q}_{+}(\lambda'))\\
\Longrightarrow
\mathcal{L}_{n+1}(\lambda,Q_{1-\alpha-\Delta_n}(\lambda)) \leq \min_{\lambda'\in\Lambda}\mathcal{L}_{n+1}(\lambda',Q_{1-\alpha+\Delta_n}(\lambda')).\end{multline*}
Therefore,
\[\mathbb{E}\left[\mathcal{L}_{n+1}(\lambda,Q_{1-\alpha-\Delta_n}(\lambda))\right] \leq \min_{\lambda'\in\Lambda}\mathbb{E}\left[\mathcal{L}_{n+1}(\lambda',Q_{1-\alpha+\Delta_n}(\lambda'))\right] + 2\Delta'_n,\]
and therefore 
$\lambda\in\Lambda_*(\Delta_n,\Delta'_n)$
as desired.

\subsection{Proof for the regularity condition example: Proposition~\ref{prop:positive_example}}
\paragraph{Bounding the Rademacher complexity.}
Fix any set of $\{t_\lambda : \lambda \in \Lambda\}$, and denote
\[\mathcal{C}_{\Lambda} = \left\{\mathbf{1}\{|y-\lambda^{\top}x| \le t_\lambda\}: \lambda \in \Lambda \right\}.\]
We aim to bound the Rademacher complexity of class $\mathcal{C}_\Lambda$.

First, note that for any $\lambda \in \mathbb{R}^d$ and any $t \in \mathbb{R}$,
\[\mathbf{1}\{|y-\lambda^{\top}x| \le t\} = \mathbf{1} \left\{ y - \lambda^{\top}x - t \le 0 \right\} \cdot \mathbf{1} \left\{y - \lambda^{\top}x +t \ge 0\right\}, \]
which is the product of two linear classifiers. Since it is known that the VC dimension of the class of all $s$-sparse linear classifiers upper bounded by $4s \mathrm{log}(de/s)$ \citep[Lemma 1]{abramovich2018high},  we have that, by \citet[Lemma 3.2.3]{blumer1989Learnability},
\[\mathrm{VC}\textnormal{-dim}(\mathcal{C}_\Lambda) \le 32\mathrm{log}(6) \cdot s \mathrm{log}\left(\frac{ed}{s}\right).\]
In other words, the class $\mathcal{C}_\Lambda$ has finite VC dimension. Thus, it holds from the Haussler's bound \citep{Van1996Weak} that there exist universal constants $C_1>0$, for any probability measure $Q$,
\[ N(\epsilon, \mathcal{C}_\Lambda, L_2(Q)) \le C_1 \cdot \left(s \mathrm{log}\frac{ed}{s} \right) \cdot \left( \frac{16e}{\epsilon^2}\right)^{\left(s \mathrm{log}\frac{ed}{s} \right)},\]
where $N(\epsilon, \mathcal{C}_\Lambda, L_2(Q)) $ is the $\epsilon$-covering number of $\mathcal{C}_\Lambda$ under the metric of $L_2(Q)$.

Finally, it follows from \cite[Theorem 3.11]{koltchinskii2011oracle} that
\begin{align*}
    \mathcal{R}_n(\Lambda) = \sup \limits_{\{t_\lambda : \lambda \in \Lambda\}}\mathcal{R}_n(\mathcal{C}_\Lambda) &\le  \sup \limits_{\{t_\lambda : \lambda \in \Lambda\}} \frac{C_2}{\sqrt{n}} \int_{0}^1 \sqrt{\mathrm{log}(N(\epsilon, \mathcal{C}_\Lambda, L_2(P_n)))} d\epsilon \\
    & \le  \frac{C_2}{\sqrt{n}} \int_{0}^1 \sqrt{\mathrm{log} \left(C_1 \cdot s \mathrm{log}\left(\frac{ed}{s}\right) \right) + s \mathrm{log}\left(\frac{ed}{s}\right) \mathrm{log} \left( \frac{16e}{\epsilon^2}\right) }d\epsilon \\
    & \le C_3 \cdot \sqrt{\frac{s \mathrm{log}\left(\frac{ed}{s}\right)}{n}} \int_{0}^1 \sqrt{1 + \mathrm{log}\left( \frac{1}{\epsilon}\right)} d \epsilon \\
    & \le C \cdot \sqrt{\frac{s \mathrm{log}\left(\frac{ed}{s}\right)}{n}},
\end{align*}
where $C_2, C_3, C>0$ are different constants.

\paragraph{Bounding $\kappa(\Delta)$ and $\kappa_{\textnormal{\texttt{LOO}}}(\Delta)$.}
First, by construction, for any $\lambda$ and for $(X,Y)\sim P$ we have
\[Y - X^\top\lambda \sim \mathcal{N}(0,(\lambda-\lambda_{\rm true})^\top \Sigma (\lambda-\lambda_{\rm true}) + \sigma^2)\]
and so
\[S^\lambda(X,Y) = |Y - X^\top\lambda| \sim \sqrt{\|\lambda-\lambda_{\rm true}\|^2_\Sigma+ \sigma^2} \cdot \chi_1,\]
where $\|a\|_\Sigma = \sqrt{a^\top \Sigma a}$.
Therefore, for any $\tau$,
\[Q_{\tau}(\lambda) = \sqrt{\|\lambda-\lambda_{\rm true}\|^2_\Sigma + \sigma^2} \cdot \Phi^{-1}((1+\tau)/2),\]
since $\Phi^{-1}((1+\tau)/2)$ is the $\tau$-quantile of the $\chi_1$ distribution.
So,
\[\min_{\lambda\in\Lambda} Q_\tau(\lambda) = \Phi^{-1}((1+\tau)/2) \cdot \min_{\lambda\in\Lambda} \sqrt{\|\lambda-\lambda_{\rm true}\|^2_\Sigma + \sigma^2}  = \Phi^{-1}((1+\tau)/2)\cdot \sigma\sqrt{\gamma^2+1},\]
by definition of $\gamma$. Therefore,
\begin{multline*}\Lambda_{\ast}(\Delta) =\left\{\lambda\in\Lambda : \sqrt{\|\lambda-\lambda_{\rm true}\|^2_\Sigma + \sigma^2} \cdot \Phi^{-1}\left(1 - \frac{\alpha+\Delta}{2}\right) \leq \Phi^{-1}\left(1 - \frac{\alpha-\Delta}{2}\right) \cdot \sigma\sqrt{\gamma^2+1}\right\} \\=\left\{\lambda\in\Lambda :  \|\lambda-\lambda_{\rm true}\|_\Sigma \leq \sigma \sqrt{(\gamma^2+1) \cdot \left(\frac{\Phi^{-1}\left(1 - \frac{\alpha-\Delta}{2}\right)}{\Phi^{-1}\left(1 - \frac{\alpha+\Delta}{2}\right)}\right)^{2} - 1}\right\}\\ = \left\{\lambda\in\Lambda : \|\lambda-\lambda_{\rm true}\|_\Sigma \leq \sigma \sqrt{(1+\gamma^2)(1+\phi)^2 - 1}\right\}.\end{multline*}

Denote $D_{J} \in \mathbb{R}^{d \times d}$ as the diagonal matrix where $D_{jj} = \mathbf{1}\{j \in J\}$ for any $j \in [d]$. Denote $\textnormal{\texttt{supp}}(v) = \{j \in [d]: v_j \ne 0\}$.
Since for any $\lambda,\lambda'\in\Lambda_{\ast}(\Delta) \subseteq \Lambda$, and for any $x$,
\begin{multline*}
    |f_{\lambda}(x) - f_{\lambda'}(x)| = \left|(\lambda - \lambda')^\top x\right| = \left|(\lambda - \lambda')^\top D_{\textnormal{\texttt{supp}}(\lambda) \cup \textnormal{\texttt{supp}}(\lambda')} x\right| \\ 
    = \left|(\lambda - \lambda')^\top \Sigma^{1/2} \cdot \Sigma^{-1/2} D_{\textnormal{\texttt{supp}}(\lambda) \cup \textnormal{\texttt{supp}}(\lambda')}x\right| \leq \|\lambda - \lambda'\|_\Sigma \cdot \|\Sigma^{-1/2} D_{\textnormal{\texttt{supp}}(\lambda) \cup \textnormal{\texttt{supp}}(\lambda')} x\|,
\end{multline*}
therefore,
\begin{multline*}
    \sup_{\lambda,\lambda'\in\Lambda_{\ast}(\Delta)} |f_{\lambda}(x)-f_{\lambda'}(x)| \leq \sup_{\lambda,\lambda'\in\Lambda_{\ast}(\Delta)}\|\lambda - \lambda'\|_\Sigma \cdot \|\Sigma^{-1/2} D_{\textnormal{\texttt{supp}}(\lambda) \cup \textnormal{\texttt{supp}}(\lambda')} x\| \\
    \leq 2\sigma \sqrt{(1+\gamma^2)(1+\phi)^2 - 1} \cdot \sup \limits_{\substack{J\subseteq [d] \\ |J|=2s}} \|\Sigma^{-1/2} D_{J} x\| .
\end{multline*}
Therefore,
\[\kappa(\Delta) \leq 2\sigma \sqrt{(1+\gamma^2)(1+\phi)^2 - 1}\cdot  \sup \limits_{\substack{J\subseteq [d] \\ |J|=2s}} \|\Sigma^{-1/2} D_{J} X_{n+1}\|, \]
and similarly
\[\kappa_{\texttt{LOO}}(\Delta) \leq 2\sigma \sqrt{(1+\gamma^2)(1+\phi)^2 - 1} \cdot \max_{i=1,\dots,n} \sup \limits_{\substack{J\subseteq [d] \\ |J|=2s}} \|\Sigma^{-1/2} D_{J} X_{i}\| .\]

Note that for any $J \subseteq [d]$, $\mathrm{rank}(\Sigma^{-1/2} D_{J} \Sigma^{1/2}) = |J|$ and the largest singular value of $\Sigma^{-1/2} D_{J} \Sigma^{1/2}$ is upper bounded by $\sqrt{\frac{\eta_{\max}(\Sigma)}{\eta_{\min}(\Sigma)}}$, where $\eta(\Sigma)$ denotes the eigenvalue of $\Sigma$. Thus, write $X = \Sigma^{1/2}Z$ where $Z \sim \mathcal{N}(0, I_d)$, and we have that
\[\sup \limits_{\substack{J\subseteq [d] \\ |J|=2s}} \|\Sigma^{-1/2} D_{J} X\| = \sup \limits_{\substack{J\subseteq [d] \\ |J|=2s}} \|\Sigma^{-1/2} D_{J} \Sigma^{1/2}Z\|\le_{st.} \max \limits_{\substack{J\subseteq [d] \\ |J|=2s}} \sqrt{\frac{\eta_{\max}(\Sigma)}{\eta_{\min}(\Sigma)}}\cdot\sqrt{\sum_{j \in J} Z_j^2}.\]
Since $\|Z_J\| \sim \chi_{2s}$ for any $J$, we have that
\[\mathbb{P} \left(\max \limits_{\substack{J\subseteq [d] \\ |J|=2s}} \|Z_J\| \ge \sqrt{2s} + \sqrt{2u} \right) \le {d \choose 2s} \cdot \mathbb{P} \left( \|Z_{[2s]}\| \ge \sqrt{2s} + \sqrt{2u} \right) \le {d \choose 2s} \cdot e^{-u}.\]
Set $u = \mathrm{log}{d \choose 2s} + \mathrm{log}(1/\delta)$, we have that
\begin{multline*}
    \mathbb{P} \left(\max \limits_{\substack{J\subseteq [d] \\ |J|=2s}} \|Z_J\| \le \sqrt{2s} + \sqrt{2 \left(2s\mathrm{log}\left(\frac{ed}{2s} \right) + \mathrm{log}(1/\delta) \right)} \right) \\ \ge \mathbb{P} \left(\max \limits_{\substack{J\subseteq [d] \\ |J|=2s}} \|Z_J\| \le \sqrt{2s} + \sqrt{2 \left(\mathrm{log}{d \choose 2s} + \mathrm{log}(1/\delta) \right)} \right) \ge 1-\delta.
\end{multline*}
Thus, we have that
\[\mathbb{P}  \left\{\kappa(\Delta) \le 8\sigma \sqrt{\frac{\eta_{\max}(\Sigma)}{\eta_{\min}(\Sigma)}}\cdot\sqrt{(1+\gamma^2)(1+\phi)^2 - 1}\cdot \sqrt{ s\mathrm{log}\left(\frac{ed}{s} \right) + \mathrm{log}(1/\delta)}\right\} \ge 1-\delta.\]
Similarly, we have that
\[\mathbb{P}  \left\{\kappa_{\textnormal{\texttt{LOO}}}(\Delta) \le 8\sigma\sqrt{\frac{\eta_{\max}(\Sigma)}{\eta_{\min}(\Sigma)}}\cdot \sqrt{(1+\gamma^2)(1+\phi)^2 - 1}\cdot \sqrt{ s\mathrm{log}\left(\frac{ed}{s} \right) + \mathrm{log}(n/\delta)}\right\} \ge 1-\delta.\]

\section{Comparing the construction of \texttt{ModSel-CP} and \texttt{ModSel-CP-LOO}}\label{sec:ENSvsLOO-additional}
To better understand our methods, in this section, we take a closer look at the differences between the construction of our two proposed methods, \texttt{ModSel-CP} and \texttt{ModSel-CP-LOO}.

As mentioned in Section~\ref{sec:ModSelvsLOO}, although \texttt{ModSel-CP} and \texttt{ModSel-CP-LOO} share the property that both are strictly more conservative than \texttt{YK-baseline} (under a mild assumption), they differ in the exact way that they enlarge the set from \texttt{YK-baseline} to achieve validity.
Recalling the definition~\eqref{eqn:YK-baseline_define} of \texttt{YK-baseline}, we can equivalently rewrite that method as
\begin{multline}\label{defn:YK-baseline_altdef}\widehat{C}_{\texttt{YK-baseline}}(X_{n+1}) = \left\{y \in \mathcal{Y}: S^{\hat\lambda}(X_{n+1},y)\leq \textnormal{Quantile}_{(1-\alpha)(1+1/n)}\left(S^{\hat\lambda}(X_1,Y_1),\dots,S^{\hat\lambda}(X_n,Y_n)\right)\right\}.\end{multline}
This procedure is not treating the $n+1$ data points (the calibration points $(X_1,Y_1),\dots,(X_n,Y_n)$, and a hypothesized test point $(X_{n+1},y)$) symmetrically, because $\hat\lambda$ is selected based on the first $n$ points only.
Rewriting the definition~\eqref{defn:ModSel} of our \texttt{ModSel-CP} method as
\begin{multline*}    
\widehat{C}_{\texttt{ModSel-CP}} (X_{n+1}) =\left\{y \in \mathcal{Y}: S^{\hat{\lambda}(y)}(X_{n+1},y) \le \mathrm{Quantile}_{(1-\alpha)(1+1/n)} \left(S^{\hat{\lambda}(y)}(X_{1},Y_{1}), \dots, S^{\hat{\lambda}(y)}(X_{n},Y_{n}) \right) \right\},\end{multline*}
we can therefore see that the prediction set $\widehat{C}_{\texttt{ModSel-CP}}(X_{n+1})$ is constructed in exactly the same way, except that we symmetrise the way that the $n+1$ data points are treated, since $\hat{\lambda}(y)$ is now the model selected based on all $n+1$ data points. 

Turning to \texttt{ModSel-CP-LOO}, this variant of our method restores symmetry in a different way. Taking the case of the residual score, $S^\lambda(x,y) = |y-f_\lambda(x)|$, to allow for a simpler and more intuitive comparison, in this special case we have $\mathcal{L}(\lambda,q;x_1,\dots,x_m) =2q$ (i.e., the loss is independent of the model $S^\lambda$ and the evaluation points $x_1,\dots,x_m$), and therefore the \texttt{ModSel-CP-LOO} prediction set defined in~\eqref{defn:LOO} can be simplified to
\begin{multline*}   
\widehat{C}_{\texttt{ModSel-CP-LOO}} (X_{n+1}) \\=\left\{y \in \mathcal{Y}: S^{\hat{\lambda}}(X_{n+1},y) \le \mathrm{Quantile}_{(1-\alpha)(1+1/n)} \left(S^{\hat{\lambda}_{-1}(y)}(X_{1},Y_{1}), \dots, S^{\hat{\lambda}_{-n}(y)}(X_{n},Y_{n}) \right) \right\}.\end{multline*}
In other words, this is again constructed in the same way as \texttt{YK-baseline}~\eqref{defn:YK-baseline_altdef}, but now symmetry is enforced by using the leave-one-out model selection for each data point's score. That is, the selected model $\hat\lambda$ uses the $n$ calibration points $\{(X_j,Y_j)\}_{j\in[n]}$ but not the hypothesized test point $(X_{n+1},y)$; so, analogously, in the quantile calculation, for each data point $(X_i,Y_i)$ we use the selected model $\hat\lambda_{-i}(y)$, which uses all data points $\{(X_j,Y_j)\}_{j\in[n]\backslash \{i\}}\cup\{(X_{n+1},y)\}$.

To take another perspective, we can again examine the \texttt{YK-baseline} construction, and identify which part of this construction is modified by the proposed methods to correct for selection bias. Returning to our expression~\eqref{defn:YK-baseline_altdef}, let us consider the left-hand-side (LHS) and right-hand-side (RHS) terms separately:
\begin{multline*}
    \widehat{C}_{\texttt{YK-baseline}}(X_{n+1}) = \bigg\{y \in \mathcal{Y}: \underbrace{S^{\hat\lambda}(X_{n+1},y)}_{\textnormal{LHS term}}\leq \underbrace{\textnormal{Quantile}_{(1-\alpha)(1+1/n)}\left(S^{\hat\lambda}(X_1,Y_1),\dots,S^{\hat\lambda}(X_n,Y_n)\right)}_{\textnormal{RHS term}}\bigg\}.
\end{multline*}
Since \texttt{YK-baseline} does not correct for selection bias, any valid method will need to be more conservative in terms of the criterion for including $y$ into the prediction set. Again, for simplicity, consider the case of the residual score, $S^\lambda(x,y) = |y-f_\lambda(x)|$, to allow for a more intuitive comparison.

Under this special score, the \texttt{ModSel-CP-LOO} prediction set is simplified to,
\begin{multline*}   \widehat{C}_{\texttt{ModSel-CP-LOO}} (X_{n+1}) \\=\left\{y \in \mathcal{Y}: S^{\hat{\lambda}}(X_{n+1},y) \le \mathrm{Quantile}_{(1-\alpha)(1+1/n)} \left(S^{\hat{\lambda}_{-1}(y)}(X_{1},Y_{1}), \dots, S^{\hat{\lambda}_{-n}(y)}(X_{n},Y_{n}) \right) \right\}.\end{multline*}
We see that \texttt{ModSel-CP-LOO} prediction set is constructed by adjusting the value of the RHS term (and keeping the LHS term the same). Conversely, comparing to the
expression~\eqref{eqn:ModSel_simple} in Theorem~\ref{thm:ModSel_simplify},  we see that the \texttt{ModSel-CP} prediction set is constructed by adjusting the value of the LHS term and keeping the RHS term the same.
Since two methods adjust terms differently to be more conservative so that selection bias is accounted for, we would expect the width of their resulting sets may differ under different settings.

For \texttt{ModSel-CP-LOO}, since it enlarges the set by enlarging the RHS term, namely, the adjusted empirical quantile (ignoring variation of model sources), then under settings where the competing models make similar predictions and yet $S^\lambda(X,Y)$ exhibits heavy tails, it is likely that \texttt{ModSel-CP-LOO} is more conservative than \texttt{ModSel-CP}. Evidence for this speculation can be found in the simulation results in Section~\ref{sec:simulation}, by comparing residual score results between ``NormalX + sparse structure + Gaussian noise'' and ``tX + sparse structure + Gaussian noise''.

For \texttt{ModSel-CP}, on the other hand, its major source of conservativeness comes from model multiplicity, i.e., when competing models are indistinguishable via score quantile, it could possibly result in \texttt{ModSel-CP} returning a large set. To be more specific, by \eqref{eqn:ModSel_simple}, if there are many competing models ($|\mathcal{M}|$ large) while each model can predict differently, the \texttt{ModSel-CP} set would be wide. This could be the case when both the data distribution and the model class are highly symmetrical. To demonstrate this point, we provide an additional simulation below.

\begin{itemize}
    \item \textbf{Distribution of $(X,Y)$:} $Y = X + \epsilon$, where $X \sim \mathcal{N}(0,1)$ and $\epsilon \sim \mathcal{N}(\mu, 1)$. $X$ is independent of $\epsilon$.

    \item \textbf{Models training:} For any $C>0$, the model class $\Lambda_C$ consists of two models:
    \[\Lambda_{C} = \left\{S^{+}(x,y) = |y-x-C|,  S^{-}(x,y) = |y-x+C|\right\}.\]

    \item As in Section~\ref{sec:simulation}, we fix $n=200$, $\alpha=0.1$. For each $C \in \{0.5, 1, 5\}$ and each $\mu \in \{-1, -0.5, 0, 0.5, 1 \}$, we run $5000$ independent trials of the simulation. The results are displayed below. Since \texttt{YK-adjust} always outputs prediction sets that are much larger than other methods, and we are comparing the performance difference between \texttt{ModSel-CP} and \texttt{ModSel-CP-LOO} under different settings, we omit displaying \texttt{YK-adjust} in Figure~\ref{fig:two_model_sim} to make the performance distinctions more prominent. The full results are reported in the table summary in Table~\ref{tab:two_model_sim} with standard errors in parentheses.
\end{itemize}

\begin{figure}
    \centering
    \includegraphics[width=\linewidth]{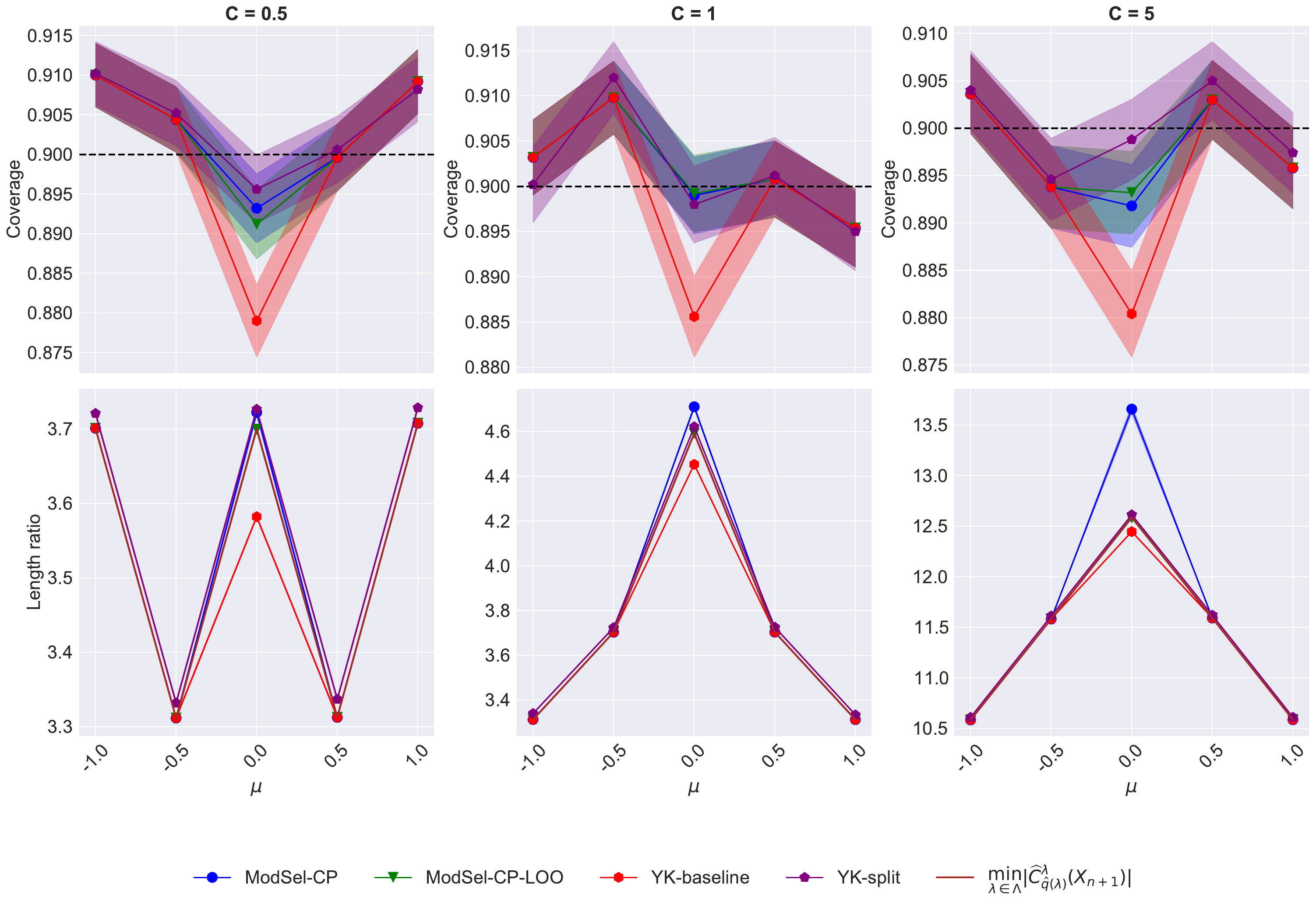}
    \caption{Results of additional simulation in Appendix~\ref{sec:ENSvsLOO-additional}.}
    \label{fig:two_model_sim}
\end{figure}

\begin{table}[h]
\caption{Results of simulation in Appendix~\ref{sec:ENSvsLOO-additional}}
    \centering
\begin{tabular}{cccccc}
\multicolumn{6}{c}{\textbf{$C = 0.5$, $n = 200$}}\\
\multicolumn{6}{l}{\textbf{Coverage}} \\
\bottomrule
{$\mu$} & ModSel-CP & ModSel-CP-LOO & YK-baseline & YK-split & YK-adjust \\
\hline
$-1.0$ & 0.910 (0.004) & 0.910 (0.004) & 0.910 (0.004) & 0.910 (0.004) & 0.981 (0.002) \\
$-0.5$ & 0.904 (0.004) & 0.904 (0.004) & 0.904 (0.004) & 0.905 (0.004) & 0.983 (0.002) \\
$0.0$ & 0.893 (0.004) & 0.891 (0.004) & 0.879 (0.005) & 0.896 (0.004) & 0.976 (0.002) \\
$0.5$ & 0.900 (0.004) & 0.900 (0.004) & 0.900 (0.004) & 0.901 (0.004) & 0.980 (0.002) \\
$1.0$ & 0.909 (0.004) & 0.909 (0.004) & 0.909 (0.004) & 0.908 (0.004) & 0.980 (0.002) \\
\multicolumn{6}{l}{\textbf{Width of prediction interval}} \\
\bottomrule
{$\mu$} & ModSel-CP & ModSel-CP-LOO & YK-baseline & YK-split & YK-adjust \\
\hline
$-1.0$ & 3.701 (0.003) & 3.701 (0.003) & 3.701 (0.003) & 3.721 (0.004) & 5.240 (0.006) \\
$-0.5$ & 3.312 (0.003) & 3.312 (0.003) & 3.312 (0.003) & 3.332 (0.004) & 4.729 (0.005) \\
$0.0$ & 3.722 (0.006) & 3.699 (0.004) & 3.582 (0.003) & 3.726 (0.005) & 5.023 (0.005) \\
$0.5$ & 3.312 (0.003) & 3.312 (0.003) & 3.312 (0.003) & 3.337 (0.004) & 4.732 (0.006) \\
$1.0$ & 3.707 (0.003) & 3.707 (0.003) & 3.707 (0.003) & 3.728 (0.004) & 5.250 (0.006) \\
\bottomrule
\end{tabular}
\label{tab:two_model_sim}
\end{table}

\begin{table}[h]
    \centering
\begin{tabular}{cccccc}
\multicolumn{6}{c}{\textbf{$C = 1$, $n = 200$}}\\
\multicolumn{6}{l}{\textbf{Coverage}} \\
\bottomrule
{$\mu$} & ModSel-CP & ModSel-CP-LOO & YK-baseline & YK-split & YK-adjust \\
\hline
$-1.0$ & 0.903 (0.004) & 0.903 (0.004) & 0.903 (0.004) & 0.900 (0.004) & 0.978 (0.002) \\
$-0.5$ & 0.910 (0.004) & 0.910 (0.004) & 0.910 (0.004) & 0.912 (0.004) & 0.985 (0.002) \\
$0.0$ & 0.899 (0.004) & 0.899 (0.004) & 0.886 (0.005) & 0.898 (0.004) & 0.973 (0.002) \\
$0.5$ & 0.901 (0.004) & 0.901 (0.004) & 0.901 (0.004) & 0.901 (0.004) & 0.980 (0.002) \\
$1.0$ & 0.895 (0.004) & 0.895 (0.004) & 0.895 (0.004) & 0.895 (0.004) & 0.980 (0.002) \\
\multicolumn{6}{l}{\textbf{Width of prediction interval}} \\
\bottomrule
{$\mu$} & ModSel-CP & ModSel-CP-LOO & YK-baseline & YK-split & YK-adjust \\
\hline
$-1.0$ & 3.313 (0.003) & 3.313 (0.003) & 3.313 (0.003) & 3.340 (0.004) & 4.736 (0.005) \\
$-0.5$ & 3.703 (0.003) & 3.703 (0.003) & 3.703 (0.003) & 3.724 (0.005) & 5.244 (0.006) \\
$0.0$ & 4.711 (0.011) & 4.592 (0.005) & 4.453 (0.003) & 4.621 (0.005) & 5.964 (0.005) \\
$0.5$ & 3.703 (0.003) & 3.703 (0.003) & 3.703 (0.003) & 3.726 (0.005) & 5.245 (0.006) \\
$1.0$ & 3.312 (0.003) & 3.312 (0.003) & 3.312 (0.003) & 3.334 (0.004) & 4.744 (0.005) \\
\bottomrule
\end{tabular}
\end{table}

\begin{table}[h]
    \centering
\begin{tabular}{cccccc}
\multicolumn{6}{c}{\textbf{$C = 5$, $n = 200$}}\\
\multicolumn{6}{l}{\textbf{Coverage}} \\
\bottomrule
{$\mu$} & ModSel-CP & ModSel-CP-LOO & YK-baseline & YK-split & YK-adjust \\
\hline
$-1.0$ & 0.904 (0.004) & 0.904 (0.004) & 0.904 (0.004) & 0.904 (0.004) & 0.975 (0.002) \\
$-0.5$ & 0.894 (0.004) & 0.894 (0.004) & 0.894 (0.004) & 0.895 (0.004) & 0.980 (0.002) \\
$0.0$ & 0.892 (0.004) & 0.893 (0.004) & 0.880 (0.005) & 0.899 (0.004) & 0.970 (0.002) \\
$0.5$ & 0.903 (0.004) & 0.903 (0.004) & 0.903 (0.004) & 0.905 (0.004) & 0.981 (0.002) \\
$1.0$ & 0.896 (0.004) & 0.896 (0.004) & 0.896 (0.004) & 0.897 (0.004) & 0.981 (0.002) \\
\multicolumn{6}{l}{\textbf{Width of prediction interval}} \\
\bottomrule
{$\mu$} & ModSel-CP & ModSel-CP-LOO & YK-baseline & YK-split & YK-adjust \\
\hline
$-1.0$ & 10.584 (0.003) & 10.584 (0.003) & 10.584 (0.003) & 10.609 (0.005) & 12.197 (0.006) \\
$-0.5$ & 11.582 (0.003) & 11.582 (0.003) & 11.582 (0.003) & 11.612 (0.005) & 13.196 (0.006) \\
$0.0$ & 13.658 (0.047) & 12.581 (0.005) & 12.446 (0.003) & 12.613 (0.005) & 13.964 (0.005) \\
$0.5$ & 11.591 (0.003) & 11.591 (0.003) & 11.591 (0.003) & 11.620 (0.005) & 13.199 (0.006) \\
$1.0$ & 10.586 (0.003) & 10.586 (0.003) & 10.586 (0.003) & 10.610 (0.005) & 12.200 (0.006) \\
\bottomrule
\end{tabular}
\end{table}

\clearpage

\section{Additional simulations}\label{sec:appendxi_addition_sim}
\subsection{Regression: rescaled residual score}\label{sec:appendix_sim_rescaled_residual}
Under the same setting as in Section~\ref{sec:experiment}, we also conducted simulations with a rescaled residual score model class,
\[\left\{\left|\frac{y-x^\top\hat\theta_\lambda}{\hat\sigma_\lambda(x)}\right| : \lambda\in\Lambda\right\},\]
where $\hat\theta_\lambda$ is fitted as above but using only $n_{\textnormal{train}}/2$ many data points, and then use the remaining half of the pretraining data to fit a random forest to data pairs $(X_i,|Y_i - X_i^\top\hat\theta_\lambda|)$ in order to estimate $\hat\sigma_\lambda(x)$.  We train the random forest by applying \texttt{RandomForestRegressor} function in Python's \texttt{sklearn} package, leaving all parameters at their default values. If needed, we break ties in model selection deterministically by selecting the model with the smallest index. 

The results of the simulations are displayed in Figure~\ref{fig:resresidual_n100} for varying $|\Lambda|$ with $n=100$, and Figure~\ref{fig:resresidual_m200} for varying $n$ with $|\Lambda|=200$. 

\begin{figure}
\centering
\includegraphics[scale = 0.3]{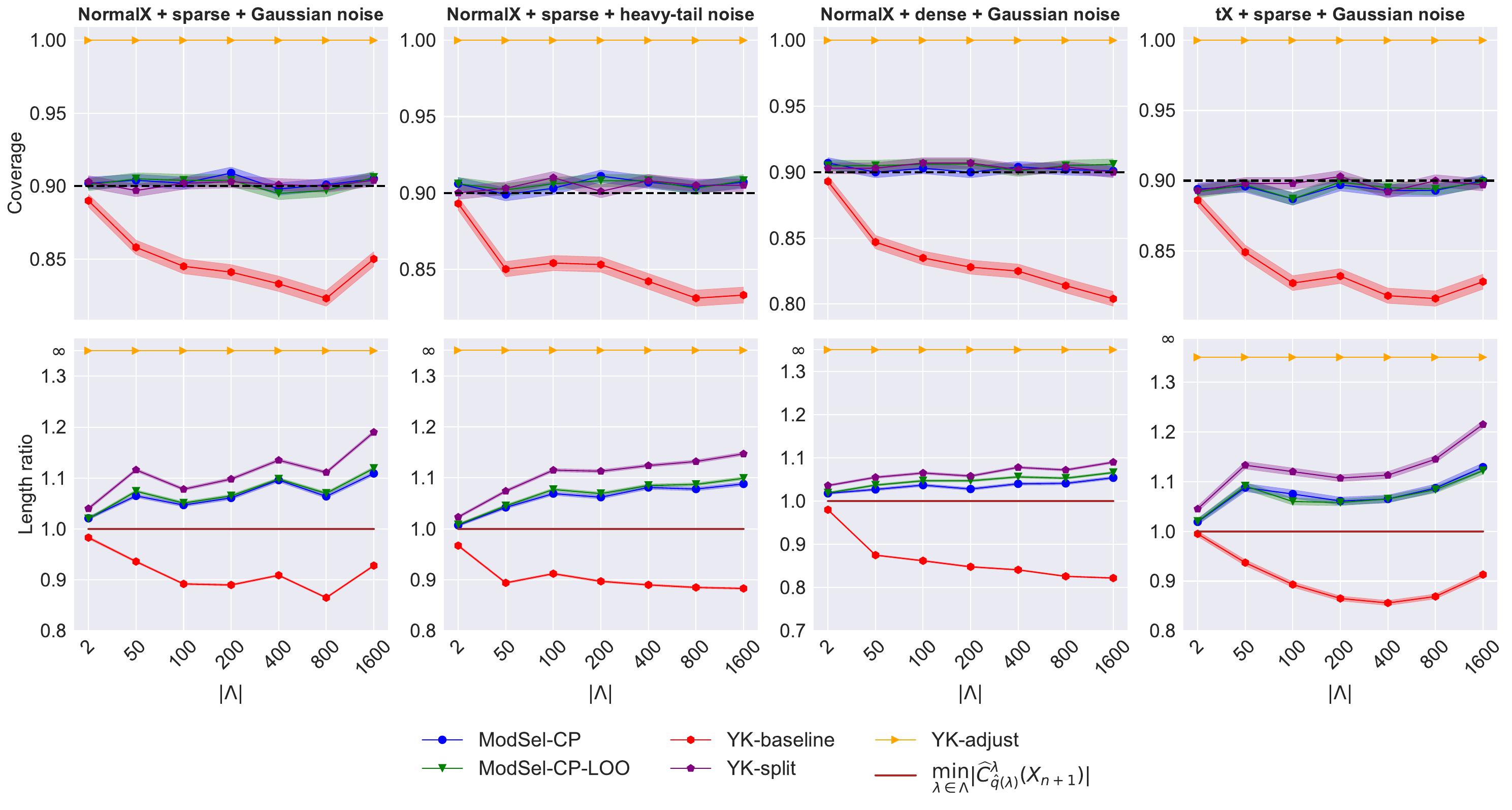}
\caption{Rescaled residual score, $n = 100$, varying $|\Lambda|$.} 
\label{fig:resresidual_n100}

\vspace{0.8cm}
\includegraphics[scale = 0.3]{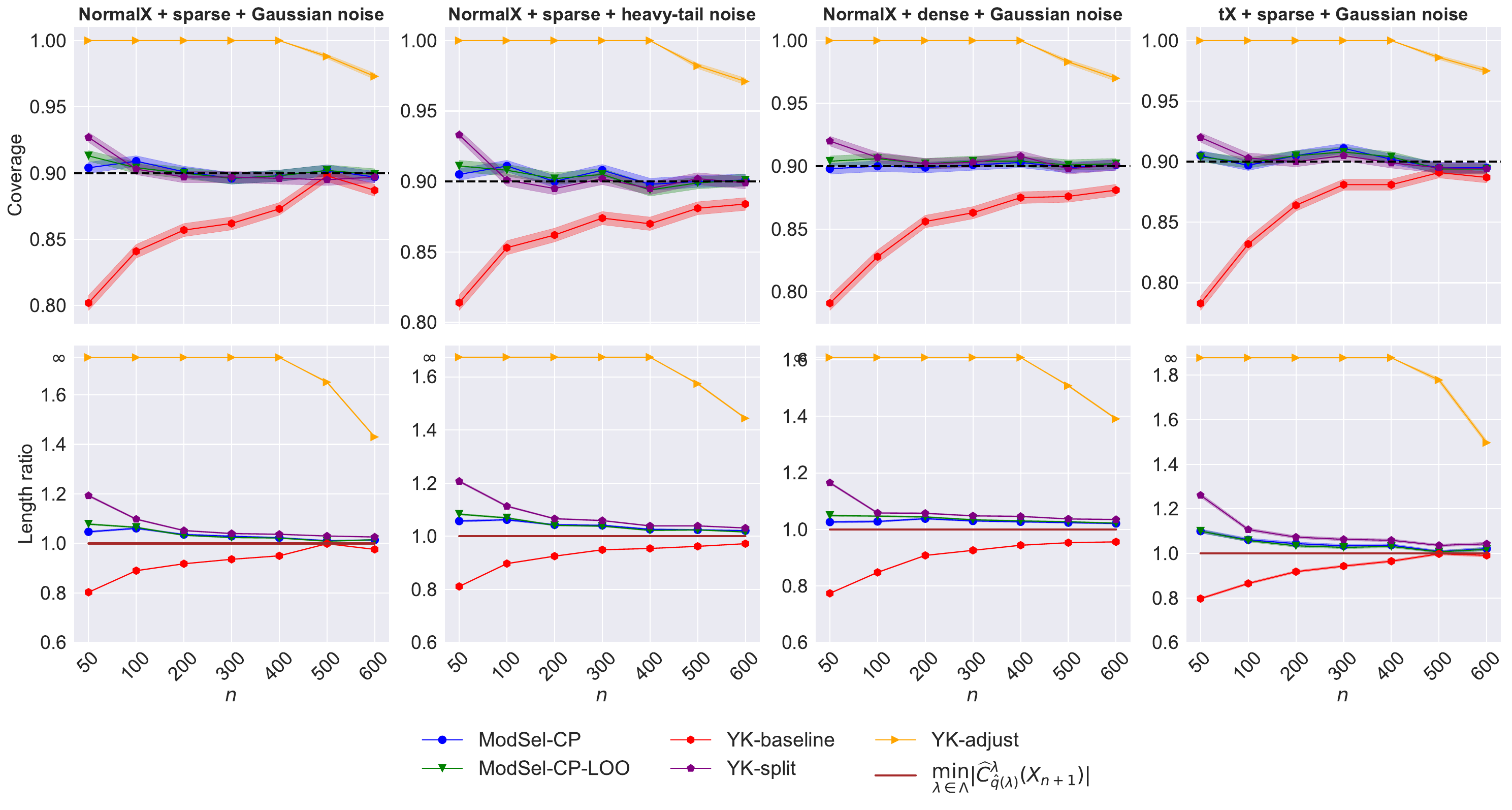}
\caption{Rescaled residual score, $ |\Lambda| = 200$, varying $n$.} 
\label{fig:resresidual_m200}

\end{figure}

\subsection{Classification}\label{sec:appendix_sims_classification}
In this section, we perform additional simulated data experiments to see how our methods behave in the setting of classification (rather than regression, as examined earlier in Section~\ref{sec:experiment}). In this setting, as described earlier in~\eqref{eqn:loss_classification}, we use the conditional probability score, and the loss $\mathcal{L}$ is defined as the average cardinality of the resulting prediction sets. The details of our experiment are as follows. (This simulation setting is a slight modification of the experiment setting in \citet{romano2020classification}'s work.)
\begin{itemize}
    \item \textbf{Distribution of $(X,Y)$:} for feature dimension $d = 50$, the distribution $P$ for data points $(X,Y)$ is defined as follows. First the feature $X\in\mathbb{R}^d$ is generated as  \[X_{1} =1, \textrm{ w.p. } 1/5; X_{1} = -8, \textrm{ otherwise }\] and \[X_{2},\dots,X_{d} \overset{\textnormal{i.i.d.}}{\sim} \mathcal{N}(0,1).\]
    Then, the conditional distribution of $Y\mid X$ follows the multinomial distribution with weights \[w_{j}(X) = \frac{\mathrm{exp}(X^{\top}\beta_{j})}{\sum_{j'=1}^{K}\mathrm{exp}(X^{\top}\beta_{j'})}, \ j = 1,\dots,K.\]
    where $\beta_{j} \sim \mathcal{N}(0, \mathbf{I}_{d})$, and where $K=10$ is the number of classes.

    \item \textbf{Models training (conditional density score class):} our model class $\Lambda$ is given by \[\Lambda =  \left\{S^\lambda(x,y) =  -p_{\lambda}(y\mid x): \lambda \in \Lambda\right\},\] where each $p_\lambda(y\mid x)$ is a pretrained estimate of the conditional probability of $Y=y$ given $X=x$, trained by applying \texttt{ensemble.RandomForestClassifier} function in Python's package \texttt{sklearn}. To construct our set of candidate models $\Lambda$, for each desired size $|\Lambda|$ of this class we first vary the number of base estimators \texttt{n$\_$estimators} to $\frac{|\Lambda|}{2}$ different values varying from $10$ to $100$ as an equispaced sequence during the training, and then set the parameter \texttt{criterion} to be \texttt{`gini'} or \texttt{`entropy'}, leading to $\frac{|\Lambda|}{2} \cdot 2 = |\Lambda|$ many candidate models. All other parameters at their default values.
    \item Pretraining is carried out on a sample of size $n_{\textnormal{train}}=300$, and then the prediction sets are constructed using an independent sample of $n=150$ data points. If needed, we break ties in model selection deterministically by selecting the model with the smallest index.
\end{itemize}
The results of the experiment are displayed with standard error bars in Figure~\ref{fig:classification} and Table~\ref{table:classification}, averaged over 5000 independent trials of the simulation. As expected, we see that \texttt{YK-baseline} has some loss of coverage, but its corrected version \texttt{YK-adjust} is overly conservative. The remaining methods, \texttt{YK-split}, \texttt{ModSel-CP}, and \texttt{ModSel-CP-LOO}, all achieve the desired coverage level, and return prediction sets of similar widths.

\begin{figure}[t]
     \centering
     \includegraphics[scale = 0.22]{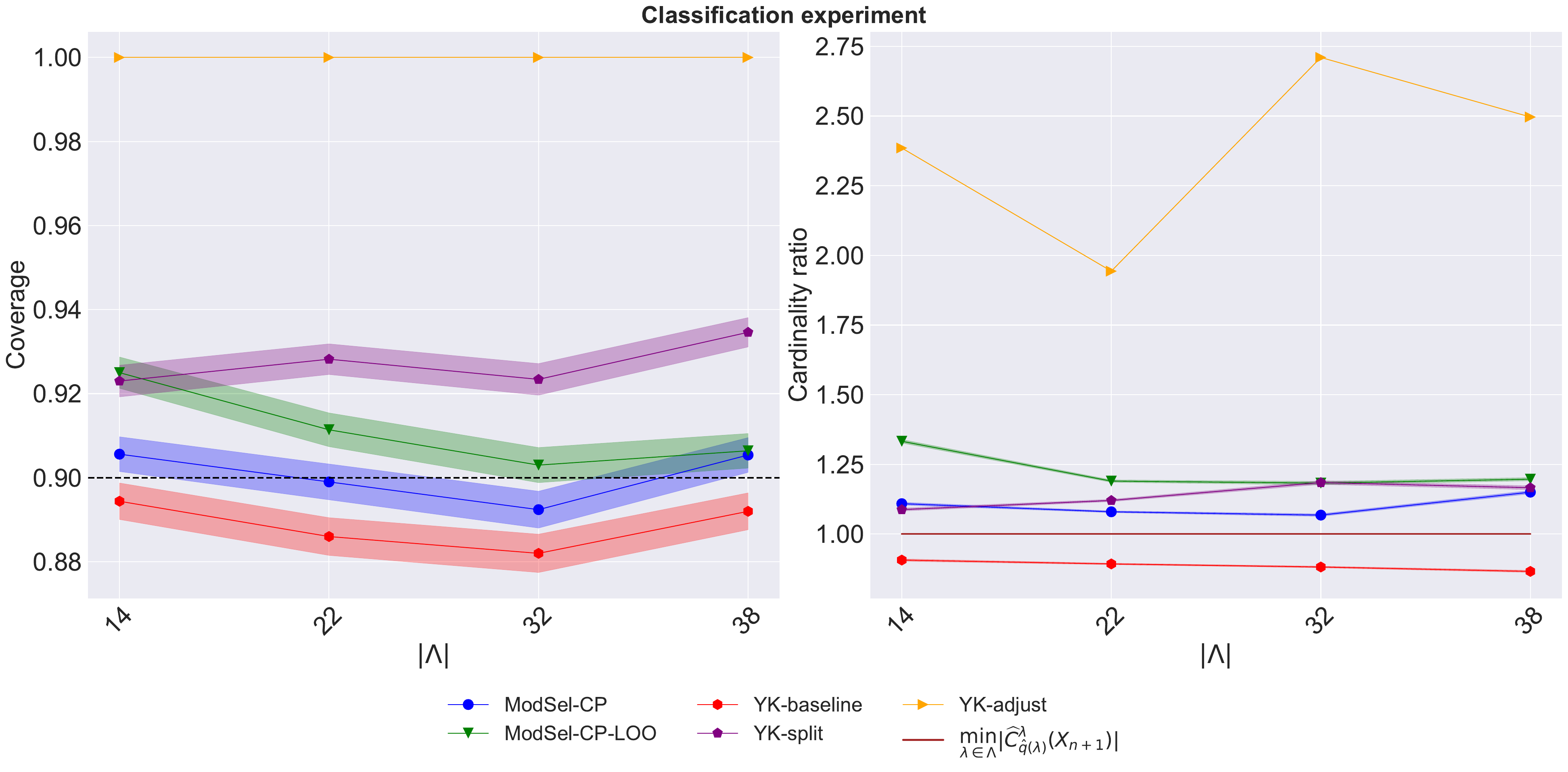}
     \caption{Results for the classification experiment (see Section~\ref{sec:appendix_sims_classification} for details).}
     \label{fig:classification}
 \end{figure}

\begin{table}[ht]
    \caption{Results of the classification experiment}
    \label{table:classification}
    \centering
\begin{tabular}{lccccc}
\multicolumn{6}{c}{\textbf{$n = 150$, varying $|\Lambda|$}}\\
\multicolumn{6}{l}{\textbf{Coverage}} \\
\bottomrule
{$|\Lambda|$} & ModSel-CP & ModSel-CP-LOO & YK-baseline & YK-split & YK-adjust  \\
\hline
$14$ & 0.906 (0.004) & 0.925 (0.004) & 0.894 (0.004) & 0.923 (0.004) & 1.000 (0.000)  \\
$22$ & 0.899 (0.004) & 0.911 (0.004) & 0.886 (0.004) & 0.928 (0.004) & 1.000 (0.000)  \\
$32$ & 0.892 (0.004) & 0.903 (0.004) & 0.882 (0.005) & 0.923 (0.004) & 1.000 (0.000)  \\
$38$ & 0.905 (0.004) & 0.906 (0.004) & 0.892 (0.004) & 0.935 (0.003) & 1.000 (0.000)  \\
\multicolumn{6}{l}{\textbf{Width ratio}} \\
\bottomrule
{$|\Lambda|$} & ModSel-CP & ModSel-CP-LOO & YK-baseline & YK-split & YK-adjust  \\
\hline
$14$ & 1.109 (0.005) & 1.333 (0.006) & 0.907 (0.004) & 1.088 (0.005) & 2.385 (0.000) \\
$22$ & 1.080 (0.003) & 1.190 (0.004) & 0.893 (0.003) & 1.121 (0.004) & 1.943 (0.000)  \\
$32$ & 1.068 (0.005) & 1.183 (0.006) & 0.882 (0.003) & 1.185 (0.007) & 2.710 (0.000)  \\
$38$ & 1.150 (0.006) & 1.197 (0.006) & 0.866 (0.004) & 1.166 (0.007) & 2.496 (0.000)  \\
\bottomrule
\end{tabular}
\end{table}

\clearpage

\section{Table summary of simulation results (regression)}\label{app:table_summary_simulation}

\begin{table}[H]
    \caption{ Residual score,  NormalX + sparse + Gaussian noise.}
    \centering
\begin{tabular}{lccccc}
\multicolumn{6}{c}{\textbf{$n = 100$, varying $|\Lambda|$}}\\
\multicolumn{6}{l}{\textbf{Coverage}} \\
\bottomrule
{$|\Lambda|$} & ModSel-CP & ModSel-CP-LOO & YK-baseline & YK-split & YK-adjust  \\
\hline
$2$ & 0.907 (0.004) & 0.909 (0.004) & 0.899 (0.004) & 0.904 (0.004) & 1.000 (0.000)  \\
$50$ & 0.902 (0.004) & 0.895 (0.004) & 0.859 (0.005) & 0.908 (0.004) & 1.000 (0.000)  \\
$100$ & 0.902 (0.004) & 0.902 (0.004) & 0.849 (0.005) & 0.909 (0.004) & 1.000 (0.000)  \\
$200$ & 0.909 (0.004) & 0.914 (0.004) & 0.865 (0.005) & 0.908 (0.004) & 1.000 (0.000) \\
$400$ & 0.899 (0.004) & 0.898 (0.004) & 0.839 (0.005) & 0.903 (0.004) & 1.000 (0.000) \\
$800$ & 0.899 (0.004) & 0.900 (0.004) & 0.828 (0.005) & 0.896 (0.004) & 1.000 (0.000)  \\
$1600$ & 0.909 (0.004) & 0.908 (0.004) & 0.837 (0.005) & 0.908 (0.004) & 1.000 (0.000)  \\
\multicolumn{6}{l}{\textbf{Width ratio}} \\
\bottomrule
{$|\Lambda|$} & ModSel-CP & ModSel-CP-LOO & YK-baseline & YK-split & YK-adjust  \\
\hline
$2$ & 1.019 (0.002) & 1.017 (0.001) & 0.987 (0.001) & 1.035 (0.002) & $\infty$  \\
$50$ & 1.051 (0.003) & 1.049 (0.002) & 0.931 (0.001) & 1.082 (0.002) & $\infty$  \\
$100$ & 1.065 (0.003) & 1.053 (0.002) & 0.915 (0.001) & 1.088 (0.002) & $\infty$  \\
$200$ & 1.063 (0.003) & 1.060 (0.002) & 0.931 (0.001) & 1.108 (0.002) & $\infty$  \\
$400$ & 1.056 (0.003) & 1.050 (0.002) & 0.888 (0.001) & 1.088 (0.002) & $\infty$ \\
$800$ & 1.051 (0.003) & 1.043 (0.002) & 0.877 (0.001) & 1.079 (0.002) & $\infty$  \\
$1600$ & 1.104 (0.003) & 1.099 (0.002) & 0.908 (0.001) & 1.142 (0.002) & $\infty$  \\
\bottomrule
\end{tabular}
\end{table}

\begin{table}[ht]
    \centering
\begin{tabular}{lccccc}
\multicolumn{6}{c}{\textbf{$|\Lambda| = 200$, varying $n$}}\\
\multicolumn{6}{l}{\textbf{Coverage}} \\
\bottomrule
{$n$} & ModSel-CP & ModSel-CP-LOO & YK-baseline & YK-split & YK-adjust \\
\hline
$50$ & 0.899 (0.004) & 0.898 (0.004) & 0.811 (0.006) & 0.920 (0.004) & 1.000 (0.000)  \\
$100$ & 0.904 (0.004) & 0.902 (0.004) & 0.855 (0.005) & 0.904 (0.004) & 1.000 (0.000)  \\
$200$ & 0.893 (0.004) & 0.894 (0.004) & 0.867 (0.005) & 0.898 (0.004) & 1.000 (0.000)  \\
$300$ & 0.900 (0.004) & 0.899 (0.004) & 0.878 (0.005) & 0.897 (0.004) & 1.000 (0.000)  \\
$400$ & 0.903 (0.004) & 0.906 (0.004) & 0.897 (0.004) & 0.904 (0.004) & 1.000 (0.000)  \\
$500$ & 0.907 (0.004) & 0.907 (0.004) & 0.889 (0.004) & 0.903 (0.004) & 0.982 (0.002)  \\
$600$ & 0.901 (0.004) & 0.900 (0.004) & 0.898 (0.004) & 0.899 (0.004) & 0.985 (0.002)  \\
\multicolumn{6}{l}{\textbf{Width ratio}} \\
\bottomrule
{$n$} & ModSel-CP & ModSel-CP-LOO & YK-baseline & YK-split & YK-adjust \\
\hline
$50$ & 1.068 (0.003) & 1.068 (0.002) & 0.845 (0.001) & 1.194 (0.003) & $\infty$  \\
$100$ & 1.074 (0.003) & 1.072 (0.002) & 0.939 (0.001) & 1.121 (0.002) & $\infty$  \\
$200$ & 1.048 (0.002) & 1.045 (0.001) & 0.964 (0.001) & 1.081 (0.002) & $\infty$  \\
$300$ & 1.038 (0.002) & 1.034 (0.001) & 0.971 (0.001) & 1.060 (0.001) & $\infty$  \\
$400$ & 1.017 (0.001) & 1.018 (0.001) & 0.996 (0.001) & 1.050 (0.001) & $\infty$  \\
$500$ & 1.021 (0.002) & 1.017 (0.001) & 0.969 (0.000) & 1.031 (0.001) & 1.521 (0.001)\\
$600$ & 1.008 (0.001) & 1.006 (0.001) & 0.998 (0.000) & 1.018 (0.001) & 1.453 (0.001)  \\
\bottomrule
\end{tabular}
\end{table}


\begin{table}[ht]
    \caption{  Residual score,  NormalX + sparse + heavy-tail noise.}
    \centering
\begin{tabular}{lccccc}
\multicolumn{6}{c}{\textbf{$n = 100$, varying $|\Lambda|$}}\\
\multicolumn{6}{l}{\textbf{Coverage}} \\
\bottomrule
{$|\Lambda|$} & ModSel-CP & ModSel-CP-LOO & YK-baseline & YK-split & YK-adjust   \\
\hline
$2$ & 0.898 (0.004) & 0.901 (0.004) & 0.886 (0.004) & 0.906 (0.004) & 1.000 (0.000)  \\
$50$ & 0.905 (0.004) & 0.906 (0.004) & 0.859 (0.005) & 0.906 (0.004) & 1.000 (0.000)  \\
$100$ & 0.901 (0.004) & 0.904 (0.004) & 0.851 (0.005) & 0.912 (0.004) & 1.000 (0.000)  \\
$200$ & 0.902 (0.004) & 0.905 (0.004) & 0.846 (0.005) & 0.898 (0.004) & 1.000 (0.000)  \\
$400$ & 0.897 (0.004) & 0.898 (0.004) & 0.834 (0.005) & 0.894 (0.004) & 1.000 (0.000) \\
$800$ & 0.899 (0.004) & 0.902 (0.004) & 0.826 (0.005) & 0.908 (0.004) & 1.000 (0.000)  \\
$1600$ & 0.899 (0.004) & 0.899 (0.004) & 0.843 (0.005) & 0.898 (0.004) & 1.000 (0.000)  \\
\multicolumn{6}{l}{\textbf{Width ratio}} \\
\bottomrule
{$|\Lambda|$} & ModSel-CP & ModSel-CP-LOO & YK-baseline & YK-split & YK-adjust   \\
\hline
$2$ & 1.010 (0.002) & 1.010 (0.001) & 0.976 (0.001) & 1.030 (0.002) & $\infty$  \\
$50$ & 1.040 (0.003) & 1.036 (0.002) & 0.917 (0.001) & 1.069 (0.002) & $\infty$  \\
$100$ & 1.041 (0.003) & 1.038 (0.002) & 0.900 (0.001) & 1.072 (0.002) & $\infty$ \\
$200$ & 1.037 (0.003) & 1.035 (0.002) & 0.885 (0.001) & 1.070 (0.002) & $\infty$ \\
$400$ & 1.073 (0.003) & 1.070 (0.002) & 0.909 (0.001) & 1.107 (0.002) & $\infty$  \\
$800$ & 1.066 (0.003) & 1.059 (0.002) & 0.874 (0.001) & 1.100 (0.002) & $\infty$  \\
$1600$ & 1.110 (0.003) & 1.109 (0.002) & 0.936 (0.001) & 1.164 (0.002) & $\infty$  \\
\bottomrule
\end{tabular}
\end{table}

\begin{table}[ht]
    \centering
\begin{tabular}{lccccc}
\multicolumn{6}{c}{\textbf{$|\Lambda| = 200$, varying $n$}}\\
\multicolumn{6}{l}{\textbf{Coverage}} \\
\bottomrule
{$n$} & ModSel-CP & ModSel-CP-LOO & YK-baseline & YK-split & YK-adjust   \\
\hline
$50$ & 0.905 (0.004) & 0.904 (0.004) & 0.814 (0.006) & 0.926 (0.004) & 1.000 (0.000)  \\
$100$ & 0.906 (0.004) & 0.903 (0.004) & 0.846 (0.005) & 0.900 (0.004) & 1.000 (0.000)  \\
$200$ & 0.900 (0.004) & 0.900 (0.004) & 0.863 (0.005) & 0.904 (0.004) & 1.000 (0.000)  \\
$300$ & 0.898 (0.004) & 0.896 (0.004) & 0.880 (0.005) & 0.899 (0.004) & 1.000 (0.000)  \\
$400$ & 0.905 (0.004) & 0.906 (0.004) & 0.887 (0.004) & 0.906 (0.004) & 1.000 (0.000)  \\
$500$ & 0.896 (0.004) & 0.897 (0.004) & 0.880 (0.005) & 0.901 (0.004) & 0.982 (0.002)  \\
$600$ & 0.911 (0.004) & 0.910 (0.004) & 0.893 (0.004) & 0.907 (0.004) & 0.971 (0.002) \\
\multicolumn{6}{l}{\textbf{Width ratio}} \\
\bottomrule
{$n$} & ModSel-CP & ModSel-CP-LOO & YK-baseline & YK-split & YK-adjust   \\
\hline
$50$ & 1.075 (0.003) & 1.071 (0.002) & 0.842 (0.001) & 1.202 (0.003) & $\infty$  \\
$100$ & 1.065 (0.003) & 1.057 (0.002) & 0.909 (0.001) & 1.093 (0.002) & $\infty$  \\
$200$ & 1.038 (0.002) & 1.036 (0.001) & 0.940 (0.001) & 1.060 (0.001) & $\infty$  \\
$300$ & 1.034 (0.002) & 1.032 (0.001) & 0.982 (0.001) & 1.061 (0.001) & $\infty$  \\
$400$ & 1.027 (0.002) & 1.027 (0.001) & 0.972 (0.000) & 1.048 (0.001) & $\infty$  \\
$500$ & 1.017 (0.002) & 1.017 (0.001) & 0.963 (0.000) & 1.029 (0.001) & 1.565 (0.001)  \\
$600$ & 1.021 (0.002) & 1.016 (0.001) & 0.964 (0.000) & 1.027 (0.001) & 1.410 (0.001)  \\
\bottomrule
\end{tabular}
\end{table}

\begin{table}[ht]
    \caption{  Residual score,  NormalX + dense + Gaussian noise.}
    \centering
\begin{tabular}{lccccc}
\multicolumn{6}{c}{\textbf{$n = 100$, varying $|\Lambda|$}}\\
\multicolumn{6}{l}{\textbf{Coverage}} \\
\bottomrule
{$|\Lambda|$} & ModSel-CP & ModSel-CP-LOO & YK-baseline & YK-split & YK-adjust   \\
\hline
$2$ & 0.896 (0.004) & 0.896 (0.004) & 0.883 (0.005) & 0.894 (0.004) & 1.000 (0.000)  \\
$50$ & 0.901 (0.004) & 0.900 (0.004) & 0.850 (0.005) & 0.903 (0.004) & 1.000 (0.000) \\
$100$ & 0.903 (0.004) & 0.901 (0.004) & 0.845 (0.005) & 0.898 (0.004) & 1.000 (0.000)  \\
$200$ & 0.904 (0.004) & 0.896 (0.004) & 0.829 (0.005) & 0.902 (0.004) & 1.000 (0.000)  \\
$400$ & 0.905 (0.004) & 0.901 (0.004) & 0.833 (0.005) & 0.905 (0.004) & 1.000 (0.000)  \\
$800$ & 0.898 (0.004) & 0.898 (0.004) & 0.818 (0.005) & 0.893 (0.004) & 1.000 (0.000)  \\
$1600$ & 0.899 (0.004) & 0.898 (0.004) & 0.815 (0.005) & 0.899 (0.004) & 1.000 (0.000)  \\
\multicolumn{6}{l}{\textbf{Width ratio}} \\
\bottomrule
{$|\Lambda|$} & ModSel-CP & ModSel-CP-LOO & YK-baseline & YK-split & YK-adjust   \\
\hline
$2$ & 1.012 (0.002) & 1.009 (0.001) & 0.974 (0.001) & 1.023 (0.002) & $\infty$  \\
$50$ & 1.021 (0.002) & 1.021 (0.002) & 0.884 (0.001) & 1.038 (0.002) & $\infty$ \\
$100$ & 1.019 (0.003) & 1.016 (0.002) & 0.864 (0.001) & 1.037 (0.002) & $\infty$  \\
$200$ & 1.025 (0.003) & 1.022 (0.002) & 0.856 (0.001) & 1.044 (0.002) & $\infty$  \\
$400$ & 1.016 (0.003) & 1.018 (0.002) & 0.843 (0.001) & 1.036 (0.002) & $\infty$  \\
$800$ & 1.022 (0.003) & 1.020 (0.001) & 0.839 (0.001) & 1.042 (0.002) & $\infty$  \\
$1600$ & 1.025 (0.003) & 1.021 (0.002) & 0.822 (0.001) & 1.048 (0.002) & $\infty$  \\
\bottomrule
\end{tabular}
\end{table}

\begin{table}[ht]
    \centering
\begin{tabular}{lccccc}
\multicolumn{6}{c}{\textbf{$|\Lambda| = 200$, varying $n$}}\\
\multicolumn{6}{l}{\textbf{Coverage}} \\
\bottomrule
{$n$} & ModSel-CP & ModSel-CP-LOO & YK-baseline & YK-split & YK-adjust   \\
\hline
$50$ & 0.897 (0.004) & 0.904 (0.004) & 0.800 (0.006) & 0.919 (0.004) & 1.000 (0.000) \\
$100$ & 0.894 (0.004) & 0.895 (0.004) & 0.823 (0.005) & 0.900 (0.004) & 1.000 (0.000) \\
$200$ & 0.895 (0.004) & 0.895 (0.004) & 0.851 (0.005) & 0.900 (0.004) & 1.000 (0.000)  \\
$300$ & 0.905 (0.004) & 0.904 (0.004) & 0.866 (0.005) & 0.897 (0.004) & 1.000 (0.000)  \\
$400$ & 0.904 (0.004) & 0.902 (0.004) & 0.872 (0.005) & 0.903 (0.004) & 1.000 (0.000)  \\
$500$ & 0.903 (0.004) & 0.904 (0.004) & 0.875 (0.005) & 0.900 (0.004) & 0.982 (0.002)  \\
$600$ & 0.904 (0.004) & 0.901 (0.004) & 0.879 (0.005) & 0.908 (0.004) & 0.975 (0.002)  \\
\multicolumn{6}{l}{\textbf{Width ratio}} \\
\bottomrule
{$n$} & ModSel-CP & ModSel-CP-LOO & YK-baseline & YK-split & YK-adjust   \\
\hline
$50$ & 1.021 (0.003) & 1.025 (0.002) & 0.801 (0.001) & 1.138 (0.003) & $\infty$  \\
$100$ & 1.015 (0.003) & 1.015 (0.002) & 0.849 (0.001) & 1.035 (0.002) & $\infty$  \\
$200$ & 1.016 (0.002) & 1.016 (0.001) & 0.899 (0.001) & 1.028 (0.001) & $\infty$ \\
$300$ & 1.019 (0.002) & 1.015 (0.001) & 0.921 (0.001) & 1.023 (0.001) & $\infty$  \\
$400$ & 1.021 (0.002) & 1.017 (0.001) & 0.937 (0.000) & 1.024 (0.001) & $\infty$  \\
$500$ & 1.028 (0.002) & 1.023 (0.001) & 0.949 (0.000) & 1.030 (0.001) & 1.467 (0.001) \\
$600$ & 1.023 (0.002) & 1.021 (0.001) & 0.955 (0.000) & 1.027 (0.001) & 1.360 (0.001)  \\
\bottomrule
\end{tabular}
\end{table}

\begin{table}[ht]
    \caption{  Residual score,  tX + sparse + Gaussian noise.}
    \centering
\begin{tabular}{lccccc}
\multicolumn{6}{c}{\textbf{$n = 100$, varying $|\Lambda|$}}\\
\multicolumn{6}{l}{\textbf{Coverage}} \\
\bottomrule
{$|\Lambda|$} & ModSel-CP & ModSel-CP-LOO & YK-baseline & YK-split & YK-adjust   \\
\hline
$2$ & 0.904 (0.004) & 0.897 (0.004) & 0.889 (0.004) & 0.902 (0.004) & 1.000 (0.000)  \\
$50$ & 0.897 (0.004) & 0.897 (0.004) & 0.848 (0.005) & 0.901 (0.004) & 1.000 (0.000)  \\
$100$ & 0.903 (0.004) & 0.905 (0.004) & 0.844 (0.005) & 0.904 (0.004) & 1.000 (0.000)  \\
$200$ & 0.903 (0.004) & 0.906 (0.004) & 0.847 (0.005) & 0.907 (0.004) & 1.000 (0.000)  \\
$400$ & 0.906 (0.004) & 0.903 (0.004) & 0.847 (0.005) & 0.902 (0.004) & 1.000 (0.000) \\
$800$ & 0.900 (0.004) & 0.897 (0.004) & 0.829 (0.005) & 0.912 (0.004) & 1.000 (0.000)  \\
$1600$ & 0.894 (0.004) & 0.899 (0.004) & 0.825 (0.005) & 0.901 (0.004) & 1.000 (0.000)  \\
\multicolumn{6}{l}{\textbf{Width ratio}} \\
\bottomrule
{$|\Lambda|$} & ModSel-CP & ModSel-CP-LOO & YK-baseline & YK-split & YK-adjust   \\
\hline
$2$ & 1.006 (0.003) & 1.017 (0.002) & 0.961 (0.002) & 1.050 (0.003) & $\infty$  \\
$50$ & 1.007 (0.003) & 1.052 (0.003) & 0.863 (0.001) & 1.101 (0.003) & $\infty$  \\
$100$ & 1.006 (0.004) & 1.058 (0.003) & 0.842 (0.001) & 1.104 (0.003) & $\infty$  \\
$200$ & 1.051 (0.004) & 1.115 (0.003) & 0.878 (0.001) & 1.176 (0.004) & $\infty$  \\
$400$ & 1.091 (0.004) & 1.150 (0.003) & 0.900 (0.001) & 1.233 (0.004) & $\infty$  \\
$800$ & 1.059 (0.004) & 1.122 (0.003) & 0.842 (0.001) & 1.196 (0.004) & $\infty$  \\
$1600$ & 1.061 (0.004) & 1.129 (0.003) & 0.851 (0.001) & 1.205 (0.004) & $\infty$  \\
\bottomrule
\end{tabular}
\end{table}
\begin{table}[ht]
    \centering
\begin{tabular}{lccccc}
\multicolumn{6}{c}{\textbf{$|\Lambda| = 200$, varying $n$}}\\
\multicolumn{6}{l}{\textbf{Coverage}} \\
\bottomrule
{$n$} & ModSel-CP & ModSel-CP-LOO & YK-baseline & YK-split & YK-adjust   \\
\hline
$50$ & 0.904 (0.004) & 0.911 (0.004) & 0.806 (0.006) & 0.925 (0.004) & 1.000 (0.000)  \\
$100$ & 0.903 (0.004) & 0.901 (0.004) & 0.849 (0.005) & 0.907 (0.004) & 1.000 (0.000)  \\
$200$ & 0.901 (0.004) & 0.901 (0.004) & 0.866 (0.005) & 0.902 (0.004) & 1.000 (0.000)  \\
$300$ & 0.900 (0.004) & 0.897 (0.004) & 0.873 (0.005) & 0.898 (0.004) & 1.000 (0.000)  \\
$400$ & 0.901 (0.004) & 0.903 (0.004) & 0.881 (0.005) & 0.902 (0.004) & 1.000 (0.000)  \\
$500$ & 0.897 (0.004) & 0.894 (0.004) & 0.875 (0.005) & 0.898 (0.004) & 0.982 (0.002)  \\
$600$ & 0.897 (0.004) & 0.895 (0.004) & 0.880 (0.005) & 0.901 (0.004) & 0.970 (0.002)  \\
\multicolumn{6}{l}{\textbf{Width ratio}} \\
\bottomrule
{$n$} & ModSel-CP & ModSel-CP-LOO & YK-baseline & YK-split & YK-adjust   \\
\hline
$50$ & 0.970 (0.004) & 1.069 (0.003) & 0.736 (0.001) & 1.316 (0.007) & $\infty$  \\
$100$ & 1.084 (0.004) & 1.139 (0.003) & 0.911 (0.001) & 1.216 (0.004) & $\infty$  \\
$200$ & 1.042 (0.004) & 1.068 (0.002) & 0.927 (0.001) & 1.117 (0.002) & $\infty$  \\
$300$ & 1.015 (0.003) & 1.040 (0.002) & 0.925 (0.001) & 1.075 (0.002) & $\infty$  \\
$400$ & 1.029 (0.003) & 1.048 (0.002) & 0.961 (0.001) & 1.079 (0.002) & $\infty$  \\
$500$ & 1.011 (0.003) & 1.026 (0.002) & 0.942 (0.001) & 1.044 (0.001) & 2.206 (0.004) \\
$600$ & 1.006 (0.002) & 1.023 (0.002) & 0.943 (0.001) & 1.039 (0.001) & 1.798 (0.002) \\
\bottomrule
\end{tabular}
\end{table}

\begin{table}[ht]
\caption{  Rescaled residual score,  NormalX + sparse + Gaussian noise.}
\centering
    \centering
\begin{tabular}{lccccc}
\multicolumn{6}{c}{\textbf{$n = 100$, varying $|\Lambda|$}}\\
\multicolumn{6}{l}{\textbf{Coverage}} \\
\bottomrule
{$|\Lambda|$} & ModSel-CP & ModSel-CP-LOO & YK-baseline & YK-split & YK-adjust   \\
\hline
$2$ & 0.902 (0.004) & 0.901 (0.004) & 0.890 (0.004) & 0.903 (0.004) & 1.000 (0.000)  \\
$50$ & 0.904 (0.004) & 0.905 (0.004) & 0.858 (0.005) & 0.897 (0.004) & 1.000 (0.000)  \\
$100$ & 0.902 (0.004) & 0.904 (0.004) & 0.845 (0.005) & 0.902 (0.004) & 1.000 (0.000)  \\
$200$ & 0.909 (0.004) & 0.904 (0.004) & 0.841 (0.005) & 0.903 (0.004) & 1.000 (0.000)  \\
$400$ & 0.898 (0.004) & 0.895 (0.004) & 0.833 (0.005) & 0.901 (0.004) & 1.000 (0.000)  \\
$800$ & 0.901 (0.004) & 0.897 (0.004) & 0.823 (0.005) & 0.900 (0.004) & 1.000 (0.000)  \\
$1600$ & 0.905 (0.004) & 0.906 (0.004) & 0.850 (0.005) & 0.904 (0.004) & 1.000 (0.000)  \\
\multicolumn{6}{l}{\textbf{Width ratio}} \\
\bottomrule
{$|\Lambda|$} & ModSel-CP & ModSel-CP-LOO & YK-baseline & YK-split & YK-adjust   \\
\hline
$2$ & 1.021 (0.003) & 1.020 (0.003) & 0.983 (0.003) & 1.040 (0.003) & $\infty$  \\
$50$ & 1.065 (0.003) & 1.074 (0.003) & 0.936 (0.002) & 1.116 (0.003) & $\infty$  \\
$100$ & 1.047 (0.003) & 1.051 (0.003) & 0.892 (0.002) & 1.078 (0.003) & $\infty$  \\
$200$ & 1.061 (0.004) & 1.065 (0.003) & 0.890 (0.002) & 1.098 (0.003) & $\infty$  \\
$400$ & 1.096 (0.004) & 1.098 (0.003) & 0.909 (0.002) & 1.135 (0.003) & $\infty$  \\
$800$ & 1.064 (0.004) & 1.070 (0.003) & 0.865 (0.002) & 1.111 (0.003) & $\infty$  \\
$1600$ & 1.109 (0.004) & 1.119 (0.003) & 0.928 (0.002) & 1.190 (0.004) & $\infty$  \\

\bottomrule
\end{tabular}
\end{table}

\begin{table}[ht]
\centering
    \centering
\begin{tabular}{lccccc}
\multicolumn{6}{c}{\textbf{$|\Lambda| = 200$, varying $n$}}\\
\multicolumn{6}{l}{\textbf{Coverage}} \\
\bottomrule
{$n$} & ModSel-CP & ModSel-CP-LOO & YK-baseline & YK-split & YK-adjust   \\
\hline
$50$ & 0.904 (0.004) & 0.913 (0.004) & 0.802 (0.006) & 0.927 (0.004) & 1.000 (0.000)  \\
$100$ & 0.909 (0.004) & 0.904 (0.004) & 0.841 (0.005) & 0.903 (0.004) & 1.000 (0.000)  \\
$200$ & 0.901 (0.004) & 0.900 (0.004) & 0.857 (0.005) & 0.897 (0.004) & 1.000 (0.000)  \\
$300$ & 0.896 (0.004) & 0.896 (0.004) & 0.862 (0.005) & 0.897 (0.004) & 1.000 (0.000)  \\
$400$ & 0.898 (0.004) & 0.898 (0.004) & 0.873 (0.005) & 0.896 (0.004) & 1.000 (0.000)  \\
$500$ & 0.902 (0.004) & 0.902 (0.004) & 0.898 (0.004) & 0.895 (0.004) & 0.988 (0.002)  \\
$600$ & 0.897 (0.004) & 0.899 (0.004) & 0.887 (0.004) & 0.897 (0.004) & 0.973 (0.002)  \\
\multicolumn{6}{l}{\textbf{Width ratio}} \\
\bottomrule
{$n$} & ModSel-CP & ModSel-CP-LOO & YK-baseline & YK-split & YK-adjust   \\
\hline
$50$ & 1.047 (0.004) & 1.078 (0.003) & 0.803 (0.002) & 1.193 (0.004) & $\infty$  \\
$100$ & 1.061 (0.004) & 1.065 (0.003) & 0.890 (0.002) & 1.098 (0.003) & $\infty$  \\
$200$ & 1.036 (0.003) & 1.033 (0.003) & 0.918 (0.002) & 1.052 (0.003) & $\infty$  \\
$300$ & 1.029 (0.003) & 1.024 (0.003) & 0.936 (0.002) & 1.040 (0.002) & $\infty$  \\
$400$ & 1.023 (0.003) & 1.022 (0.003) & 0.950 (0.002) & 1.037 (0.002) & $\infty$  \\
$500$ & 1.011 (0.002) & 1.009 (0.002) & 0.999 (0.002) & 1.030 (0.002) & 1.651 (0.004)  \\
$600$ & 1.014 (0.002) & 1.015 (0.002) & 0.976 (0.002) & 1.026 (0.002) & 1.430 (0.003)  \\
\bottomrule
\end{tabular}
\end{table}

\begin{table}[ht]
\caption{  Rescaled residual score, NormalX + sparse + heavy-tail noise.}
\centering
\begin{tabular}{lccccc}
\multicolumn{6}{c}{\textbf{$n = 100$, varying $|\Lambda|$}}\\
\multicolumn{6}{l}{\textbf{Coverage}} \\
\bottomrule
{$|\Lambda|$} & ModSel-CP & ModSel-CP-LOO & YK-baseline & YK-split & YK-adjust   \\
\hline
$2$ & 0.906 (0.004) & 0.906 (0.004) & 0.893 (0.004) & 0.900 (0.004) & 1.000 (0.000)  \\
$50$ & 0.899 (0.004) & 0.902 (0.004) & 0.850 (0.005) & 0.903 (0.004) & 1.000 (0.000)  \\
$100$ & 0.903 (0.004) & 0.906 (0.004) & 0.854 (0.005) & 0.910 (0.004) & 1.000 (0.000)  \\
$200$ & 0.911 (0.004) & 0.908 (0.004) & 0.853 (0.005) & 0.901 (0.004) & 1.000 (0.000)  \\
$400$ & 0.907 (0.004) & 0.908 (0.004) & 0.842 (0.005) & 0.908 (0.004) & 1.000 (0.000)  \\
$800$ & 0.904 (0.004) & 0.903 (0.004) & 0.831 (0.005) & 0.905 (0.004) & 1.000 (0.000)  \\
$1600$ & 0.907 (0.004) & 0.908 (0.004) & 0.833 (0.005) & 0.905 (0.004) & 1.000 (0.000)  \\
\multicolumn{6}{l}{\textbf{Width ratio}} \\
\bottomrule
{$|\Lambda|$} & ModSel-CP & ModSel-CP-LOO & YK-baseline & YK-split & YK-adjust   \\
\hline
$2$ & 1.007 (0.003) & 1.008 (0.003) & 0.967 (0.002) & 1.023 (0.003) & $\infty$  \\
$50$ & 1.042 (0.003) & 1.045 (0.003) & 0.894 (0.002) & 1.074 (0.003) & $\infty$ \\
$100$ & 1.069 (0.003) & 1.077 (0.003) & 0.912 (0.002) & 1.115 (0.003) & $\infty$  \\
$200$ & 1.062 (0.004) & 1.069 (0.003) & 0.897 (0.002) & 1.113 (0.003) & $\infty$  \\
$400$ & 1.081 (0.004) & 1.085 (0.003) & 0.890 (0.002) & 1.124 (0.003) & $\infty$  \\
$800$ & 1.078 (0.004) & 1.087 (0.003) & 0.885 (0.002) & 1.132 (0.003) & $\infty$  \\
$1600$ & 1.088 (0.004) & 1.099 (0.003) & 0.883 (0.002) & 1.147 (0.003) & $\infty$  \\

\bottomrule
\end{tabular}
\end{table}

\begin{table}[ht]
\centering
\begin{tabular}{lccccc}
\multicolumn{6}{c}{\textbf{$|\Lambda| = 200$, varying $n$}}\\
\multicolumn{6}{l}{\textbf{Coverage}} \\
\bottomrule
{$n$} & ModSel-CP & ModSel-CP-LOO & YK-baseline & YK-split & YK-adjust   \\
\hline
$50$ & 0.905 (0.004) & 0.911 (0.004) & 0.814 (0.006) & 0.933 (0.004) & 1.000 (0.000)  \\
$100$ & 0.911 (0.004) & 0.908 (0.004) & 0.853 (0.005) & 0.901 (0.004) & 1.000 (0.000)  \\
$200$ & 0.900 (0.004) & 0.902 (0.004) & 0.862 (0.005) & 0.895 (0.004) & 1.000 (0.000)  \\
$300$ & 0.908 (0.004) & 0.905 (0.004) & 0.874 (0.005) & 0.902 (0.004) & 1.000 (0.000)\\
$400$ & 0.898 (0.004) & 0.894 (0.004) & 0.870 (0.005) & 0.895 (0.004) & 1.000 (0.000)  \\
$500$ & 0.900 (0.004) & 0.899 (0.004) & 0.881 (0.005) & 0.902 (0.004) & 0.982 (0.002)  \\
$600$ & 0.901 (0.004) & 0.901 (0.004) & 0.884 (0.005) & 0.899 (0.004) & 0.971 (0.002)  \\
\multicolumn{6}{l}{\textbf{Width ratio}} \\
\bottomrule
{$n$} & ModSel-CP & ModSel-CP-LOO & YK-baseline & YK-split & YK-adjust   \\
\hline
$50$ & 1.057 (0.004) & 1.083 (0.003) & 0.811 (0.002) & 1.207 (0.004) & $\infty$  \\
$100$ & 1.062 (0.004) & 1.069 (0.003) & 0.897 (0.002) & 1.113 (0.003) & $\infty$  \\
$200$ & 1.043 (0.003) & 1.041 (0.003) & 0.925 (0.002) & 1.066 (0.003) & $\infty$  \\
$300$ & 1.041 (0.003) & 1.038 (0.003) & 0.949 (0.002) & 1.059 (0.002) & $\infty$  \\
$400$ & 1.026 (0.003) & 1.022 (0.002) & 0.954 (0.002) & 1.039 (0.002) & $\infty$  \\
$500$ & 1.024 (0.003) & 1.024 (0.002) & 0.962 (0.002) & 1.039 (0.002) & 1.574 (0.003)  \\
$600$ & 1.021 (0.002) & 1.016 (0.002) & 0.972 (0.002) & 1.031 (0.002) & 1.444 (0.003)  \\
\bottomrule
\end{tabular}
\end{table}

\begin{table}[ht]
\caption{ Rescaled residual score, NormalX + dense + Gaussian noise.}
\centering
\begin{tabular}{lccccc}
\multicolumn{6}{c}{\textbf{$n = 100$, varying $|\Lambda|$}}\\
\multicolumn{6}{l}{\textbf{Coverage}} \\
\bottomrule
{$|\Lambda|$} & ModSel-CP & ModSel-CP-LOO & YK-baseline & YK-split & YK-adjust   \\
\hline
$2$ & 0.907 (0.004) & 0.905 (0.004) & 0.893 (0.004) & 0.903 (0.004) & 1.000 (0.000)  \\
$50$ & 0.900 (0.004) & 0.905 (0.004) & 0.847 (0.005) & 0.903 (0.004) & 1.000 (0.000)  \\
$100$ & 0.903 (0.004) & 0.906 (0.004) & 0.835 (0.005) & 0.907 (0.004) & 1.000 (0.000) \\
$200$ & 0.900 (0.004) & 0.906 (0.004) & 0.828 (0.005) & 0.907 (0.004) & 1.000 (0.000) \\
$400$ & 0.904 (0.004) & 0.901 (0.004) & 0.825 (0.005) & 0.902 (0.004) & 1.000 (0.000)  \\
$800$ & 0.902 (0.004) & 0.905 (0.004) & 0.814 (0.006) & 0.904 (0.004) & 1.000 (0.000)  \\
$1600$ & 0.901 (0.004) & 0.906 (0.004) & 0.804 (0.006) & 0.900 (0.004) & 1.000 (0.000)  \\
\multicolumn{6}{l}{\textbf{Width ratio}} \\
\bottomrule
{$|\Lambda|$} & ModSel-CP & ModSel-CP-LOO & YK-baseline & YK-split & YK-adjust   \\
\hline
$2$ & 1.018 (0.003) & 1.019 (0.002) & 0.980 (0.002) & 1.036 (0.003) & $\infty$  \\
$50$ & 1.027 (0.003) & 1.037 (0.003) & 0.875 (0.002) & 1.055 (0.003) & $\infty$  \\
$100$ & 1.037 (0.003) & 1.047 (0.003) & 0.862 (0.002) & 1.065 (0.003) & $\infty$  \\
$200$ & 1.028 (0.003) & 1.047 (0.003) & 0.848 (0.002) & 1.058 (0.003) & $\infty$  \\
$400$ & 1.040 (0.003) & 1.056 (0.003) & 0.841 (0.002) & 1.078 (0.003) & $\infty$ \\
$800$ & 1.041 (0.003) & 1.053 (0.003) & 0.826 (0.002) & 1.072 (0.003) & $\infty$  \\
$1600$ & 1.054 (0.003) & 1.066 (0.003) & 0.822 (0.002) & 1.090 (0.003) & $\infty$  \\

\bottomrule
\end{tabular}
\end{table}

\begin{table}[ht]
\centering
\begin{tabular}{lccccc}
\multicolumn{6}{c}{\textbf{$|\Lambda| = 200$, varying $n$}}\\
\multicolumn{6}{l}{\textbf{Coverage}} \\
\bottomrule
{$n$} & ModSel-CP & ModSel-CP-LOO & YK-baseline & YK-split & YK-adjust   \\
\hline
$50$ & 0.898 (0.004) & 0.904 (0.004) & 0.791 (0.006) & 0.920 (0.004) & 1.000 (0.000)  \\
$100$ & 0.900 (0.004) & 0.906 (0.004) & 0.828 (0.005) & 0.907 (0.004) & 1.000 (0.000) \\
$200$ & 0.899 (0.004) & 0.902 (0.004) & 0.856 (0.005) & 0.902 (0.004) & 1.000 (0.000) \\
$300$ & 0.901 (0.004) & 0.904 (0.004) & 0.863 (0.005) & 0.903 (0.004) & 1.000 (0.000) \\
$400$ & 0.903 (0.004) & 0.904 (0.004) & 0.875 (0.005) & 0.908 (0.004) & 1.000 (0.000) \\
$500$ & 0.899 (0.004) & 0.901 (0.004) & 0.876 (0.005) & 0.898 (0.004) & 0.983 (0.002)  \\
$600$ & 0.901 (0.004) & 0.902 (0.004) & 0.881 (0.005) & 0.901 (0.004) & 0.970 (0.002) \\
\multicolumn{6}{l}{\textbf{Width ratio}} \\
\bottomrule
{$n$} & ModSel-CP & ModSel-CP-LOO & YK-baseline & YK-split & YK-adjust   \\
\hline
$50$ & 1.026 (0.004) & 1.049 (0.003) & 0.774 (0.002) & 1.165 (0.004) & $\infty$  \\
$100$ & 1.028 (0.003) & 1.047 (0.003) & 0.848 (0.002) & 1.058 (0.003) & $\infty$  \\
$200$ & 1.038 (0.003) & 1.044 (0.003) & 0.908 (0.002) & 1.057 (0.003) & $\infty$  \\
$300$ & 1.030 (0.003) & 1.034 (0.002) & 0.926 (0.002) & 1.048 (0.002) & $\infty$  \\
$400$ & 1.027 (0.003) & 1.030 (0.002) & 0.944 (0.002) & 1.046 (0.002) & $\infty$ \\
$500$ & 1.024 (0.003) & 1.027 (0.002) & 0.953 (0.002) & 1.037 (0.002) & 1.508 (0.003)  \\
$600$ & 1.021 (0.003) & 1.022 (0.002) & 0.956 (0.002) & 1.035 (0.002) & 1.391 (0.003) \\
\bottomrule
\end{tabular}
\end{table}
 
\begin{table}[ht]
\caption{ Rescaled residual score,  tX + sparse + Gaussian noise.}
\centering
\begin{tabular}{lccccc}
\multicolumn{6}{c}{\textbf{$n = 100$, varying $|\Lambda|$}}\\
\multicolumn{6}{l}{\textbf{Coverage}} \\
\bottomrule
{$|\Lambda|$} & ModSel-CP & ModSel-CP-LOO & YK-baseline & YK-split & YK-adjust   \\
\hline
$2$ & 0.894 (0.004) & 0.892 (0.004) & 0.886 (0.004) & 0.893 (0.004) & 1.000 (0.000) \\
$50$ & 0.896 (0.004) & 0.897 (0.004) & 0.849 (0.005) & 0.898 (0.004) & 1.000 (0.000)  \\
$100$ & 0.887 (0.004) & 0.887 (0.004) & 0.827 (0.005) & 0.898 (0.004) & 1.000 (0.000) \\
$200$ & 0.897 (0.004) & 0.899 (0.004) & 0.832 (0.005) & 0.903 (0.004) & 1.000 (0.000)  \\
$400$ & 0.893 (0.004) & 0.895 (0.004) & 0.818 (0.005) & 0.892 (0.004) & 1.000 (0.000)  \\
$800$ & 0.893 (0.004) & 0.894 (0.004) & 0.816 (0.005) & 0.900 (0.004) & 1.000 (0.000)\\
$1600$ & 0.900 (0.004) & 0.899 (0.004) & 0.828 (0.005) & 0.897 (0.004) & 1.000 (0.000)  \\
\multicolumn{6}{l}{\textbf{Width ratio}} \\
\bottomrule
{$|\Lambda|$} & ModSel-CP & ModSel-CP-LOO & YK-baseline & YK-split & YK-adjust   \\
\hline
$2$ & 1.019 (0.006) & 1.020 (0.006) & 0.995 (0.006) & 1.045 (0.006) & $\infty$ \\
$50$ & 1.088 (0.008) & 1.091 (0.007) & 0.937 (0.006) & 1.133 (0.008) & $\infty$  \\
$100$ & 1.075 (0.009) & 1.060 (0.007) & 0.893 (0.006) & 1.120 (0.007) & $\infty$  \\
$200$ & 1.061 (0.008) & 1.058 (0.006) & 0.865 (0.005) & 1.107 (0.007) & $\infty$ \\
$400$ & 1.065 (0.008) & 1.065 (0.007) & 0.856 (0.005) & 1.113 (0.007) & $\infty$  \\
$800$ & 1.087 (0.008) & 1.084 (0.007) & 0.869 (0.005) & 1.145 (0.007) & $\infty$ \\
$1600$ & 1.129 (0.008) & 1.122 (0.007) & 0.913 (0.005) & 1.215 (0.008) & $\infty$  \\
\bottomrule
\end{tabular}
\end{table}

\begin{table}[ht]
\centering
\begin{tabular}{lccccc}
\multicolumn{6}{c}{\textbf{$|\Lambda| = 200$, varying $n$}}\\
\multicolumn{6}{l}{\textbf{Coverage}} \\
\bottomrule
{$n$} & ModSel-CP & ModSel-CP-LOO & YK-baseline & YK-split & YK-adjust   \\
\hline
$50$ & 0.905 (0.004) & 0.904 (0.004) & 0.783 (0.006) & 0.920 (0.004) & 1.000 (0.000)  \\
$100$ & 0.897 (0.004) & 0.899 (0.004) & 0.832 (0.005) & 0.903 (0.004) & 1.000 (0.000)  \\
$200$ & 0.905 (0.004) & 0.905 (0.004) & 0.864 (0.005) & 0.900 (0.004) & 1.000 (0.000)  \\
$300$ & 0.911 (0.004) & 0.908 (0.004) & 0.881 (0.005) & 0.905 (0.004) & 1.000 (0.000)  \\
$400$ & 0.902 (0.004) & 0.904 (0.004) & 0.881 (0.005) & 0.899 (0.004) & 1.000 (0.000) \\
$500$ & 0.895 (0.004) & 0.894 (0.004) & 0.891 (0.004) & 0.895 (0.004) & 0.986 (0.002) \\
$600$ & 0.895 (0.004) & 0.894 (0.004) & 0.887 (0.004) & 0.894 (0.004) & 0.975 (0.002)  \\
\multicolumn{6}{l}{\textbf{Width ratio}} \\
\bottomrule
{$n$} & ModSel-CP & ModSel-CP-LOO & YK-baseline & YK-split & YK-adjust   \\
\hline
$50$ & 1.100 (0.009) & 1.100 (0.007) & 0.797 (0.005) & 1.261 (0.009) & $\infty$  \\
$100$ & 1.061 (0.008) & 1.058 (0.006) & 0.865 (0.005) & 1.107 (0.007) & $\infty$  \\
$200$ & 1.044 (0.007) & 1.034 (0.006) & 0.918 (0.005) & 1.073 (0.007) & $\infty$  \\
$300$ & 1.034 (0.007) & 1.027 (0.006) & 0.943 (0.006) & 1.063 (0.007) & $\infty$  \\
$400$ & 1.037 (0.007) & 1.031 (0.006) & 0.965 (0.006) & 1.059 (0.006) & $\infty$  \\
$500$ & 1.008 (0.006) & 1.007 (0.006) & 0.998 (0.005) & 1.036 (0.006) & 1.778 (0.010)  \\
$600$ & 1.021 (0.008) & 1.018 (0.007) & 0.991 (0.007) & 1.043 (0.007) & 1.497 (0.010)  \\
\bottomrule
\end{tabular}
\end{table}

\clearpage

\end{document}